\documentclass[11pt]{article}

\usepackage{fullpage}

\usepackage[ruled]{algorithm2e}

\SetAlFnt{\small}
\SetAlCapFnt{\small}
\SetAlCapNameFnt{\small}
\SetAlCapHSkip{0pt}
\IncMargin{-\parindent}


%
%
\usepackage{latexsym}
\usepackage{amsmath}
\usepackage{amstext}
\usepackage{amsfonts}
\usepackage{amsopn}

\usepackage{ifthen}

\usepackage{enumitem}

%
%
\newtheorem{theorem}{Theorem}[section]
\newtheorem{lemma}[theorem]{Lemma}
\newtheorem{corollary}[theorem]{Corollary}
\newtheorem{proposition}[theorem]{Proposition}
\newtheorem{definition}{Definition}[section]

%
%
\newenvironment{ProofDummyEnv}{}{}

%
%

%
%
%
\newenvironment{proof}[1][none]{\begin{proofby}[#1]{}}{\end{proofby}}

%
%
%
\newenvironment{proofby}[2][none]{\par\noindent{\bf Proof:} #2
\renewenvironment{ProofDummyEnv}{\begin{#1}}{\end{#1}}%
\begin{ProofDummyEnv}}%
{\QED\end{ProofDummyEnv}}

\newcommand{\bg}[1]{\medskip\noindent{\bf #1}}
\newenvironment{proofof}[1]{\bg{Proof of #1 : }}{\QED}

%
%

%
%

\newcommand{\reals}{\mathbb R}

%
%

%
%
\newcommand{\QED}{\nopagebreak\hfill $\Box$}

%
%

%
%



%
%
\newcommand{\prob}[2][]{\text{\bf Pr}\ifthenelse{\not\equal{}{#1}}{_{#1}}{}\!\left[#2\right]}
\newcommand{\expect}[2][]{\text{\bf E}\ifthenelse{\not\equal{}{#1}}{_{#1}}{}\!\left[#2\right]}

%
%

%
%

%
%

%
%
%

%
%

\DeclareMathOperator{\argmax}{argmax}

\def\eps{\epsilon}

\def\om{\omega}
\def\Om{\Omega}

\def\to{\rightarrow}

\def\calm{{\cal M}}

\def\calp{{\cal P}}

\def\E{{\bf E}}

\newtheorem{thm}{Theorem}[section]
\newtheorem{cor}[thm]{Corollary}
\newtheorem{lem}[thm]{Lemma}
\newtheorem{prop}[thm]{Proposition}

\newtheorem{defn}[thm]{Definition}

\newtheorem{example}[thm]{Example}



\newcommand{\mon}{\mbox{component-wise monotone}}
\newcommand{\sli}{\text{monotone strongly loser independent allocation rule}}
\newcommand{\msli}{\text{monotone strongly loser independent allocation rule}}
\newcommand{\alg}{{\cal A}}
\newcommand{\algi}[1][i]{{{\cal A}_{#1}}}
\newcommand{\mech}{{\cal M}}

\newcommand{\crit}{\theta}

\newcommand{\criti}[1][i]{{\crit_{#1}}}

\newcommand{\bid}{b}
\newcommand{\bids}{{\mathbf \bid}}
\newcommand{\bidsmi}{{\mathbf \bid}_{-i}}
\newcommand{\bidi}[1][i]{{\bid_{#1}}}

\newcommand{\val}{v}
\newcommand{\vals}{{\mathbf \val}}
\newcommand{\valsmi}{\vals_{-i}}
\newcommand{\vali}[1][i]{{\val_{#1}}}

\newcommand{\rank}{r}

\newcommand{\strat}{s}
\newcommand{\strats}{{\mathbf \strat}}
\newcommand{\stratsmi}{\strats_{-i}}
\newcommand{\strati}[1][i]{{\strat_{#1}}}

\newcommand{\type}{v}
\newcommand{\types}{{\mathbf \type}}
\newcommand{\typesmi}{\types_{-i}}
\newcommand{\typei}[1][i]{{\type_{#1}}}

\newcommand{\decl}{d}
\newcommand{\decls}{{\mathbf \decl}}
\newcommand{\declsmi}[1][i]{\decls_{-{#1}}}
\newcommand{\decli}[1][i]{{\decl_{#1}}}

\newcommand{\util}{u}

\newcommand{\utili}[1][i]{{\util_{#1}}}

\newcommand{\tdist}{F}
\newcommand{\tdists}{{\mathbf \tdist}}
\newcommand{\tdistsmi}{\tdists_{-i}}
\newcommand{\tdisti}[1][i]{{\tdist_{#1}}}

\newcommand{\sdist}{\om}
\newcommand{\sdists}{{\mathbf \sdist}}
\newcommand{\sdistsmi}{\sdists_{-i}}
\newcommand{\sdisti}[1][i]{{\sdist_{#1}}}


\newcommand{\dist}{F}
\newcommand{\dists}{{\mathbf \dist}}
\newcommand{\distsmi}{\dists_{-i}}

\newcommand{\price}{p}

\newcommand{\pricei}[1][i]{{\price_{#1}}}

\newcommand{\alloc}{x}
\newcommand{\allocs}{{\mathbf \alloc}}
\newcommand{\allocsmi}{\allocs_{-i}}
\newcommand{\alloci}[1][i]{{\alloc_{#1}}}

\newcommand{\balloc}{x}
\newcommand{\ballocs}{{\mathbf \balloc}}

\newcommand{\balloci}[1][i]{{\balloc_{#1}}}





\newcommand{\comment}[1]{}






%
%

\author{Brendan Lucier\thanks{Microsoft Research. \texttt{brlucier@microsoft.com}}
\and 
Allan Borodin\thanks{Department of Computer Science, University of Toronto.  \texttt{bor@cs.toronto.edu}}}
\title{Price of Anarchy for Greedy Auctions}

%

\begin{document}

\maketitle

\begin{abstract}
We consider auctions in which greedy algorithms, paired with first-price or critical-price payment rules, are used to resolve multi-parameter combinatorial allocation problems. We study the price of anarchy for social welfare in such auctions.  We show for a variety of equilibrium concepts, including Bayes-Nash equilibrium and correlated equilibrium, the resulting price of anarchy bound is close to the approximation factor of the underlying greedy algorithm.
\end{abstract}







\section{Introduction}
\label{sec.intro}

 
%
The field of algorithmic mechanism design studies 
systems that depend upon interaction with participants whose behaviour is motivated by their own goals, rather than those of a designer.  Relevant solutions must therefore merge the computational considerations of computer science with the game-theoretic insights of economics.  
The focus of this paper is the multi-parameter domain of combinatorial allocation problems when the goal is to assign $m$ objects to $n$ agents in order to maximize 
the social welfare, subject to arbitrary downward-closed feasibility constraints.  This class includes all combinatorial auction problems that allow single-minded declarations including multi-unit combinatorial auctions, unsplittable flow problems, and many others.

\comment{ 
A canonical problem within this field is to solve an optimization problem in a setting where the input is controlled by selfish agents.  These agents may strategically misreport values to manipulate an algorithm's outcome for their own benefit, rather than that of the overall good. In an auction, for example,  the participants' values for the good(s) being sold is private information unknown to the auctioneer.  Thus, when an auctioneer implements an auction, she must be sensitive to the fact that participants will behave in the interest of obtaining valuable goods at low costs, rather than behaving as the auction designer wishes.  
} 


For the goal of optimizing social welfare, the celebrated Vickrey-Clarke-Groves (VCG) mechanism addresses game-theoretic issues in a strong sense.  In the absense of collusion, it induces full cooperation (ie. truthtelling) as a dominant strategy.  However, the VCG mechanism requires that the underlying welfare-maximization problem be solved exactly.  For all but the simplest settings, this optimality requirement is undesirable: exact maximization may be computationally intractible, it may require an unrealistic amount of communication from the buyers, and the resulting winner determination rules may be difficult to explain to a typical participant. 
%
%
One way to bypass these complexity issues is to design
new, specially-tailored mechanisms for specific assignment problems.  Indeed, there has been significant progress in designing dominant strategy incentive compatible (DSIC) alternatives to the VCG mechanism.  While this venture has been largely successful in settings where agent preferences are single-dimensional \cite{AT-01,BKV-05,LOS-99,MN-08}, 
general settings have proven more difficult.  It has been shown that the approximation ratios achieveable by DSIC mechanisms and their non-incentive compatible counterparts exhibit a large asymptotic gap for some problems \cite{PSS-08,DOB-11,DughmiV11,DV-12}.

Alternatively, one might study classes of ``natural" allocation algorithms, that appear intuitive as auction allocation rules, with the hope that they have desireable incentive properties when implemented as mechanisms.
As it turns out, for many combinatorial allocation problems, conceptually simple deterministic algorithms (e.g. greedy algorithms) meet or approach the best-known approximation factors subject to computational constraints \cite{LOS-99,MN-08,BKV-05,BB-04}.
These natural methods tend to be computationally efficient and easy for bidders to understand, which are desirable properties in auctions.  Unfortunately, such algorithms are not, in general, DSIC \cite{LOS-99,BL-10b}.
Rather than abandoning these methods in favor of other, potentially more complex, mechanisms, we are pursuing an alternative approach.
Namely, rather than striving for dominant strategy truthfulness, 
it may be acceptable for a system to admit strategic manipulation, so long as the designer's objectives are met after such manipulation occurs.  
To this end, we explore the performance of mechanisms at equilibria of bidder behavior, given an appropriate model of beliefs.
Broadly speaking, our motivating question is:
When can an algorithm be implemented as a mechanism that achieves high social welfare at {\it every} equilibrium?\footnote{Dominant strategy truthfulness of an approximation mechanism is conceptually stronger as a solution concept than that of a mechanism that approximates the optimal social welfare at every equilibrium.  However, as noted elsewhere  \cite{CKS-08}, Bayesian Nash equilibrium is not, strictly speaking,
a relaxation of dominant strategy truthfulness.
There exist truthful mechanisms whose approximation ratios are not preserved at all Nash equilibria, such as the famous Vickrey auction.}
And how robust are the resulting mechanisms to 
variations of the equilibrium concept?


We demonstrate that for combinatorial allocation problems, any ``greedy-like'' approximation algorithm can be converted into a mechanism that achieves nearly the same approximation factor 
at every equilibrium of bidder behaviour.  Our analysis is very general, and applies to a range of different equilibrium concepts, including pure and mixed Nash equilibria, Bayes-Nash equilibria, and Bayesian equilibria with correlated types. 
We are thus able to decouple computational issues from incentives issues for this class of algorithms, as one can design a greedy algorithm without considering its economic implications, and then apply a straightforward pricing scheme in order to achieve good performance at equilibrium.

Performance of games at equilibrium has been studied extensively in the algorithmic game theory literature as the \emph{price of anarchy} (\emph{POA}) of a given game
\footnote{For the purpose of this paper, we shall not consider cost minimization problems. We note that the price of anarchy concept was introduced in terms
of cost minimization games but to the best of our knowledge the only price of anarchy results for mechanism induced games apply to maximization problems.
}: the ratio between the optimal outcome and the worst-case outcome at any equilibrium \cite{Papa-01}.
Put into these terms, our goal is to convert an algorithm with approximation factor $c \geq 1$ into a mechanism whose price of anarchy is not much larger than $c$.


This paper is a synthesis and revision of results in 
original conference version of this work \cite{BL-10}
and 
results in the first author's thesis \cite{L-thesis11}. The 
paper is organized as follows. The remainder of this section
outlines our results and relates our work to recent papers in this area. Section \ref{sec.prelim} defines the necessary concepts and applications for
our results. 
Section \ref{sec:strong-loser-indep} introduces the concept of strongly loser-independence (generalizing
the loser-independence concept from \cite{CG-09}) which becomes 
the key property 
of greedy algorithms that we will exploit. Sections \ref{sec:mixedBNE} and
\ref{sec:ex-post} analyze 
(respectively) price of anarchy results for first-price and critical-price 
mechanisms.  
Section \ref{sec:conclusions} concludes with some open problems.

\subsection{Our Results}

The basic question of algorithmic mechanism
design is this: when can computationally efficient algorithms be converted into 
mechanisms that preserve approximation bounds with respect to truthfulness or 
POA results?  
We address the price of
anarchy considerations with respect to social welfare maximization for a broad 
class of allocation problems. 
In the full information and Bayesian setting, 
we study   
the price of anarchy for first price and critical price mechanisms 
derived from greedy algorithms. Roughly speaking (and in contrast to
results regarding approximation and truthful mechanisms) we are 
able to show that 
there is often little or no loss from the approximation ratio of
a greedy algorithm to the corresponding mechanism price of anarchy.

We 
consider one-shot auctions, in which the allocation problem is resolved once.
Following Christodoulou et al. \cite{CKS-08}, we focus our attention 
on the standard (in economics) \emph{incomplete information setting}, where the appropriate equilibrium concept is Bayes-Nash equilibrium. 
That is, we assume that agents' preferences are private, but drawn independently from commonly-known prior distributions, and that players apply strategies at equilibrium given this partial knowledge.
We pose the question:  can a given black-box approximation algorithm be converted into a mechanism that approximately preserves its approximation ratio at every Bayes-Nash Equilibrium?
We show that for a broad class of greedy algorithms, the answer is \emph{yes}. 

\medskip
\textbf{Theorem (informal):} Suppose $\alg$ is a greedy $c$-approximate allocation rule for a combinatorial allocation problem.  Then the auction that uses $\alg$ to choose allocations, and uses a pay-your-bid payment scheme, has a Bayes-Nash Price of Anarchy of at most $c + O(c^2 / e^c)$.
\medskip

We also show that the small (and exponentially decreasing) loss in our price of anarchy bound is necessary, by giving an example (for every $c \geq 2$) where the resulting price of anarchy is at least $c + \Omega(\frac{c}{e^{4c}})$.

We note that the mechanisms we consider are all \emph{prior-free}.  Thus, as in the full-information case, while we assume the existence of type distributions in order to model rational agent behaviour, our mechanism need not be aware of these distributions.
In the special case that each player's type distribution is a point mass, Bayes-Nash equilibrium reduces to standard Nash equilibrium.  Our mechanisms therefore also preserve approximation ratios at every (mixed or pure) Nash equilibrium of the full information game.  Our analysis also extends to the more general class of coarse correlated equilibria.  For the case of pure Nash equilibrium, our price of anarchy bound improves to $c$.  

As is standard, our bounds on the Bayesian price of anarchy will assume that agent types are distributed independently.  However, we show that a weaker bound of $O(c)$ holds when agent types are drawn from an arbitrary distribution over the space of all type profiles.  This result applies to greedy algorithms that are non-adaptive, as described in Section \ref{sec:greedy}.  Thus, even if agent types are arbitrarily correlated, our mechanisms yield performance at equilibrium asymptotically matching that of the underlying allocation algorithm.

A similar bound also applies to mechanisms that use the critical-price payment scheme, which is a natural extension of second-price payments in single-item auctions.  Such a payment scheme charges each bidder the minimum bid at which he would have maintained his allocation.  These bounds require a standard no-overbidding assumption, which is that agents are avoid bidding more than their value for any given subset of items.

\medskip
\textbf{Theorem (informal):} Suppose $\alg$ is a greedy $c$-approximate allocation rule for a combinatorial allocation problem.  Then, under the assumption that agents do not overbid, the auction that uses $\alg$ to choose allocations, and uses a critical-price payment scheme, has a Bayes-Nash Price of Anarchy of at most $c + 1$.
\medskip

We also show that the extra $+1$ term is essentially necessary for large $c$, by giving an example for every $c \geq 2$ where the resulting price of anarchy is at least\footnote{More specifically, for any $\gamma > 1$ there is an example in which a greedy algorithm has approximation factor $\gamma + \frac{1}{\gamma}$ and the resulting price of anarchy is at least $\gamma+1$.} $c + 1 - O(\frac{1}{c})$. 
As with the first-price results, our bounds extend to coarse correlated equilibria, and a bound of $O(c)$ holds if agent valuations can be correlated.  Furthermore, we show that a slight modification to the mechanism allows us to replace the no-overbidding assumption with the (conceptually weaker) assumption that bidders avoid weakly dominated strategies. 

\comment{  
\paragraph{Results}
Our first result is that any monotone greedy $c$-approximate algorithm can be implemented as a mechanism with price of total anarchy at most $c+1$.  
Our mechanism is a black-box reduction from an algorithm for a one-shot auction iteration, and the same mechanism is applied each auction round.  The form of our mechanism is very simple: on each round, it applies a simple modification to the bidders' declarations, then runs the approximation algorithm on the modified declarations and charges critical prices.  This mechanism is quite similar to the mechanism $\mech_{crit}(\alg)$ introduced in section~\ref{sec:ex-post}; the purpose of our modification is to reduce the complexity of the regret minimization problem for the agents.  As in Chapter \ref{sec:ex-post}, our results make use of the property of strong loser-independence, which is satisfied by greedy algorithms.

Our implementation does not depend on the specific algorithms used by the agents to minimize their regret; only that their regret vanishes as the number of rounds increases.  The rate of convergence to our approximation bound will depend on the rate at which the agents' regret vanishes.  

We also show that our mechanism is robust against various perturbations of our equilibrium concept.  First, we demonstrate that our mechanism is resilient to the presence of irrational agents, in the following sense.  If each agent either applies regret-minimizing strategies or makes arbitrary declarations (but never declares more than his true value for a set), then the mechanism attains a $c+1$ approximation to the optimal welfare \emph{obtainable by the regret-minimizing bidders}.  The no-overbidding assumption is necessary (as otherwise a irrational agent could bid arbitrarily highly and prevent any welfare from being obtained) and motivated by viewing irrational players as not understanding how to participate intelligently in the auction and thus likely to bid conservatively.

We also show that our performance bound degrades gracefully if agents can only approximately minimize their regret.  For $\gamma \geq 1$, if each agent is assumed to obtain a $\gamma$-approximation to the average utility of the single best strategy in hindsight, then our mechanism will obtain a $c+\gamma$ approximation to the optimal welfare (in addition to the additive error due to regret that vanishes as the number of rounds grows).  

We then consider the best-response model, in which we focus specifically on the general combinatorial auction problem.  
We present a mechanism that implements an $O(s)$ approximation for cardinality-restricted combinatorial auctions where set allocations have size at most $s$, then extend this to a mechanism that implements an $O(\sqrt{m})$ approximation for general combinatorial auctions.  We attain these approximation ratios with high probability after polynomially many auction rounds (in fact, only a slightly superlinear number of rounds). 

We conjecture that the black-box reduction used to obtain a $c+1$ approximation in the regret-minimization setting also implements an $O(c)$ approximation, on average over sufficiently many rounds, in the model of best-response bidders.  We leave this conjecture as an open problem. 

Our results require a mild game-theoretic assumption, which is that bidders will not apply strategies that are (strictly) dominated by easily-found alternatives.  This is precisely the assumption of algorithmically undominated strategies, as introduced by Babaioff et al \cite{BLP-09}\footnote{As discussed in Section \ref{sec:regret}, this assumption is not without loss of generality, but is intuitively motivated by the observation that the adoption of a dominated strategy corresponds to unnaturally risky bidding behaviour with no benefit.}.  Additionally, the mechanism for best-response bidders applies a technique known in implementation theory as \emph{virtual implementation}, where an alternative social choice rule is applied with vanishingly small probability \cite{J-01}.  We view this not as an introduction of randomness into the algorithm being implemented, but rather as the introduction of a trembling-hand consideration into the solution concept that encourages reasonable behaviour when best-response agents must distinguish between otherwise equally beneficial strategies.
} 

\subsection{Related Work}

The seminal paper in algorithmic game theory and more specifically 
algorithmic mechanism design is that of Nisan and Ronen \cite{NR-99}. The basic issue introduced in \cite{NR-99} is to reconcile the competing demands for revenue and social welfare optimization with the need for computational efficiency in the context of self interested (i.e. selfish) agents. 
The two most studied solution concepts in algorithmic game theory are truthfulness (i.e. incentive compatability) and behavior at (all) equilbria (i.e. the price of anarchy (POA) concept). Initial POA results for games were first introduced to algorithmic game theory in the seminal papers by Papdimitriou \cite{Papa-01} and
Roughgarden and Tardos \cite{RT-00}.  
 Christodoulou et al. \cite{CKS-08} initiated the study of the price of the price of anarchy in the Bayesian setting. 
Whereas the emphasis of
algorithmic mechanism design has been to consider the approximations achieveable by truthful mechanisms, to the best of our knowledge, our conference paper 
\cite{BL-10} was the first to
consider this constructive aspect of mechanism design and price
of anarchy.  

Since the initial conference version of this work 
there has been significant progress on the understanding of the price of anarchy of mechanisms in various auction settings.  Some  examples include the Generalized Second Price auction for sponsored search ads \cite{CKKKLPTardos12}, simultaneous single-item auctions \cite{CKS-08,HKMN11,BR-11,FFGL13}, and multi-unit auctions \cite{MT12,dKMST13}. A framework unifying much of this work was proposed by Syrgkanis and Tardos \cite{SyrgkanisT13}.

Chekuri and Gamzu \cite{CG-09} defined ``loser-independent algorithms'', and in the  
conference version of our paper \cite{BL-10} we argued that the basic property of greedy algorithms that we were
exploiting was a multi-parameter version of loser-independence. 
In the first author's thesis \cite{L-thesis11}, a strengthening of loser-independence, called strong 
loser-independence, was introduced to
simplify the proofs and that will be the basic property of greedy algorithms we will use in this paper. 
Loser-independence is conceptually related to the concept of \emph{smoothness}, which was introduced by
Roughgarden \cite{ROU-09} as a general way to derive price of anarchy results
for one shot and repeated games (without reference to mechanisms 
that derive games).  Loser-independence has been shown to be different 
from this original notion of smoothness \cite{L-thesis11}. 
However, alternative notions of  
smoothness defined by Lucier and Paes Leme \cite{PLucier11} 
and Syrgkanis and Tardos \cite{SyrgkanisT13} can also be used to 
derive results similar to our results.
In particular, Syrgkanis and   
Tardos use their smoothness condition to derive many price of anarchy 
results for allocation mechanisms, including those derived from greedy 
$c$-approximation algorithms. Their result for the (non correlated) 
mixed Bayesian and coarse correlated equilbrium improved
upon our conference results: as in our current paper, they show
that the resulting price of anarchy approaches $c$ with a term
exponentially decreasing in $c$. In particular, they show that 
the price of anarchy is never worse than $c + .58$.
As we will show in Section~\ref{sec:applying-loser-indep}, our application of strong 
loser-independence can be interpreted as a proof of smoothness. 

%

\section{Preliminaries}
\label{sec.prelim}

\subsection{Feasible Allocation Problems}

We consider a setting in which there are $n$ agents and a set $M$ of $m$ objects.  An \emph{allocation} to agent $i$ is a subset
$\alloci \subseteq M$.  A \emph{valuation function} $\val : 2^M \to \reals$ assigns a value to each allocation.  We assume that valuation functions are monotone, meaning $\val(S) \leq \val(T)$ for all $S \subseteq T \subseteq M$, and normalized so that $\val(\emptyset) = 0$.  
A valuation function $\val$ is \emph{single-minded} if there exists a set $S \subseteq M$ and a value $y \geq 0$ such that for all $T \subseteq M, \val(T) = y$ if $S \subseteq T$ and 0 
otherwise.  
A \emph{valuation profile} $\vals$ is a vector of $n$ valuation functions, one for each agent.  In general we will use boldface to represent vectors, subscript $i$ to denote the $i$th component, and subscript $-i$ to denote all components except $i$, so that $\vals = (\vali,\valsmi)$.  An \emph{allocation profile} $\ballocs$ is a vector of $n$ allocations. The goal in our social welfare maximization problems is to choose an allocation for each agent in order to maximize the sum of agent values. 

 A \emph{combinatorial allocation problem} is defined by a set of \emph{feasible allocations}, which is the set of permitted allocation profiles. We further assume in combinatorial allocation problems that this feasibility constraint is \emph{separable}, meaning that if $\allocs$ is feasible then $(\emptyset, \allocsmi)$ is also feasible\footnote{We note that the combinatorial public projects problem 
(CPPP) \cite{PSS-08} is not separable.} for all $i$.  Note that separability is a weaker assumption than the standard downward-closure property of packing problems, which would stipulate that if $\allocs$ is feasible then $(y_i,\allocsmi)$ is also feasible for all $y_i \subseteq \alloci$.
An \emph{allocation rule} $\alg$ assigns to each valuation profile $\vals$ a feasible outcome $\alg(\vals)$; we write $\alg_i(\vals)$ 
for the allocation to agent $i$. 
An allocation rule is {\it \mon} if it satisfies the following property for
every agent $i$: 
$$\mbox{If } \vali(S) < {\tilde \vali}(S), \ 
\vali(T) = {\tilde \vali}(T) \ \forall T \neq S,
\mbox{ and } 
\alg_i(\vali,\valsmi) = S, \mbox{ then } \alg_i({\tilde \vali},\valsmi) = S  $$
We will tend to write $\alg$ for both an allocation rule and an algorithm that implements it.
We will sometimes abuse notation and use $\allocs$ for an allocation rule, rather than a specific 
allocation. 

Each agent $i \in [n]$ has a private valuation function $\typei$, his \emph{type}, which defines the value he attributes to each allocation.
The \emph{social welfare} obtained by allocation profile $\ballocs$, given type profile $\types$, is $SW(\ballocs,\types) = \sum_i \typei(\balloci)$.  We write $SW_{opt}(\types)$ for $\max_{\ballocs}\{SW(\ballocs,\types)\}$
and say that algorithm $\alg$ is a $c$ approximation algorithm 
\footnote{Our convention will be to have approximation ratios $c \geq 1$.} 
if $SW(\alg(\types),\types) \geq \frac{1}{c}SW_{opt}(\types)$ for all 
$\types$.

A type profile $\types$ and an allocation rule $\alg$ for a combinatorial allocation problem define {\em critical values}, $\criti(S,\valsmi)$, 
for any agent $i$ 
and set 
$S \subseteq  M$. The value $\criti(S,\valsmi)$ is the 
minimum amount that agent 
$i$ could bid on set $S$ and still win $S$, assuming the other agents 
have profile $\valsmi$. That is,
$\criti(S,\valsmi) = \inf\{z$ : $x_i(z, \valsmi) = S\}$.
We note that this definition of critical values holds even if it is not the case that increasing one's value for a set necessarily increases the probability of obtaining that set. However, most of the mechanisms we consider in this work do satisfy this monotonicity property, which motivates the terminology of a “critical” price.

\subsection{Mechanisms}

 
A \emph{direct revelation mechanism} $\calm(\alg,P)$ is composed of an allocation rule $\alg$ and a payment rule $P$ that assigns a vector of $n$ payments to each declared valuation profile.  The mechanism proceeds by eliciting a valuation profile $\decls$ from each of the agents, called the \emph{declaration profile}.  It then applies the allocation and payment rules to $\decls$ to obtain an allocation and payment for each agent.  Crucially, we do not assume that $\decls$ is equal to $\types$.  We will write $SW(\decls)$ for $SW(\alg(\decls),\types)$ when the allocation rule and type profile are clear from context.

We will be concerned with two different payment rules, {\em first price}
and {\em critical price}. 
In a first price mechanism, an agent is charged their
declared bid $\decli(S)$ for any allocated set $S$. 
For notational convenience, we
let $\calm_1(\alg)$ denote the mechanism using allocation
rule $\alg$ and the first price payment rule. In the critical
price payment rule, an agent is charged his critical value 
$\criti(S,\declsmi)$ for any allocated set $S$.
We will let $\calm_2(\alg)$ denote the mechanism 
using allocation
rule $\alg$ and the critical price payment rule.


\subsection{Equilibria of One-shot Auctions}
The utility of agent $i$ in mechanism $\calm = (\alg,P)$, given declaration profile $\decls$ and type profile $\types$, is $\util_i^{\typei}(\decls) = \typei(\alg_i(\decls)) - P_i(\decls)$.  We will often omit the dependence on $\typei$ when it is clear from context, and write simply $\utili(\decls)$.
We say that declaration $\decli$ weakly dominates
$\decl'$ if for
all $\declsmi$, $\utili(\decli,\declsmi) \geq \utili(\decli',\declsmi)$, and that there exist at least one $\declsmi$ for which the inequality is strict.

We consider a Bayesian setting in which the true types of the agents are not fixed, but are rather drawn from a known probability distribution $\tdists$ over the set of valuation profiles. We first assume that $\tdists = \tdist_1 \times \dotsc \times \tdist_n$ is the product of independent distributions, where $\tdisti(\typei)$ is the probability that agent $i$ has type $\typei$. (Later we will also consider correlated distributions over type profiles.)
We write $SW_{opt}(\tdists)$ for $\E_{\types \sim \tdists}[SW_{opt}(\types)]$.  


A \emph{bidding strategy} for agent $i$ is a function $\bidi$ that maps a type $\typei$ to a distribution over declarations for agent $i$.
We think of $\bidi(\typei)$ as the (randomized) bidding strategy employed by agent $i$ given that his true type is $\typei$.  We will abuse notation slightly and also write $\bidi(\typei)$ for the random variable representing a declaration chosen from the corresponding distribution. We write $\bids(\types) = \bidi[1](\typei[1]) \times \dotsc \times \bidi[n](\typei[n])$ for the (distribution over) declaration profiles resulting from applying the bid functions in $\bids$ to type profile $\types$.  The strategy profile $\bids$ forms a \emph{(mixed) Bayesian Nash Equilibrium} (BNE) if, for every $i \in [n]$ and every $\typei$ in the support of $\tdisti$, agent $i$ maximizes his expected utility by making a declaration drawn from distribution $\bidi(\typei)$.  That is, for each agent $i$, each possible type $\typei$, and every distribution $\sdisti'$ over declarations,
\[
   \E_{\typesmi \sim \tdistsmi}[\utili(\bids(\types))]
     \geq
   \E_{\typesmi \sim \tdistsmi, \decli \sim \sdisti'}[\utili(\decli, \bidsmi(\typesmi))].
\]


For a mechanism $\calm = (\alg,P)$, we will write 
$SW_{\calm}(\tdists,\bids)$ to mean $\E_{\types \sim \tdists}\left[\sum_i \typei(\algi(\bids(\types)))\right]$, the expected social welfare given type distribution $\tdists$ and strategy profile $\bids$.


The \emph{(mixed) Bayesian price of anarchy} (BPoA) of mechanism $\calm$ is defined as 
\[
 \sup_{\tdists,\bids}\frac{SW_{opt}(\tdists)}{SW_{\calm}(\tdists,\bids)}
\]
where the supremum is over all type distributions $\tdists$ and mixed BNE $\bids$ for $\tdists$.  In other words, the BPoA of $\calm$ is the worst-case ratio between the expected welfare at BNE and the expected optimal welfare.


We can further extend the definition of BNE to  
allow a correlated distribution over type profiles.
The definition for correlated BNE and correlated Bayesian price of anarchy is then the same as the above definitions, where we would no longer assume that $\tdist$
is a product of independent distributions.

Returning to the case in which $\tdist$ is a product distribution, a number of special cases deserve mention.  When all type distributions are point masses (i.e., each agent's type is determined), a BNE is referred to as a \emph{(mixed) Nash Equilibrium} (NE).  The Price of Anarchy (PoA) of a mechanism $\calm$ is defined analogously to the BPoA, but with respect to fixed type profiles and mixed Nash equilibria.  It follows that the BPoA is always at least the PoA for a given mechanism.  A BNE (or NE) is called \emph{pure} if its constituent bidding strategies are deterministic.  In general a pure Nash equilibrium may not exist for a given mechanism and type profile; see Appendix \ref{sec:pureNE}.

One can generalize mixed NE by relaxing the assumption that the declaration distributions are 
independent.  That is, one might allow $\bids(\types)$ to be an arbitrary distribution over declarations, rather than a product distribution. 
%
%
A distribution $\sdists$ over declaration profiles is a 
coarse correlated equilibrium (CCE) for type profile $\types$ if, for all $i$ and all declaration distributions $\sdisti'$,

\begin{equation}
\label{eq:ccne}
\E_{\decls \sim \sdists}[\utili(\decls)] \geq \E_{\decls \sim (\sdisti',\sdistsmi)}[\utili(\decls)].
\end{equation}

Note that when the agent declaration distributions are independent, 
CCE is equivalent to mixed NE.
We define the analogous price of anarchy concepts; it follows
that the pure price of anarchy is at most the mixed price of anarchy 
which in turn is 
at most the coarse correlated price of anarchy. 

\subsection{Greedy Allocation Rules}
\label{sec:greedy}
We describe a special type of allocation rule, which we will refer to as a \emph{greedy allocation rule}.  These are motivated by the priority framework in 
Borodin, Nielsen and Rackoff \cite{BNR-02} and the monotone greedy algorithms of Mu'alem and Nisan \cite{MN-08}, extended to be adaptive as in 
Borodin and Lucier
\cite{BL-10b}. 
We begin with some definitions.  A \emph{partial allocation profile} for agents $N \subseteq [n]$ is an allocation profile $\allocs$ with $\alloci = \emptyset$ for all $i \not\in N$.  A partial allocation profile is \emph{feasible} if there is some feasible allocation profile that extends it.  Given a partial allocation profile $\allocs$ for subset $N$, some $i \not\in N$, and allocation $y \subseteq M$, we say $y$ is a \emph{feasible allocation for $i$ given $N$} if the partial allocation remains feasible when we set $\alloci = y$.

A \emph{priority function} is a function $\rank : [n] \times 2^M \times \reals \to \reals$.  We think of $r(i,S,v)$ as the priority of allocating $S \subseteq M$ to player $i$ when $\vali(S) = v$. We say that $\rank$ is monotone if it is 
non-decreasing in $v$ and monotone non-increasing in $S$ with respect to set inclusion. 

\begin{figure}
\begin{center}
	\fbox{
		\begin{minipage}{0.95\linewidth}

\textbf{Priority Algorithm}
\medskip
\hrule

\medskip

\textbf{Input:} Declaration profile $\decls = \decl_1, \dotsc, \decl_n$.

%
\begin{tabbing}
1. \= Fix a monotone priority function $r$.  Let $N = \emptyset$.\\
2. \> Repeat until $N = [n]$:\\
3. \> \quad \= Choose $i \notin N$ and feasible allocation $S \subseteq M$ for $i$ given $N$ that maximizes $r(i,S,\decli(S))$\\
4. \> \> Set $x_i = S$; add player $i$ to $N$\\
5. \> return $\allocs = (x_1, \dotsc, x_n)$
\end{tabbing}
		\end{minipage}
	}

			\caption[Priority algorithm framework]{The framework for a non-adaptive priority algorithm.}
			\label{fig.priority}
	\end{center}
\end{figure}

We consider two types of greedy allocation algorithms.  A \emph{non-adaptive greedy allocation algorithm} $\alg$ is an allocation algorithm as defined in Figure \ref{fig.priority}. We say that $\alg$ is monotone when the priority 
function $\rank$ is monotone.  
%
%
%
%
%
We assume that ties in step 3 are broken in an arbitrary but fixed manner 
(i.e. we assume that the priority function is a 1-1 function inducing 
a total ordering).  

A non-adaptive algorithm fixes a single priority function that is used throughout its execution.  By constrast, an \emph{adaptive greedy allocation algorithm} can change its priority function on each iteration, depending on the partial allocation formed on the previous iterations.  
\comment{
Note that our definition of greedy allocation rules explicitly allows only a single allocation to each agent.  This is in contrast to a very different type of ``greedy-like'' allocation rule, in which one iterates over the objects and the allocation to each agent is built up incrementally (eg. for submodular combinatorial auctions  \cite{LLN-01}).  Such incremental allocation rules are not covered by our results; we leave open the BNE analysis of their implementations.
}




\subsection{Applications}
\label{appendix:applications}

We now describe some applications of greedy algorithms for particular combinatorial allocation problems. 

\paragraph{Combinatorial Auctions}
The general combinatorial auction problem is defined by the feasibility constraint that no two allocations can intersect.  
Lehmann et al \cite{LOS-99} show that the (non-adaptive) greedy allocation rule with $r(i,S,v) = \frac{v}{\sqrt{|S|}}$ achieves a $\sqrt{2m}$ approximation ratio for CAs.


\paragraph{Cardinality-restricted Combinatorial Auctions}
In the special case that players' desires are restricted to sets of size at most $k$, the non-adaptive greedy algorithm  with $r(i,S,v) = v$ is $k$-approximate assuming single-minded agents.  This translates to a $(k+1)$ approximate algorithm for general (i.e. multi-minded) agents. 
\comment{
If $k \geq 2$ then this is a blocking allocation problem, and the first-price mechanism has a pure equilibrium and obtains a $(k+1)$ approximation at any pure equilibrium by Theorem \ref{thm.pureNE}.
}

\paragraph{Multiple-Demand Unsplittable Flow Problem}
In the unsplittable flow problem (UFP), we are given an undirected graph with edge capacities.  The objects are the edges, and each valuation function is such that agent $i$ has some value $v(s,t)$ for being given a path from $s$ to $t$.  Each agent additionally specifies a fractional demand $d_i \in [0,1]$ corresponding to a desired amount of flow to send along the given path.  An allocation is feasible if the total allocated flow along each edge is no more than its capacity.  Let $B$ be the minimum edge capacity.  A primal-dual algorithm, which is an adaptive greedy allocation rule, obtains an $O(m^{1/(B-1)})$ approximation for any $B > 1$ \cite{BKV-05}.

\paragraph{Convex Bundle Auctions}
In a convex bundle auction, $M$ is the plane $\reals^2$, and allocations must be non-intersecting compact convex sets.  We suppose that agents declare valuation functions by making bids for such sets.  Given such a collection of bids, the aspect-ratio, $R$, is defined to be the maximum diameter of a set divided by the minimum width of a set.  A non-adaptive greedy allocation rule using a geometrically-motivated priority function yields an $O(R^{4/3})$ approximation \cite{BB-04}.  Alternative greedy algorithms yield better approximation ratios for special cases, such as rectangles.

\comment{
Since this auction is a blocking allocation problem, this mechanism has pure Nash equilibria, and its price of anarchy in pure strategies is also $O(R^{4/3})$.
}

\paragraph{Max-profit Unit Job Scheduling}
In this problem, each bidder has a job of unit time to schedule on one of multiple machines.  A bidder has various windows of time of the form (release time, deadline, machine) in which his job could be scheduled, with a potentially different profit resulting from each window.  The profits and windows are private information to each bidder.  The goal of the mechanism is to schedule the jobs to maximize the total profit.  The greedy algorithm that orders bids by value obtains a $3$-approximation, and is symmetric with respect to agents and objects.  

Unlike the previous examples, for the case of single-minded bidders, 
there is an 
optimal dynamic programming algorithm that runs in time $O(n^7)$ \cite{BAP-04}. 
Since this algorithm solves the problem optimally, it is incentive compatible.  In this case, the resulting price of anarchy for the greedy algorithm is appealing primarily due to its linear runtime and simple allocation rule.


 
\comment{
\subsection{Smoothness}

\brendan{TODO: Change the discussion to use the Syrgankis and Tardos version of smoothness.  Give a forward pointer to our result linking strong loser independence to that notion of smoothness.}
\brendan{Comments about prior works that show that a new version of smoothness is needed.} 
\allan{We will probably eliminate this subsection and put in the relation
to smoothness in the Strongly Loser-Independence subsection that relates
Strong Loser Independence to Smoothnesss as it now appears. I 
checked the conference version of our paper and we only have loser independence
but your thesis is where strong loser independence is introduced
and there you also give a version of smoothness that looks more like the one
in Stryganis and Tardos. I need to go over all the earlier historic comments.}
Bounding the price of anarchy for a given game is a common task in algorithmic game theory.  Roughgarden \cite{ROU-09} showed that many existing price of anarchy results can be expressed as a general \emph{smoothness} argument, which furthermore results in matching bounds on the  total price of anarchy which in 
turn implies the same matching bounds for mixed and coarse correlated price 
of anarchy.  

Let us first recall the original definition of smoothness.  In a payoff maximization game, each of $n$ agents chooses a strategy $\strati$ from strategy space $\Gamma_i$.  Each agent then obtains a payoff (i.e.\ utility) $\utili( \strats )$ which depends on all of the strategies chosen.

Given $\lambda > 0$ and $\mu \geq -1$, a payoff-maximization game
is $(\lambda,\mu)$-smooth if, for all strategy profiles $\strats$ and $\strats'$,
\[ \sum_i \utili(\strati',\stratsmi) \geq \lambda \sum_i \utili(\strats') - \mu \sum_i \utili(\strats). \]
Note that a game can be $(\lambda,\mu)$-smooth only if $\lambda \leq  1 + \mu $ (consider the case $\strats' = \strats$).
Roughly speaking, the smoothness condition guarantees that if strategy profile $\strats'$ yields a higher total welfare than $\strats$, then this improvement is reflected in the average improvement to each individual player's change in utility when changing their own strategy unilaterally from $\strats$ to $\strats'$.
Roughgarden showed 
that any $(\lambda,\mu)$-smooth payoff-maximization game has price of anarchy at most $\frac{1+\mu}{\lambda}$, and that this bound 
\footnote{Roughgarden uses the convention that approximation ratiosfor maximization problems are at most one and hence the bound is stated as $\frac{\lambda}{1+\mu}$.} extends to a range of other equilibrium notions, including the coarse correlated price of anarchy \cite{ROU-09}.

This notion of smoothness can be extended to the realm of mechanism design in a natural way.  Given a vector of agent types $\vals$, mechanism $\mech$ induces a game in which each agent is a player and the strategies are type declarations, $\decls$.  
%
We also include an extra player, with only a single available strategy, to represent the mechanism; the utility of this player will be precisely the sum of the payments of the other players.  We include this extra player so that the sum of player utilities is equal to the social welfare of the mechanism's outcome.

We then say that mechanism $\mech$ is $(\lambda,\mu)$-smooth if, for all $\vals$ and all strategy profiles $\decls$ and $\decls'$,
\[ \sum_i \pricei(\decls) + \sum_i \utili(\decli',\declsmi) \geq \lambda SW(\allocs(\decls'),\vals) - \mu SW(\allocs(\decls),\vals). \]

\brendan{Got rid of example showing that our mech is not smooth.}

}

\section{Strong Loser-Independence}
\label{sec:strong-loser-indep}

Chekuri and Gamzu \cite{CG-09} introduced a property known as loser-independence for combinatorial allocation algorithms in single-parameter domains. They define an algorithm for a combinatorial allocation problem to be \emph{loser-independent} if, whenever $\alg_i(\decli, \declsmi) = \alg_i(\decli',\declsmi) = \emptyset$ for some $i$, $\declsmi$, $\decli$, and $\decli'$, then it must be that $\alg(\decli, \declsmi) = \alg(\decli',\declsmi)$.  That is, if a ``losing'' agent (i.e.\ an agent who is allocated no items) modifies his declaration in such a way that he still receives no items, this cannot affect the outcome of algorithm $\alg$.
Note that loser-independence is a condition on declaration profiles, rather than on bidding functions, since the loser-independence notion is purely algorithmic and is not a condition on equilibria.
In our results we will make use of a stronger property of greedy algorithms, which we call strong loser-independence.

\begin{defn}
An allocation rule $\alg$ is \emph{strongly loser-independent} if, whenever $\decls$ and $\decls'$ satisfy $\alg(\decls) \neq \alg(\decls')$, there exists an agent $i$ and set $S \neq \emptyset$ such that $\decli(S) \neq \decli'(S)$ and either $\alg_i(\decls) = S$ or $\alg_i(\decli',\declsmi) = S$. 
\end{defn}

Roughly speaking, if $\alg$ is a strongly loser-independent algorithm, then whenever a valuation profile changes from $\decls$ to $\decls'$ via modifications to ``losing bids'' (i.e.\ an agent $i$'s declared value for sets that are not allocated to him, when others bid according to $\declsmi$), algorithm $\alg$ will return the same outcome on inputs $\decls$ and $\decls'$.  We note that our definition requires that either $\alg_i(\decls) = S$ or $\alg_i(\decli',\declsmi) = S$, rather than $\alg_i(\decls') = S$.  The intuition is that we think of ``losing bids'' as being losers with respect to the original declaration profile $\decls$.

The property of strong loser-independence strengthens the definition of loser-independence due to Chekuri and Gamzu in two ways.  First, we extend from single-parameter settings to multiple-parameter settings by considering losing bids rather than losing agents.  Second, we require that the algorithm outcome be unaffected if multiple agents simultaneously modify losing bids.

It is clear from the definitions that all strongly loser-independent algorithms are loser-independent (i.e.\ by considering the case when $\decls$ and $\decls'$ differ only on the declaration of a single agent).  However, not all loser-independent algorithms are strongly loser-independent, even in single-minded domains.  For example, consider the combinatorial auction problem and suppose that $\alg$ is an algorithm that optimizes social welfare exactly and breaks ties consistently.  Then $\alg$ is loser-independent, since a losing agent's bid does not affect the optimal allocation.  However, $\alg$ is not strongly loser-independent, as the following instance shows.  Consider an auction of two items $\{a,b\}$ to three bidders.  If the (single-minded) bidder declarations are $\decli[1](\{a,b\}) = 10$, $\decli[2](\{a\}) = 3$, and $\decli[3](\{b\}) = 3$, then the outcome is that agent $1$ wins his desired set.  On the other hand, if the bidder declarations are given by $\decls'$ where $\decli[1]'(\{a,b\}) = 10$, $\decli[2]'(\{a\}) = 6$, and $\decli[3]'(\{b\}) = 6$, then the outcome changes: agents $2$ and $3$ win their desired sets.  However, $\decls$ and $\decls'$ do not differ in their declarations for any sets allocated by $\alg(\decls)$, $\alg(\decli[1]',\declsmi)$, $\alg(\decli[2]',\declsmi)$, or $\alg(\decli[3]',\declsmi)$, as agent $1$ wins his desired set in each of these four cases.  This contradicts the definition of strongly loser-independence. 


As we now show, all greedy algorithms satisfy the strong loser-independence property. 

\begin{lem}
\label{lem.greedy1}
Every (monotone) adaptive greedy algorithm is ($\mon$) and strongly loser-independent.
\end{lem}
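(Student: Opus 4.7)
I will handle the two claims separately, both via induction on the execution steps of the greedy algorithm. Throughout I exploit two facts: (i) adaptive priority functions depend only on the partial allocation assembled so far, so priorities coincide between two executions as long as their picks agree; and (ii) the priority function is injective, i.e.\ distinct triples have distinct priorities, because ties are broken by a fixed total order.

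\emph{Monotonicity.} Given a monotone priority function, $\alg_i(\vali, \valsmi) = S$, and an increase to $\tilde{\vali}(S) > \vali(S)$ that leaves all other entries untouched, the only priority that changes between executions is $r(i, S, \cdot)$, and by monotonicity it strictly increases. Let $k$ be the step at which $(i, S)$ is picked in the execution on $\vals$, and write $(c_s, U_s)$ for the step-$s$ pick there. A short induction on $s \leq k$ shows that the step-$s$ pick in the new execution is either $(c_s, U_s)$ or $(i, S)$: partial allocations and adaptive priorities agree as long as picks agree, and any pair other than $(i, S)$ had an unchanged (hence still-smaller) priority at that step. Thus by step $k$ at the latest, $(i, S)$ is picked, giving $\alg_i(\tilde{\vali}, \valsmi) = S$.

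\emph{Strong loser-independence: setting up the witness.} Suppose $\alg(\decls) \neq \alg(\decls')$ and run greedy on both profiles in parallel. Let $t$ be the first step at which the picks diverge, with $(i_t, S_t)$ chosen by the $\decls$-execution and $(j_t, T_t)$ by the $\decls'$-execution; both executions agree on picks $(c_s, U_s)_{s<t}$, so the step-$t$ partial allocations and priority functions coincide. Write $c_t = i_t$, $U_t = S_t$. If some $s \leq t$ satisfies $\decli[c_s](U_s) \neq \decli[c_s]'(U_s)$, then $(c_s, U_s)$ is already a witness: $U_s \neq \emptyset$ (since valuations vanish on the empty set) and $\alg_{c_s}(\decls) = U_s$. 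Otherwise the step-$t$ maximality conditions in the two executions combine into
\[
  r(j_t, T_t, \decli[j_t]'(T_t)) \geq r(i_t, S_t, \decli[i_t](S_t)) \geq r(j_t, T_t, \decli[j_t](T_t)),
\]
and injectivity of $r$ together with $(i_t, S_t) \neq (j_t, T_t)$ forces $\decli[j_t](T_t) \neq \decli[j_t]'(T_t)$ and $T_t \neq \emptyset$. The candidate witness is therefore $(j_t, T_t)$ applied to the hybrid profile $\decls^\dagger = (\decli[j_t]', \declsmi[j_t])$, and what remains is to show $\alg_{j_t}(\decls^\dagger) = T_t$.

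\emph{The hybrid induction (main obstacle).} This is the most delicate step: I would prove by induction on $s \leq t$ that the $\decls^\dagger$-execution picks $(c_s, U_s)$ for each $s < t$ and then picks $(j_t, T_t)$ at step $t$. Priorities in $\decls^\dagger$ agree with the $\decls$-execution everywhere except on pairs $(j_t, T)$, where they agree with the $\decls'$-execution. At step $s < t$, the pick $(c_s, U_s)$ beats every non-$j_t$ pair by the $\decls$-execution bound, and it beats each $(j_t, T)$ because the $\decls'$-execution bound $r(c_s, U_s, \decli[c_s]'(U_s)) \geq r(j_t, T, \decli[j_t]'(T))$ transfers to $\decls^\dagger$ once we use $\decli[c_s](U_s) = \decli[c_s]'(U_s)$. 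At step $t$, the displayed chain yields $r(j_t, T_t, \decli[j_t]'(T_t)) \geq r(k, T, \decli[k](T))$ for every feasible $(k, T)$ with $k \neq j_t$, while the $\decls'$-execution bound handles the remaining $(j_t, T)$ with $T \neq T_t$. The subtle point is that the hybrid execution must track the $\decls$-execution despite $j_t$'s inflated priorities, and the equalities $\decli[c_s](U_s) = \decli[c_s]'(U_s)$ are exactly what let the $\decls'$-execution dominance bounds carry over.
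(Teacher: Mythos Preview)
Your proposal is correct and follows essentially the same approach as the paper's proof: both locate the first iteration $t$ at which the two executions diverge, dispose of the case where some earlier winning bid (including the step-$t$ winner under $\decls$) changed value, and otherwise use the priority comparison at step $t$ to identify $(j_t,T_t)$ as the witness, verifying $\alg_{j_t}(\decli[j_t]',\declsmi[j_t])=T_t$ by replaying the first $t$ steps on the hybrid profile. Your treatment is slightly more compact in folding the step-$t$ $\decls$-winner case into the ``$s\le t$'' check, and more explicit about monotonicity (which the paper dismisses in one line), but the structure and key ideas are the same.
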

\begin{proof}
The monotonicity property follows immediately when  
the priority function in the greedy algorithm is a monotone
function. 
 
Let $\alg$ be an adaptive greedy allocation rule, and choose any $\decls$ and $\decls'$ such that $\alg(\decls) \neq \alg(\decls')$.  We will show that there exists some $i$ and $S$ such that $\decli(S) \neq \decli'(S)$ and either $\alg_i(\decls) = S$ or $\alg_i(\decli',\declsmi) = S$.

Recall the definition of an adaptive greedy algorithm, and consider the iterations of $\alg$ on inputs $\decls$ and $\decls'$.  Let $k$ be the first iteration in which the allocation of $\alg$ differs on these two inputs.  Suppose that $\alg$ allocates set $U$ to agent $\ell$ on iteration $k$ when the input is $\decls$, and allocates $T$ to agent $j$ on iteration $k$ when the input is $\decls'$.  

For each iteration $q < k$, write $i_q$ for the agent allocated to by $\alg$ (on either input profile) and $S_q$ for the set allocated to $i_q$.  Note that if $\decli[i_q](S_q) \neq \decli[i_q]'(S_q)$ for any $q < k$ then we have the desired result with $i = i_q$ and $S = S_q$.  We can therefore assume that $\decli[i_q](S_q) = \decli[i_q]'(S_q)$ for all $q < k$.  This implies that the bids resolved by $\alg$ are identical on all iterations preceeding $k$ on inputs $\decls$ and $\decls'$, and therefore the values of ranking functions used in each iteration up to $k$ must be identical for inputs $\decls$ and $\decls'$. 
Write $r_q$ for the ranking function used in iteration $q$ for each $q \leq k$.  Thus, since the allocation on iteration $k$ changed from choosing set $U$ for agent $\ell$ to choosing set $T$ for agent $j$, it must be that either $r_k(\ell,U,\decli[\ell](U)) \neq r_k(\ell,U,\decli[\ell]'(U))$ or $r_k(j,T,\decli[j](T)) \neq r_k(j,T,\decli[j]'(T))$.  This implies that either $\decli[\ell](U) \neq \decli[\ell]'(U)$ or $\decli[j](T) \neq \decli[j]'(T)$.

If $\decli[\ell](U) \neq \decli[\ell]'(U)$ then we have the desired result with $i = \ell$ and $S = U$, since $\alg_\ell(\decls) = U$.  We can therefore assume that $\decli[\ell](U) = \decli[\ell]'(U)$ and $\decli[j](T) \neq \decli[j]'(T)$.  Consider now the behaviour of algorithm $\alg$ on input $(\decli[j]',\declsmi)$.  We claim that $\alg_j(\decli[j]',\declsmi) = T$.  Note that this implies the desired result with $i = j$ and $S = T$.  To prove the claim, recall that $\decli[i_q](S_q) = \decli[i_q]'(S_q)$ for all $q < k$.  Thus, for each $q < k$ and each feasible set $S$ that could be allocated to agent $j$ on iteration $q$,
\[ r_q(i_q,S_q,\decli[i_q](S_q)) =  r_q(i_q,S_q,\decli[i_q]'(S_q)) > r_q(j,S,\decli[j]'(S)) \] 
since $\alg$ allocates $S_q$ to $i_q$ on input $\decls'$.  We conclude that, on input $(\decli[j]',\declsmi)$, $\alg$ allocates $S_q$ to agent $i_q$ on each iteration $q < k$ .  On iteration $k$, we have $r_k(j,T,\decli[j]'(T)) > r_k(j,T',\decli[j]'(T'))$ for any feasible $T' \neq T$ (since $\alg$ allocates $T$ to $j$ on input $\decls'$) and 
\[ r_k(j,T,\decli[j]'(T)) > r_k(\ell,U,\decli[\ell]'(U)) = r_k(\ell,U,\decli[\ell](U)) \geq r_k(i,S,\decli(S)) \]
for any feasible $i \neq j$, due to our assumption that $\decli[\ell]'(U) = \decli[\ell](U)$ and the fact that $\alg$ allocates $U$ to $\ell$ on iteration $k$ for input $\decls$.  We therefore conclude that 
$\alg_j(\decli[j]',\declsmi) = T$ as required. 
\end{proof}

We next explore an implication of a strongly loser-independent algorithm $\alg$ being a worst-case $c$-approximation.  If $\alg$ is a $c$-approximate algorithm, then (on any input) the sum of the declared values for its output profile approximates the sum of the declared values for the optimal allocation.  We now show that it also approximates the sum of the critical values of the optimal allocation profile.

\begin{lem}
\label{lem.greedy}
If $\alg$ is a $c$-approximate strongly loser-independent algorithm, then for any type profile $\vals$ and allocation profile $\mathbf{y}$, $\sum_{i \in [n]}\vali(\alg_i(\vals)) \geq \frac{1}{c} \sum_{i \in [n]} \criti(y_i, \valsmi)$.
\end{lem}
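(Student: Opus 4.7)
The plan is to construct an auxiliary single-minded declaration profile at the critical values, apply the $c$-approximation guarantee of $\alg$ to it, and then use strong loser-independence to transfer the resulting bound back to the original profile $\vals$.

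To begin, for each agent $i$ let $\pi_i := \criti(y_i, \valsmi)$ and define the single-minded declaration $\vali^*$ by $\vali^*(y_i) = \pi_i$ and $\vali^*(S) = 0$ for all $S \neq y_i$. Let $\vals^* = (\vali[1]^*, \ldots, \vali[n]^*)$. Since $\mathbf{y}$ is feasible and $\sum_i \vali^*(y_i) = \sum_i \pi_i$, applying the $c$-approximation guarantee of $\alg$ on input $\vals^*$ yields
\[
\sum_i \vali^*(\alg_i(\vals^*)) \;\geq\; \frac{1}{c}\sum_i \pi_i.
\]
Since each $\vali^*$ is single-minded, $\vali^*(\alg_i(\vals^*))$ equals $\pi_i$ when $\alg_i(\vals^*) = y_i$ and is $0$ otherwise. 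Writing $W = \{i : \alg_i(\vals^*) = y_i\}$, we obtain $\sum_{i \in W} \pi_i \geq (1/c)\sum_i \pi_i$.

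The main remaining step is to show $\sum_i \vali(\alg_i(\vals)) \geq \sum_{i \in W} \pi_i$. For this, I would prove the per-agent bound that $\vali(\alg_i(\vals)) \geq \pi_i$ for each $i \in W$. The strategy is to consider a hybrid sequence of profiles that interpolates between $\vals$ and $\vals^*$ by modifying one agent's declaration at a time, and to invoke strong loser-independence at each single-agent transition. Each application of strong LI produces a non-empty witness set that must be allocated in one of the two adjacent profiles and on which the changing bid differs; by tracking how these witnesses can appear across the hybrid sequence, and combining this with the defining property of $\pi_i$ (namely that bidding $\pi_i$ single-mindedly on $y_i$ against $\valsmi$ wins $y_i$), one deduces the per-agent bound. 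Summing over $i \in W$ then completes the argument.

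The main obstacle is precisely this bridging step. Applied naively to the change directly from $\vals$ to $\vals^*$, strong loser-independence admits the ``trivial'' witness $\alg_i(\vali^*, \valsmi) = y_i$, which holds tautologically by the definition of $\pi_i$ and therefore gives no information about how the allocations actually differ. Circumventing this requires choosing the hybrid sequence carefully so that at each single-agent transition the modified bids are losing throughout the interpolation, which is what allows strong loser-independence to yield genuine structural information about $\alg_i(\vals)$ rather than a vacuous statement.
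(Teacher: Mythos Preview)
Your proposed per-agent bound is false, so the bridging step cannot be carried out as stated. Here is a two-agent, one-item counterexample. Let $\alg$ allocate the item to the higher bidder (a $1$-approximate greedy, hence strongly loser-independent). Take $\vali[1](\{a\})=10$, $\vali[2](\{a\})=5$, and $\mathbf{y}=(\emptyset,\{a\})$. Then $\pi_1=\criti[1](\emptyset,\vali[2])=0$ and $\pi_2=\criti[2](\{a\},\vali[1])=10$. In your auxiliary profile $\vals^*$, agent~$1$ bids nothing and agent~$2$ bids $10$ on $\{a\}$, so $\alg_2(\vals^*)=\{a\}=y_2$ and $2\in W$. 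Your per-agent claim would require $\vali[2](\alg_2(\vals))\ge\pi_2=10$, but $\alg_2(\vals)=\emptyset$ and $\vali[2](\emptyset)=0$. The aggregate inequality you need still holds here, but only because agent~$1$'s large value compensates; no hybrid argument will rescue the per-agent version. The difficulty you identified --- that strong loser-independence applied across the change $\vals\to\vals^*$ produces a witness that is tautological --- is not a technicality to be circumvented but a symptom that $\vals^*$ discards too much of $\vals$ for strong LI to say anything useful.

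The paper's argument avoids this entirely by a different choice of auxiliary profile: rather than \emph{replacing} $\vals$ with single-minded bids, it \emph{augments} $\vals$ by taking the pointwise maximum of $\vali$ with a single-minded bid for $y_i$ at value $\pi_i-\eps$. Because the added bid is strictly below the critical value, agent~$i$'s allocation against $\valsmi$ is unchanged, i.e.\ $\alg_i(\vali^*,\valsmi)=\alg_i(\vals)$ and $\vali^*(\alg_i(\vals))=\vali(\alg_i(\vals))$. Now every modified bid is a losing bid, so a single application of strong loser-independence gives $\alg(\vals^*)=\alg(\vals)$ outright --- no hybrids needed. Applying the $c$-approximation to $\vals^*$ with $\mathbf{y}$ as the competing allocation then yields
\[
\sum_i \vali(\alg_i(\vals)) \;=\; \sum_i \vali^*(\alg_i(\vals^*)) \;\ge\; \tfrac{1}{c}\sum_i \vali^*(y_i) \;\ge\; \tfrac{1}{c}\sum_i(\pi_i-\eps),
\]
and letting $\eps\to 0$ finishes. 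The key idea you are missing is to keep $\vals$ intact and add only losing bids, so that strong loser-independence is non-vacuous and applies in one shot.
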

\begin{proof} 
Choose any $\eps > 0$.  For all $i$, let $\vali'$ be the single-minded declaration for set $y_i$ at value $\criti(y_i, \valsmi) - \eps$.  Let $\vali^*$ be the pointwise maximum of $\vali'$ and $\vali$.  That is, for all $S \subseteq M$, $\vali^*(S) = \max\{\vali(S), \vali'(S)\}$.  By definition of critical prices, we have that $\alg_i(\vali^*,\valsmi) = \alg_i(\vals)$ for all $i$, and furthermore $\vali^*(\alg_i(\vals)) = \vali(\alg_i(\vals))$.  Since $\alg$ is strongly loser-independent, we must therefore have $\alg(\vals) = \alg(\vals^*)$.
Since $\alg$ is a $c$-approximation, we conclude that $SW(\allocs(\vals), \vals) = SW(\allocs(\vals^*), \vals^*) \geq \frac{1}{c} SW(\mathbf{y},\vals^*) \geq \frac{1}{c} \sum_{i \in [n]} \criti(y_i, \valsmi) - n\eps$.
The result follows by taking the limit as $\eps \to 0$.
\end{proof}

\comment{
\allan{Now the following paragraph doesn''t make sense. Maybe this
all should go as an open problem in the conclusions unless we can come up with a simple example of a ($\mon$) $\sli$  algorithm that is not (monotone) greedy.}
Since this strongly loser independent property plays such a predominant
role in our development, it is natural to ask if the converse of 
Lemma \ref{lem.greedy1} holds; that is, if 
every strongly loser
independent  rule is equivalent to a greedy allocation rule. However, 
it is not hard to see that strongly loser-independent rules
need not even be $\mon$ and hence cannot be a greedy allocation rule 
within the
formulation given in section \ref{sec:greedy}. 
However, this leaves open the possibility that every monotone strongly
loser independent rule is a greedy rule.  

\begin{example}
Even in the case of one single-minded agent, a strongly loser-independent 
rule might not be monotone. Consider two valuations $v^1(a) = \epsilon$ and 
$v^2(a) = 1$ and let $\alg(v^1) = \{a\}$ and $\alg(v^2) = \emptyset$. Then 
$\alg$ is strongly loser-independent but clearly not monotone.
\end{example}

\allan{Adding a footnote. If we think this is all too confusing, then we could go back and redo everything to be monotone. But I prefer isolating out where monotinicity is not needed.}
}
For brevity, for the remainder of this paper we will say ``monotone strongly loser-independent'' to mean both strongly loser-independent and component-wise monotone\footnote{For pure Nash equlibirium price of anarchy results, if we assume no over-bidding, we do not need monotonicity but it is necessary for all of our other results.}.

\subsection{Applying Strong Loser Independence}
\label{sec:applying-loser-indep}

Strong loser-independence is a strictly algorithmic concept,
devoid of game theoretic considerations. Our general approach 
will be to derive price of anarchy results for any mechanism that 
uses a strongly loser-independent $c$-approximation $\alg$ as its allocation algorithm. 
To do so, we will be using Lemma~\ref{lem.greedy} in conjuction with the
assumption that a given bid profile is an equilibrium. 

At a high level, our argument will be as follows.
For each pricing rule and equilibrium concept, equilibrium will imply an inequality of the form 
\[ v_i(y_i) \leq \lambda \cdot \criti(y_i, \declsmi) + \mu \cdot v_i(\alloci(\decls))\] 
where $\mathbf{y}$ is an optimal allocation. 
(For Bayesian equilibrium, these terms will be expectations.)  This allows us to charge the optimal gain for each 
agent to its critical value and its welfare from the algorithm. We then exploit
Lemma \ref{lem.greedy} to 
convert this bound into a relationship between the optimal welfare and the welfare at equilibrium.
To make this
more specific, in our pure Nash equilibrium result for a first price
mechanism (Theorem~\ref{thm.pureNE}), we show the following 
(somewhat stronger) inequality:
$$\typei(y_i) \leq \criti(y_i,\declsmi) + \typei(\alloci(\decls)) - \decli(\alloci(\decls))$$ 
It will then follow   
(ignoring an  $n \epsilon$ term which disappears as $\epsilon \rightarrow 0$) that:
\begin{align*}
  \sum_i \typei(y_i)
  & \leq \sum_i \criti(y_i,\declsmi) + \sum_i \typei(\alloci(\decls)) - \sum_i \decli(\alloci(\decls))\\
  & \leq (c-1)\sum_i \decli(\alloci(\decls)) + \sum_i \typei(\alloci(\decls))  
\end{align*}


In other words, the high-level approach is to charge an agent's welfare in the optimal outcome against his welfare at equilibrium plus the welfare of other ``price-setting'' agents.  
This approach is similar 
to the  smoothness argument
as formulated by Syrgkanis and Tardos \cite{SyrgkanisT13}.
However, there is a difference in our approaches. 
The smoothness condition in \cite{SyrgkanisT13} is tailored to allocation mechanisms and asserts the existence of some $\decl_i$ (for each player $i$) 
satisfying such
an inequality whereas we are assuming that $\decls$ is an equilibrium. 
The benefit of their immediate reduction to smoothness is that their 
price of anarchy results for pure equilibria carry over 
immediately to Bayesian price of anarchy. However, this prohibits 
establishing certain tight bounds; for example, in the first price mechanism 
we show that the pure price of anarchy is $c$, which cannot be achieved
via smoothness since this bound does not hold for the Bayesian price of anarchy.

\newcommand{\udecli}{\overline{\decli}}
\newcommand{\ubidi}{\overline{\bidi}}


\section{First-Price Mechanisms}
\label{sec:mixedBNE}

In this section we analyze greedy algorithms paired with a first price payment scheme. More precisely (with the exception of results relating 
to correlated Bayesian equilibirium where we will consider more 
specific greedy allocations), given a 
strongly loser-independent algorithm $\alg$,
%
we will be studying the performance of the first-price mechanism $\calm_1(\alg)$ at equilibrium.

Our first step will be to show that a utility-maximizing declaration of an agent never involves overbidding on a set that he may possibly be allocated.  This will imply that agents do not employ overbidding strategies at equilibrium.
It may appear at first glance that any strategy that recommends overbidding on sets is obviously dominated for any allocation algorithm, since winning any bid larger than one's true value leads to negative utility.  However, we must also show that an agent cannot find it advantageous to overbid on some set $S$ in order to affect his probability of winning some other set $T$.  We will demonstrate that such situations cannot occur when allocations are chosen by a 
strongly loser independent algorithm.


For a type $\typei$ and a declaration $\decli$, we will write $\udecli$ for the declaration defined as $\udecli(S) = \min\{\typei(S), \decli(S)\}$.  That is, $\udecli$ agrees with $\decli$, except that the declared value of each set can be at most the true value for that set.  Note that $\udecli = \decli$ precisely if $\decli$ does not overbid on any set.

We now show that any declaration $\decli$ that overbids on a set 
that could potentially be won is  
weakly dominated by strategy $\udecli$.  


\begin{lem}
\label{lem.firstprice.overbid}
For any $\msli$ $\alg$, valuation $\vali$, and declaration profile $\decls$, we have
$\utili(\decls) \leq \utili(\udecli,\declsmi).$
Moreover, the inequality is strict when $\decli(\alg(\decls)) > \typei(\alg(\decls))$.
\end{lem}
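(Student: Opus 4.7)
The plan is to argue by direct case analysis on whether agent $i$'s allocation changes when we replace $\decli$ with $\udecli$. Let $S = \alg_i(\decls)$ and $S' = \alg_i(\udecli, \declsmi)$. Recall that $\udecli(U) = \min\{\typei(U), \decli(U)\}$, so $\udecli \leq \decli$ pointwise, and $\udecli(U) < \decli(U)$ exactly when $\decli(U) > \typei(U)$.

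If $S = S'$, everything is immediate: $\utili(\decls) = \typei(S) - \decli(S)$ and $\utili(\udecli,\declsmi) = \typei(S) - \udecli(S)$, and the inequality follows from $\udecli(S) \leq \decli(S)$, with strictness exactly when $\decli(S) > \typei(S)$.

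Now suppose $S \neq S'$. Then the full outputs $\alg(\decls)$ and $\alg(\udecli,\declsmi)$ differ, and since the two declaration profiles differ only at agent $i$, strong loser-independence gives a nonempty $T$ with $\decli(T) \neq \udecli(T)$ (so $\decli(T) > \typei(T)$) and $T \in \{S, S'\}$. If $T = S$, then $\decli$ overbids on $S$, so $\utili(\decls) = \typei(S) - \decli(S) < 0 \leq \typei(S') - \udecli(S') = \utili(\udecli,\declsmi)$, which gives the strict inequality required. The remaining sub-case, $T = S'$ (with $S \neq S'$), is the main obstacle and is where monotonicity enters.

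In the sub-case $T = S'$, we have $\decli(S') > \typei(S') = \udecli(S')$. Define a hybrid $\decli''$ by $\decli''(S') = \decli(S')$ and $\decli''(U) = \udecli(U)$ for $U \neq S'$. Since $\alg_i(\udecli,\declsmi) = S'$ and $\decli''$ only raises $\udecli$ on $S'$, component-wise monotonicity gives $\alg_i(\decli'',\declsmi) = S'$. Now compare $\alg(\decls)$ and $\alg(\decli'',\declsmi)$: since $\alg_i(\decli'',\declsmi) = S' \neq S = \alg_i(\decls)$, the outputs differ, so strong loser-independence provides a nonempty $T'$ with $\decli(T') \neq \decli''(T')$ and either $\alg_i(\decls) = T'$ or $\alg_i(\decli'',\declsmi) = T'$. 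The former condition forces $T' \neq S'$ (where $\decli$ and $\decli''$ agree) and $\decli(T') > \typei(T')$; the latter would force $T' = S'$, a contradiction. Hence $T' = S$ and $\decli(S) > \typei(S)$, giving $\utili(\decls) < 0 \leq \utili(\udecli,\declsmi)$. Combining the cases, $\utili(\decls) \leq \utili(\udecli,\declsmi)$ always, with strict inequality whenever $\decli(S) > \typei(S)$, as claimed.
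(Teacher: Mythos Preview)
Your proof is correct. Both arguments ultimately establish the same key fact---that if $\decli(S) \leq \typei(S)$ then $\alg_i(\udecli,\declsmi) = S$---but by different means. The paper splits on whether $\decli(S) > \typei(S)$; in the case $\decli(S) \leq \typei(S)$ it walks from $\decli$ to $\udecli$ one coordinate at a time, finds the first step at which agent $i$'s allocation changes, and uses strong loser-independence on that single-coordinate change together with monotonicity to derive a contradiction. You instead split on whether $S = S'$, and in the case $S \neq S'$ apply strong loser-independence twice: once directly between $\decls$ and $(\udecli,\declsmi)$, and once between $\decls$ and your hybrid $(\decli'',\declsmi)$, with monotonicity used only to pin down $\alg_i(\decli'',\declsmi) = S'$. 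Your approach avoids the sequence construction at the cost of inventing the hybrid; the paper's approach is perhaps more mechanical but requires tracking the first point of divergence along a path. One small point worth making explicit in your write-up: in the sub-case $T = S'$, your conclusion $T' = S$ together with $T' \neq \emptyset$ silently rules out $S = \emptyset$, so that sub-case is vacuous when $S = \emptyset$ (which is fine, since then $\utili(\decls) = 0$ and the hypothesis of the ``moreover'' clause cannot hold).
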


\begin{proof}
Let $S = \alg_i(\decls)$.  Suppose first that $\decli(S) > \typei(S)$. Then $\utili(\decls) = \typei(S) - \decli(S) < 0$.  Since $\typei(T) - \udecli(T) \geq 0$ for every set $T$, this implies that $\utili(\udecli,\declsmi) > \utili(\decls)$, as required.

Next suppose that $\decli(S) \leq \typei(S)$, so that $\udecli(S) = \decli(S)$.  We claim that $\algi(\udecli,\declsmi) = S$.  Suppose not, for contradiction. Then we can construct a sequence of declarations $(\decl^1, \decl^2, \dotsc, \decl^k)$, with $\decl^1 = \decli$ and $\decl^k = \udecli$, such that adjacent declarations differ only on a single set and declared values only decrease.  Suppose $j$ is minimal such that $\algi(\decl^j, \declsmi) \neq S$; such a $j > 1$ must exist since, by assumption, $\algi(\udecli,\declsmi) \neq S$.  Then (a) $\decl^{j-1}$ and $\decl^j$ differ only on the value assigned to some set $T$, (b) $\decl^{j-1}(T) > \decli^{j}(T)$, (c) $\algi(\decl^{j-1},\declsmi) = S$, and (d) $\algi(\decl^j,\declsmi) \neq S$.  Strong loser-independence then implies that $\algi(\decl^{j},\declsmi) = T$.  However, the fact that $\decl^{j-1}(T) > \decl^j(T)$ then contradicts the component-wise monotonicity of $\alg$.

We conclude by contradiction that $\algi(\udecli,\declsmi) = S$.  Since $S$ is also $\alg(\decls)$, we have $\utili(\decls) = \vali(S) - \decli(S) = \utili(\udecli,\declsmi)$ as required.  

\end{proof}


An immediate corollary is that if $\bidi$ is a bidding strategy, and there exists a type $\typei$
and set $S$ such that $(\bidi(\typei))(S) > \vali(S)$, then $\bidi$ is weakly dominated by the
strategy $\ubidi$.  Moreover, $\ubidi$ is strictly better, in terms of utility,
under any distribution of declarations in which agent $i$ wins set $S$ with positive probability.
We conclude that at any BNE of mechanism $\mech_1(\alg)$, no player will overbid on a set
that he wins with positive probability.

\begin{cor}
\label{cor.firstprice.overbid}
For any $\msli$ $\alg$, BNE $\bids$, type $\typei$, and set $S$, if
$\Pr_{\valsmi \sim \distsmi}[\alg_i(\bids(\vals)) = S] > 0$ then $(\bidi(\typei))(S) \leq \typei(S)$.
\end{cor}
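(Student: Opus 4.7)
The plan is to derive the corollary directly from Lemma \ref{lem.firstprice.overbid} via a best-response argument at BNE. Assume for contradiction that $\bids$ is a BNE, that $\Pr_{\valsmi \sim \distsmi}[\algi(\bids(\vals)) = S] > 0$, and that the mixed strategy $\bidi(\typei)$ assigns positive probability to declarations $\decli$ satisfying $\decli(S) > \typei(S)$. I would then consider the deviation $\ubidi$ that post-processes each sample $\decli \sim \bidi(\typei)$ into its cap $\udecli$, where $\udecli(T) = \min\{\decli(T), \typei(T)\}$ for all $T \subseteq M$; this is a legitimate randomized deviation for player $i$ when he holds type $\typei$.

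Lemma \ref{lem.firstprice.overbid} applied pointwise at each realization of $\valsmi$ and of the mixing randomness yields $\utili(\decli, \bidsmi(\valsmi)) \leq \utili(\udecli, \bidsmi(\valsmi))$, with strict inequality exactly when $\decli$ overbids on the allocated set, i.e., $\decli(\algi(\decls)) > \typei(\algi(\decls))$ for $\decls = (\decli, \bidsmi(\valsmi))$. Taking expectations over $\valsmi \sim \distsmi$ and the mixing randomness, the expected utility of $\ubidi$ is at least that of $\bidi$, with a strict gap whenever the ``overbid on the winning set'' event has positive probability. The BNE best-response condition rules out any such strict improvement over $\bidi(\typei)$, so this event must have probability zero.

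To close the contradiction I would invoke component-wise monotonicity of $\alg$: increasing agent $i$'s bid on $S$ can only preserve or create the allocation of $S$ to $i$, so every declaration $\decli$ in the support of $\bidi(\typei)$ with $\decli(S) > \typei(S)$ wins $S$ under at least as many realizations of $\bidsmi(\valsmi)$ as its cap $\udecli$ does. Together with the hypothesis $\Pr[\algi(\bids(\vals)) = S] > 0$, this forces the joint event $\{\decli(S) > \typei(S) \text{ and } \algi(\decli, \bidsmi(\valsmi)) = S\}$ to have positive probability, contradicting the conclusion of the previous paragraph. The main technical obstacle is exactly this final coupling step, namely translating the marginal positive probability of winning $S$ into a positive joint probability of overbidding and winning; monotonicity of $\alg$ is the essential ingredient that makes this translation possible, and explains why both monotonicity and strong loser-independence are required in the hypothesis.
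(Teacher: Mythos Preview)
Your overall approach---Lemma \ref{lem.firstprice.overbid} plus the BNE best-response condition---is exactly what the paper does (the paper's entire argument is the two sentences preceding the corollary). Your first two paragraphs are more careful than the paper about mixed strategies, and they already establish the statement that is actually \emph{used} downstream: at any BNE, $\Pr[\decli(\algi(\decls)) > \typei(\algi(\decls))] = 0$, i.e., agent $i$ almost surely does not overbid on the set he wins.

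Your third paragraph, however, has a genuine gap. The monotonicity comparison you invoke relates an overbidding declaration $\decli$ to its cap $\udecli$, but neither of these is tied to the hypothesis $\Pr[\algi(\bids(\vals)) = S] > 0$, which concerns the \emph{equilibrium mixture} and may be witnessed by entirely different declarations in the support of $\bidi(\typei)$. Concretely, it is possible for some non-overbidding $\decli'$ in the support to win $S$ with positive probability while a distinct overbidding $\decli$ in the support never wins $S$; nothing in your argument rules this out, and component-wise monotonicity does not let you compare $\decli$ with $\decli'$ (they may differ on many sets). Also, the claim that ``$\decli$ wins $S$ whenever $\udecli$ does'' is itself not immediate, since going from $\udecli$ to $\decli$ may raise bids on sets other than $S$ and could cause $i$ to win one of those instead.

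The clean way to finish is to drop the coupling argument and work one pure declaration at a time: every $\decli$ in the support of a mixed best response is itself a pure best response against $\bidsmi(\valsmi)$, so Lemma \ref{lem.firstprice.overbid} applied to that fixed $\decli$ shows that if $\decli(S) > \typei(S)$ then $\Pr_{\valsmi}[\algi(\decli,\bidsmi(\valsmi)) = S] = 0$. This is precisely the paper's (implicitly pure-strategy) reading of the corollary, and it is all that Corollary \ref{cor.firstprice.overbid} needs to deliver for the later proofs.
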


%

\comment{
 
Let ${\cal R} = \{S: \ubidi(S) < \bidi(S)\}$ and 
${\cal T} = 2^M \setminus {\cal R}$. We consider the following cases:
\begin{itemize}
\item
$\alloci(\bidi, \bidsmi) \in {\cal R}$

We then have $\utili(\alloci(\bidi, \bidsmi)) < 0 \leq \utili(\alloci(\ubidi, \bidsmi))$.
 

\item
$\alloci(\bidi, \bidsmi) \in {\cal T}, \alloci(\ubidi, \bidsmi) \in {\cal R}$              

Since 

\item
$\alloci(\bidi, \bidsmi) \in {\cal T}, \alloci(\ubidi, \bidsmi) \in {\cal T}$              

\end{itemize}
\allan{This has to be redone to show that no-overbidding follows
just from monotonicity and strong loser independence and doesn't need to
rely on the alg being a greedy algorithm. I think we should also be able to
elimninate the footnote about tie breaking.}
Choose type $\typei$ for agent $i$ and let $\decli = \bidi(\typei)$.  Let $\decli' = \ubidi(\typei)$, so $\decli'(S) = \min\{\decli(S),\typei(S)\}$ for all $S \subseteq M$.  Note that $\decli'$ satisfies monotonicity.

Fix any possible declaration $\declsmi$ by the other agents.  We claim that $\utili(\decli',\declsmi) \geq \utili(\decli,\declsmi)$.  Let $S_i = \algi(\decli',\declsmi)$ and $T_i = \algi(\decli,\declsmi)$.  If $\decli(T_i) \geq \typei(T_i)$ then $\utili(\decli,\declsmi) \leq 0 \leq \utili(\decli',\declsmi)$ as claimed.  If, on the other hand, $\decli(T_i) < \typei(T_i)$, then $\decli'(T_i) = \decli(T_i)$ by definition.  From the definitions of $S_i$ and $T_i$, and of an adaptive greedy algorithm, it must be that (on some iteration of $\alg$) $S_i$ has a higher priority than $T_i$ under declaration $\decli'$, but $T_i$ has a higher priority than $S_i$ under declaration $\decli$.  Since $\decli'(T_i) = \decli(T_i)$, $\decli'(S_i) \leq \decli(S_i)$, and ranking functions are monotone, this can occur only\footnote{It is here that we make use of the fact that ties in rank are broken according to some arbitrary but fixed rule.} if $S_i = T_i$.  Thus $\algi(\decli',\declsmi) = \algi(\decli,\declsmi)$ and hence $\utili(\decli,\declsmi)) = \utili(\decli',\declsmi)$ as claimed.  

We conclude that $\utili(\decli',\declsmi) \geq \utili(\decli,\declsmi)$ for all $\declsmi$.  Suppose now that there exists some $\declsmi$ and $S \subseteq M$ such that $\decli(S) > \typei(S)$ and $\algi(\decli,\declsmi) = S$.  In this case, our inequality is strict, since $\utili(\decli,\declsmi) = \decli(S) - \typei(S) < 0$.  Thus, if this event occurs for any $\declsmi$, we conclude that $\bidi$ is strictly dominated by $\ubidi$.
\end{proof}

We conclude that the only way in which an overbidding strategy $\bidi$ is not strictly dominated by an easy-to-find alternative is if it only suggests overbidding on sets that can never be allocated to agent $i$.  Since such sets are essentially irrelevant to the strategy of agent $i$, we can assume from this point onward that agents avoid the dominated strategy of overbidding.

}

\subsection{Pure Nash Equilibria}

We are now ready to bound the price of anarchy of $\calm_1(\alg)$.  
We begin with a result for pure Nash equilibria, rather than the fully general BNE case.  

\begin{thm}
\label{thm.pureNE}
Suppose $\alg$ is a  
$c$-approximate 
monotone strongly loser independent allocation rule
for a combinatorial 
allocation problem.  Then the price of anarchy of $\mech_1(\alg)$ is 
at most $c$.
\end{thm}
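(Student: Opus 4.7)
The plan is to fix any pure Nash equilibrium $\decls$ (for type profile $\types$) and any optimal allocation profile $\mathbf{y}$, and to derive an agent-by-agent inequality that can be summed and then combined with Lemma~\ref{lem.greedy}.

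First I would establish a per-agent bound using a single-minded deviation. Fix $\eps > 0$ and, for each agent $i$, let $\decli'$ be the single-minded declaration that bids $\criti(y_i,\declsmi) + \eps$ on any superset of $y_i$ and $0$ otherwise. By the definition of the critical value, $\alg_i(\decli',\declsmi) = y_i$, so in the first-price mechanism the resulting utility is $\typei(y_i) - \criti(y_i,\declsmi) - \eps$. The Nash condition therefore gives
\[
\utili(\decls) \;\geq\; \typei(y_i) - \criti(y_i,\declsmi) - \eps.
\]
On the left, Lemma~\ref{lem.firstprice.overbid} applied to the equilibrium strategy $\decli$ forces $\decli(\alg_i(\decls)) \leq \typei(\alg_i(\decls))$ (otherwise $\udecli$ strictly dominates $\decli$), so $\utili(\decls) = \typei(\alg_i(\decls)) - \decli(\alg_i(\decls))$ and this quantity is nonnegative. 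Rearranging and letting $\eps \to 0$:
\[
\typei(y_i) \;\leq\; \criti(y_i,\declsmi) + \typei(\alg_i(\decls)) - \decli(\alg_i(\decls)).
\]

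Next I would sum this inequality over all $i$ and apply Lemma~\ref{lem.greedy} to the declared profile $\decls$ (treating $\decls$ as the valuation profile in the lemma and $\mathbf{y}$ as the comparison allocation). Since $\alg$ is a $c$-approximate monotone strongly loser-independent rule, the lemma yields $\sum_i \criti(y_i,\declsmi) \leq c \sum_i \decli(\alg_i(\decls))$. Plugging this in gives
\[
\sum_i \typei(y_i) \;\leq\; (c-1)\sum_i \decli(\alg_i(\decls)) + \sum_i \typei(\alg_i(\decls)).
\]
Finally, using no-overbidding on the won sets once more, $\decli(\alg_i(\decls)) \leq \typei(\alg_i(\decls))$, so the right-hand side is at most $c\sum_i \typei(\alg_i(\decls)) = c\cdot SW(\alg(\decls),\types)$. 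Since $\sum_i \typei(y_i) = SW_{opt}(\types)$, the price of anarchy is at most $c$.

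The conceptual work is concentrated in the per-agent inequality: the key move is to exhibit a deviation whose price is governed by the critical value, so that the resulting bound can be aggregated through Lemma~\ref{lem.greedy} rather than through the direct approximation inequality for $\alg$. I expect the only subtle step to be verifying that the single-minded deviation $\decli'$ is genuinely available as a strategy (i.e., that the critical-value definition, which is phrased via $x_i(z,\declsmi)$, matches the outcome of playing $\decli'$) and that the no-overbidding consequence of Lemma~\ref{lem.firstprice.overbid} can be invoked at a pure Nash equilibrium even though that lemma is stated in the language of weak domination — but at a pure Nash the strict-dominance clause rules out overbidding on the allocated set, which is all that is needed here.
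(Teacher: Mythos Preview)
Your argument is correct and follows essentially the same route as the paper: exhibit the single-minded deviation at value $\criti(y_i,\declsmi)+\eps$, use the Nash condition to bound $\typei(y_i)$ by the critical price plus equilibrium utility, sum, apply Lemma~\ref{lem.greedy}, and finish with the no-overbidding consequence of Lemma~\ref{lem.firstprice.overbid} (which the paper packages as Corollary~\ref{cor.firstprice.overbid}). The only cosmetic difference is that the paper keeps the $n\eps$ term until the end rather than taking $\eps\to 0$ agent by agent.
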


\begin{proof}
Fix type profile $\types$ and suppose that $\bids$ forms a pure Nash equilibrium.  Since the Nash equilibrium is pure, we will write $\decls = \bids(\types)$ for notational convenience.  Let $\mathbf{y}$ be an optimal allocation for $\types$, and let $\allocs(\cdot)$ denote the allocation rule for $\alg$.  Lemma \ref{lem.greedy} implies
\begin{equation}
\label{eq.purene.1}
\sum_i \decli(\alloci(\decls)) 
\geq \frac{1}{c}\sum_i \criti(y_i,\declsmi).
\end{equation}
%
Choose arbitrarily small $\eps > 0$ and let $\decli'$ be the single-minded declaration for set $y_i$ at value $\criti(y_i,\declsmi)+\epsilon$.  Then $\alloci(\decli',\declsmi) = y_i$ (from the definition of critical values) and hence $\utili(\decli',\declsmi) = \typei(y_i) - \criti(y_i,\declsmi)-\eps$.  Since $\decls$ is a Nash equilibrium, it must be that 
\begin{align*}
   \typei(y_i) - \criti(y_i,\declsmi) - \eps 
   & = \utili(\decli',\declsmi) \\
   & \leq \utili(\decli,\declsmi) \\
   & = \typei(\alloci(\decls)) - \decli(\alloci(\decls)).
\end{align*}
Summing over all $i$ and applying \eqref{eq.purene.1} and Corollary \ref{cor.firstprice.overbid}
we have 
\begin{align*}
  \sum_i \typei(y_i) 
  & \leq \sum_i \criti(y_i,\declsmi) - \sum_i \decli(\alloci(\decls)) + \sum_i \typei(\alloci(\decls)) + n\eps \\
  & \leq (c-1)\sum_i \decli(\alloci(\decls)) + \sum_i \typei(\alloci(\decls)) + n\eps \\
  & \leq c\sum_i \typei(\alloci(\decls)) + n\eps
\end{align*}
which, taking $\eps \to 0$, implies
\begin{align*}
    SW(\allocs(\decls),\types)
    & = \sum_i \typei(\alloci(\decls)) \\
    & \geq \frac{1}{c}\sum_i \typei(y_i) \\ 
    & = \frac{1}{c}SW_{OPT}(\types)
\end{align*}
as required.
\end{proof}

The power of Theorem \ref{thm.pureNE} is marred by the fact that,
for some problem instances, the mechanism $\mech_1(\alg)$ is not
guaranteed to have a pure Nash equilibrium.
An example is given in Appendix \ref{sec:pureNE}.

\comment{
\subsection{Existence of Pure Nash Equilibria}
\label{sec:pureNE}

The power of Theorem \ref{thm.pureNE} is marred by the fact that, 
for some problem instances, the mechanism $\mech_1(\alg)$ is not 
guaranteed to have a pure Nash equilibrium.  
A simple example is given in Appendix \ref{sec:pureN}.  

\begin{example}
\label{example:no-pure}
Consider an instance of the combinatorial auction problem with two objects, $M = \{a,b\}$, and three agents.  Our feasibility constraint is that each agent can be assigned at most one object, and moreover agent $2$ cannot be allocated object $b$ and agent $3$ cannot be allocated object $a$.  Let $\alg$ be the greedy algorithm that ranks bids by value.  Suppose the true types of the agents are as follows: $\vali[1](a) = 4$, $\vali[1](b) = 2$, $\vali[2](a) = 3$, $\vali[2](b) = 0$, $\vali[3](a) = 0$, and $\vali[3](b) = 3$.  

We now prove that no pure Nash equilibrium exists for this example, even if we assume that agents declare multiples\footnote{That is, our lack of pure equilibrium is not due to the possibility of infinitesimal improvements.  One can also interpret our example as demonstrating that there is no $(1+\epsilon)$-approximate pure Nash equilibrium for small $\epsilon > 0$.} of some $\eps > 0$.  Assume for contradiction that there is a Nash equilibrium $\decls$ for type profile $\types$ and mechanism $\calm_1(\alg)$.

We know that agent $1$ does not win item $b$ with a payment greater than $2$, as this would cause him negative utility (so he would certainly not be in equilibrium).  Thus it must be that $\algi[3](\decls) = \{b\}$, since otherwise agent 3 could change his declaration to win $\{b\}$ and increase his utility.  Thus, since agent 1 does not win item $\{b\}$, we conclude that $\algi[1](\decls) = \{a\}$, since otherwise agent 1 could change his declaration to win $\{a\}$ and increase his utility.

Now note that if $\decli[1](\{a\}) < 3$, agent 2 could increase his utility by making a winning declaration for $\{a\}$.  Thus $\decli[1](\{a\}) \geq 3$, and hence $\utili[1](\decls) \leq 4-3 = 1$.  This also implies that $\decli[1](\{a\}) > \decli[1](\{b\})$, so agent 3 would win $\{b\}$ regardless of his bid.  Thus, since agent 3 maximizes his utility up to an additive $\eps$, it must be that $\decli[3](\{b\}) \leq \eps$.  But then agent 1 could improve his utility by changing his declaration and bidding 0 for $\{a\}$ and $2\eps$ for $\{b\}$, obtaining utility $2 - 2\eps > 1$.  Therefore $\decls$ is not an equilibrium, a contradiction.
\end{example}

We discuss the tightness of this bound following the proof of Theorem \ref{thm.mixedNEepir} in Section \ref{sec:ex-post}.

} 

\subsection{Bayes-Nash Equilibria}

\newcommand{\barutil}{\overline{\utili}}
\newcommand{\barcrit}{\overline{\criti}}

We are now ready to bound the mixed Bayesian price of anarchy for mechanism $\calm_1(\alg)$. 

\begin{thm}
\label{thm.mixedNE}
Suppose $\alg$ is a $\msli$ 
for a combinatorial allocation problem.  Then the Bayesian price of anarchy of $\mech_1(\alg)$ is at most\footnote{
In the initial conference version of this work, we presented a bound of $c+O(\log c)$ on the BPOA. Subsequently, this bound was independently improved by Lucier \cite{L-thesis11} to $c + O(c^2/e^c)$ and by  
Syrgkanis and Tardos \cite{SyrgkanisT13} to $c + O(c/e^c)$.  We present here a slightly modifiied version of the argument from Lucier \cite{L-thesis11}, which yields the improved Syrgkanis and Tardos \cite{SyrgkanisT13} bound of $c + O(c/e^c)$.} 
$\frac{c}{1-e^{-c}}$ for every independent type distribution $\dists$.
\end{thm}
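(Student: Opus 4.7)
The plan is to establish a Bayesian smoothness-style bound via a carefully constructed randomized deviation and combine it with Lemma~\ref{lem.greedy}. Fix a mixed BNE $\bids$ for the (independent) type distribution $\tdists$, and for each type profile $\types$ fix an optimal allocation $\mathbf{y}^{\ast}(\types)$ with agent $i$ receiving bundle $y_i^{\ast}(\types)$. For each agent $i$ with type $\typei$, define the randomized deviation $\decli^{\ast}(\typei)$: sample $\types'_{-i} \sim \tdists_{-i}$ independently of everything; sample $\lambda$ independently from the density $g(\lambda) = \frac{1}{c(1-\lambda)}$ on $[0,\, 1 - e^{-c}]$ (which integrates to $1$); then submit the single-minded declaration for $y_i^{\ast}(\typei, \types'_{-i})$ at value $\lambda \cdot \typei(y_i^{\ast}(\typei, \types'_{-i}))$.

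First I would apply the BNE inequality to this valid deviation and take expectation over $\typei$, giving $\E[\utili(\bids(\types))] \geq \E[\utili(\decli^{\ast}, \bids_{-i}(\types_{-i}))]$. By component-wise monotonicity and the definition of critical prices, the single-minded deviation wins $y_i^{\ast}(\typei, \types'_{-i})$ whenever $\lambda \cdot \typei(y_i^{\ast}(\typei, \types'_{-i})) > \criti(y_i^{\ast}(\typei, \types'_{-i}), \bids_{-i}(\types_{-i}))$, giving utility $(1-\lambda)\typei(y_i^{\ast}(\typei, \types'_{-i}))$; in every other case the utility is nonnegative (any set the agent is instead allocated either contains $y_i^{\ast}$, so by monotonicity of $\typei$ the utility is at least $(1-\lambda)v$, or has $\decli^{\ast}$ equal to zero). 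I would then symmetrize using that $\types_{-i}$ and $\types'_{-i}$ are i.i.d.\ copies of $\tdists_{-i}$, swapping the two labels in the expectation. Writing $y_i = y_i^{\ast}(\types)$, $v = \typei(y_i)$, $\decls'_{-i} = \bids_{-i}(\types'_{-i})$, and $C = \criti(y_i, \decls'_{-i})$, the inequality becomes
\[
\E[\utili(\bids(\types))] \;\ge\; \E\bigl[(1-\lambda)\, v \cdot \mathbf{1}[\lambda v > C]\bigr].
\]

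The next step is a short pointwise calculation: for all $v, C \ge 0$,
\[
\int_0^{1-e^{-c}} g(\lambda)(1-\lambda)\, v \cdot \mathbf{1}[\lambda v > C]\, d\lambda \;\ge\; \frac{(1-e^{-c})\, v - C}{c}.
\]
This is immediate from $g(\lambda)(1-\lambda) = 1/c$ on the support: when $C/v \le 1 - e^{-c}$ the integral equals $\frac{(1-e^{-c})v - C}{c}$, and when $C/v > 1 - e^{-c}$ the right-hand side is nonpositive while the integral is at least $0$. Taking expectations and summing over $i$,
\[
\sum_i \E[\utili(\bids(\types))] \;\ge\; \frac{1-e^{-c}}{c}\, SW_{opt}(\tdists) \;-\; \frac{1}{c} \sum_i \E[\criti(y_i, \decls'_{-i})].
\]

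To close, apply Lemma~\ref{lem.greedy} pointwise to the declaration profile $\decls'$ and the allocation $\mathbf{y}^{\ast}(\types)$, obtaining $\sum_i \decli'(\alloci(\decls')) \ge \frac{1}{c} \sum_i \criti(y_i, \decls'_{-i})$; taking expectations and noting that $\decls$ and $\decls'$ are identically distributed yields $\sum_i \E[\criti(y_i, \decls'_{-i})] \le c \cdot \E[\sum_i \decli(\alloci(\decls))]$, i.e., $c$ times the expected revenue. Since the welfare decomposes as $SW_{\mech_1(\alg)}(\tdists, \bids) = \sum_i \E[\utili(\bids(\types))] + \E[\sum_i \decli(\alloci(\decls))]$, substituting and observing the revenue terms cancel gives $SW_{\mech_1(\alg)}(\tdists, \bids) \ge \frac{1-e^{-c}}{c}\, SW_{opt}(\tdists)$, which is the claim. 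The main obstacle is the choice of the density $g$ together with the careful symmetrization over the independent resample $\types'_{-i}$; both are short once the construction is in hand, and every other step reduces to manipulating the BNE inequality and invoking Lemma~\ref{lem.greedy}.
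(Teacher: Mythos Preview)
Your proof is correct. It reaches the same $\frac{c}{1-e^{-c}}$ bound as the paper, but through the Syrgkanis--Tardos ``smooth mechanism'' style of an explicit randomized deviation rather than the paper's route of first bounding the critical-price distribution. The paper (Lemma~\ref{lem.bayes.k}) considers deterministic single-minded deviations at each level $z$, uses $\Pr[\criti(S,\bidsmi) \le z] \le \E[\utili]/(\vali(S)-z)$, and integrates to obtain $\E[\criti(S,\bidsmi)] \ge (1-e^{-c})\vali(S) - c\,\E[\utili]$ (Corollary~\ref{cor.bayes.k}); this is then plugged into a summing lemma (Lemma~\ref{lem.bayes.bound}). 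Your density $g(\lambda) = \tfrac{1}{c(1-\lambda)}$ on $[0,1-e^{-c}]$ is exactly the choice that makes $(1-\lambda)g(\lambda)$ constant, so your single randomized deviation encodes the paper's integration in one step and yields the same per-agent inequality. Both arguments use the fresh-resample symmetrization and Lemma~\ref{lem.greedy} identically. One small bonus of your packaging: by writing welfare as utility plus first-price revenue and letting the revenue terms cancel, you avoid the detour through the no-overbidding consequence (Corollary~\ref{cor.firstprice.overbid}) that the paper's Lemma~\ref{lem.bayes.bound} uses to pass from $\decli(\alloci)$ to $\typei(\alloci)$.
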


We note that  $\frac{c}{1-e^{-c}} \leq c\left(1+\frac{2}{e^c}\right) = c + O(c/e^c)$.  The remainder of this subsection is dedicated to the proof of Theorem \ref{thm.mixedNE}.

Fix a product distribution $\dists$ over type profiles and let $\bids(\cdot)$ be a (possibly mixed) Bayes-Nash equilibrium with respect to $\dists$.  
Choose some type declaration $\types$ and let $\mathbf{y}^\types$ denote an optimal allocation for $\types$.
Following the proof of Theorem \ref{thm.pureNE}, we would like to bound the expected value of $\criti(y_i^\types,\declsmi)$ with respect to $\typei(y_i^{\types})$ and $\utili(\bids(\types))$ for each $i$.  We encapsulate this bound in Lemma \ref{lem.bayes.k} and Corollary \ref{cor.bayes.k}, below.  This will allow us to use Lemma \ref{lem.greedy} to obtain a relation between the expected welfare of $\alg$ and the expected optimal welfare; this relationship is given in Lemma \ref{lem.bayes.bound}.

\begin{lem}
\label{lem.bayes.bound}
Suppose that $\alg$ is a $c$-approximate 
monotone strongly loser independent allocation rule
and that there exist constants $\gamma \geq 0$ and $\sigma_i \in [0,c]$ for $i \in [n]$ such that, whenever $\bids$ is a Bayes-Nash equilibrium for $\mech_1(\alg)$, it is the case that for all $i$, all $\typei$, and all $S \subseteq M$,
\begin{align*}
   \E_{\typesmi}[\criti(S,\bidsmi(\typesmi))] & \geq \gamma\typei(S) - \sigma_i \E_{\typesmi}[\utili(\bids(\types))].
\end{align*}
Then $\E_{\types}[SW(\alg(\bids(\types)),\types)] \geq \frac{\gamma}{c}\E_{\types}[SW_{OPT}(\types)]$.
\end{lem}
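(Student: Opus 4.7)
The plan is to combine Lemma~\ref{lem.greedy} with the hypothesis via a standard resampling argument, using the first-price payment structure to tie the pieces together. The key algebraic identity is that in a first-price mechanism the winning bid equals the payment, so $\sum_i \bidi(\typei)(\algi(\bids(\types))) = SW(\alg(\bids(\types)),\types) - \sum_i \utili(\bids(\types))$; moreover, by Corollary~\ref{cor.firstprice.overbid}, $\utili(\bids(\types)) \geq 0$ almost surely at any BNE.

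First I would apply Lemma~\ref{lem.greedy} pointwise with $\vals = \bids(\types)$ and with the feasible allocation profile $\mathbf{y} = y^\types$ attaining $SW_{OPT}(\types)$; taking expectations and substituting the payment identity above yields
$$c\Bigl(\E_\types[SW(\alg(\bids(\types)),\types)] - \sum_i \E_\types[\utili(\bids(\types))]\Bigr) \;\geq\; \E_\types\Bigl[\sum_i \criti(y_i^\types,\bidsmi(\typesmi))\Bigr].$$

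Second, I would lower-bound the right-hand side using the hypothesis. Because the hypothesis assumes $S$ is fixed while $y_i^\types$ depends on $\typesmi$, a resampling step is needed: for each fixed $\typei$, I would weight the hypothesis inequality (indexed by $S$) by the conditional probability $\Pr_{\typesmi}[y_i^{(\typei,\typesmi)} = S]$ and sum over $S$. Averaging over $\typei$ and summing over $i$ then gives
$$\E_\types\Bigl[\sum_i \criti(y_i^\types,\bidsmi(\typesmi))\Bigr] \;\geq\; \gamma\,\E_\types[SW_{OPT}(\types)] - \sum_i \sigma_i\,\E_\types[\utili(\bids(\types))].$$

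Combining these two displays and rearranging,
$$c\,\E_\types[SW(\alg(\bids(\types)),\types)] \;\geq\; \gamma\,\E_\types[SW_{OPT}(\types)] + \sum_i (c - \sigma_i)\,\E_\types[\utili(\bids(\types))].$$
Since $\sigma_i \in [0,c]$ and $\E[\utili] \geq 0$ at BNE (deviating to the zero bid guarantees utility zero, so no equilibrium strategy has negative expected utility), the extra sum on the right is non-negative, giving $\E_\types[SW(\alg(\bids(\types)),\types)] \geq (\gamma/c)\,\E_\types[SW_{OPT}(\types)]$ as required. I expect the resampling step to be the delicate part of the proof: the hypothesis is phrased for deterministic $S$, so one must be careful about how the randomness in $y_i^\types$ interacts with the expectation over $\typesmi$ that appears on both sides of the hypothesis inequality.
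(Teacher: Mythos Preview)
Your suspicion about the resampling step is well founded: there is a genuine gap precisely there, and it breaks the chain between your two displayed inequalities.

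When you weight the hypothesis by $\Pr_{\typesmi}[y_i^{(\typei,\typesmi)}=S]$ and sum over $S$, the left-hand side you obtain is
\[
\sum_S \Pr_{\typesmi}[y_i^{\types}=S]\cdot \E_{\typesmi'}\bigl[\criti(S,\bidsmi(\typesmi'))\bigr]
= \E_{\typesmi}\E_{\typesmi'}\bigl[\criti(y_i^{(\typei,\typesmi)},\bidsmi(\typesmi'))\bigr],
\]
with $\typesmi$ and $\typesmi'$ \emph{independent} draws from $\distsmi$. The $\typesmi$ that selects $y_i^\types$ has already been integrated out inside the hypothesis, so it cannot be the same $\typesmi$ that appears in $\bidsmi(\typesmi)$. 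Hence your second display actually lower-bounds $\E_{\types}\E_{\typesmi'}[\sum_i \criti(y_i^\types,\bidsmi(\typesmi'))]$, not $\E_{\types}[\sum_i \criti(y_i^\types,\bidsmi(\typesmi))]$. But your first display (from Lemma~\ref{lem.greedy} with $\vals=\bids(\types)$) upper-bounds the \emph{coupled} quantity $\E_{\types}[\sum_i \criti(y_i^\types,\bidsmi(\typesmi))]$. These two expectations are different in general---the event $y_i^\types=S$ and the critical price $\criti(S,\bidsmi(\typesmi))$ are both functions of the same $\typesmi$, and can be correlated---so the inequalities do not compose.

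The fix, which is exactly what the paper does, is to apply Lemma~\ref{lem.greedy} to the \emph{resampled} profile rather than the original one: for each fixed $(\types,\types')$, invoke Lemma~\ref{lem.greedy} with valuation profile $\decls'=\bids(\types')$ and allocation profile $\mathbf{y}^\types$, yielding $\sum_i \decli'(\alloci(\decls'))\ge \frac{1}{c}\sum_i\criti(y_i^\types,\bidsmi(\typesmi'))$. Taking expectations over both $\types$ and $\types'$ then upper-bounds precisely the decoupled quantity that the resampled hypothesis lower-bounds. After that, your closing algebra (using $\sigma_i\le c$ and either $\utili\ge 0$ or, equivalently, $\decli(\alloci)\le\typei(\alloci)$ from Corollary~\ref{cor.firstprice.overbid}) goes through unchanged.
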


\begin{lem}
\label{lem.bayes.k}
Suppose that $\bids$ is a Bayes-Nash equilibrium for mechanism $\mech_1(\alg)$ and distribution $\dists$.  Then for all $i$, all $\typei$, and all $S \subseteq M$,
\begin{align*}
   \E_{\typesmi}[\criti(S,\bidsmi(\typesmi))] \geq 
       \typei(S) - \left(1+\ln \frac{\typei(S)}{\E_{\typesmi}[\utili(\bids(\types))]}\right)\E_{\typesmi}[\utili(\bids(\types))].
\end{align*}
\end{lem}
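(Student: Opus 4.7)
The plan is to apply the Bayes--Nash equilibrium condition to a family of single-minded deviations for agent $i$, parameterized by a bid value $z \in (0, \typei(S))$, and thereby derive an upper bound on the cumulative distribution function of the random variable $\criti(S, \bidsmi(\typesmi))$. Integrating this tail bound will then yield the claimed lower bound on its expectation. The approach parallels the standard argument for first-price auctions, adapted to the multi-parameter combinatorial setting via the critical value characterization and monotonicity.

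Fix $i$, $\typei$, and $S$, and abbreviate $U := \E_{\typesmi}[\utili(\bids(\types))]$. For each $z \in (0, \typei(S))$, consider the pure deviation in which agent $i$ submits the single-minded declaration $\decli^z$ with $\decli^z(T) = z$ for $T \supseteq S$ and $\decli^z(T) = 0$ otherwise. By the definition of $\criti$ and the component-wise monotonicity of $\alg$, whenever $z > \criti(S, \bidsmi(\typesmi))$ agent $i$ is allocated a set containing $S$ under $(\decli^z, \bidsmi(\typesmi))$ and, under the first-price rule, pays exactly $z$, yielding utility at least $\typei(S) - z$; in every other realization agent $i$'s utility is at least $0$ (either he is allocated a set on which $\decli^z$ vanishes, or he is allocated a superset of $S$ at price $z \leq \typei(S)$). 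Setting $F(z) := \Pr_{\typesmi}[\criti(S, \bidsmi(\typesmi)) < z]$, the expected utility from this deviation is at least $(\typei(S) - z) F(z)$, and the BNE condition on $\bids$ gives
\[ F(z) \;\le\; \frac{U}{\typei(S) - z} \qquad \text{for every } z \in (0, \typei(S)). \]

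From here I would simply integrate. Using the standard identity $\E[X] = \int_0^\infty \Pr[X > t]\, dt$ for non-negative $X$,
\[ \E_{\typesmi}[\criti(S, \bidsmi(\typesmi))] \;\ge\; \int_0^{\typei(S)} (1 - F(t))\, dt \;=\; \typei(S) - \int_0^{\typei(S)} F(t)\, dt. \]
In the substantive case $U \le \typei(S)$, split the integral at $t^\star = \typei(S) - U$, which is exactly the point where the bound $U/(\typei(S)-t)$ crosses $1$: use $F(t) \le U/(\typei(S) - t)$ on $[0, t^\star]$ and $F(t) \le 1$ on $[t^\star, \typei(S)]$. With the substitution $s = \typei(S) - t$, a direct computation gives
\[ \int_0^{\typei(S)} F(t)\, dt \;\le\; U \!\int_U^{\typei(S)}\! \frac{ds}{s} \;+\; U \;=\; U\bigl(1 + \ln(\typei(S)/U)\bigr), \]
which upon substitution into the previous display yields the claimed inequality.

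The main delicate point is choosing the splitting threshold $t^\star = \typei(S) - U$ precisely, so the logarithmic factor emerges with exactly the argument $\typei(S)/U$; any other split introduces an extra additive constant that propagates downstream and would degrade the final $c/(1-e^{-c})$ bound in Theorem~\ref{thm.mixedNE}. The regime $U > \typei(S)$ needs separate (essentially trivial) treatment, as the BNE argument there only produces the vacuous $F \le 1$; this case is inconsequential for the downstream application via Lemma~\ref{lem.bayes.bound}, whose hypothesis $\sigma_i \in [0,c]$ implicitly restricts attention to a bounded range of $U/\typei(S)$.
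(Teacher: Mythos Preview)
Your proposal is correct and essentially identical to the paper's proof: both consider the single-minded deviation bidding $z$ on $S$, extract the CDF bound $\Pr[\criti \leq z] \leq U/(\typei(S)-z)$ from the BNE condition, and then integrate with the split at $\typei(S)-U$ to produce the logarithmic term. The paper's argument matches yours line-for-line (same substitution, same computation), and it likewise does not separately handle the regime $U > \typei(S)$.
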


Before proving Lemmas \ref{lem.bayes.bound} and \ref{lem.bayes.k}, let us show how they imply Theorem \ref{thm.mixedNE}.  We first note the following simple corollary of Lemma \ref{lem.bayes.k}.

\begin{cor}
\label{cor.bayes.k}
Suppose that $\bids$ is a Bayes-Nash equilibrium for mechanism $\mech_1(\alg)$ and distribution $\dists$.  Then for all $i$, all $\typei$, and all $S \subseteq M$,
\begin{align*}
   \E_{\typesmi}[\criti(S,\bidsmi(\typesmi))] \geq (1 - e^{-c}) \cdot \typei(S) - c \cdot \E_{\typesmi}[\utili(\bids(\types))].
\end{align*}
\end{cor}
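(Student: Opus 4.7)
The plan is to derive the corollary from Lemma~\ref{lem.bayes.k} by a one-variable optimization; the argument is essentially a calculus exercise. Let me abbreviate $v = \typei(S)$ and $u = \E_{\typesmi}[\utili(\bids(\types))]$. Lemma~\ref{lem.bayes.k} gives the bound
\[ \E_{\typesmi}[\criti(S,\bidsmi(\typesmi))] \geq v - u - u \ln(v/u), \]
while the corollary asserts the linear-in-$(v,u)$ bound $(1-e^{-c})v - cu$. Since the left side is the same, it suffices to show the purely analytic inequality
\[ v - u - u \ln(v/u) \geq (1-e^{-c}) v - c u \quad \text{for all } v \geq 0,\ u > 0. \]

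The key step is to reduce this to a single-variable inequality. Dividing through by $u$ and setting $t = v/u > 0$, the desired inequality becomes
\[ f(t) \;\defeq\; e^{-c} t + (c-1) - \ln t \;\geq\; 0 \qquad \text{for all } t > 0. \]
This is the main (and only nontrivial) calculation. I would differentiate: $f'(t) = e^{-c} - 1/t$, which vanishes uniquely at $t^\star = e^c$, and since $f''(t) = 1/t^2 > 0$, this is the global minimum. Evaluating, $f(e^c) = e^{-c}\cdot e^c + (c-1) - c = 1 + c - 1 - c = 0$. Hence $f(t) \geq 0$ for all $t > 0$, which is exactly the required inequality. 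Notice how the choice of the constants $1-e^{-c}$ and $c$ in the corollary is precisely engineered so that the minimum of $f$ is zero --- this is the tightest possible linear bound of this form.

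Finally, I would dispatch the degenerate cases. If $u = 0$, the bound in Lemma~\ref{lem.bayes.k} should be interpreted via the limit $u \to 0^+$ (so that $u \ln(v/u) \to 0$ when $v > 0$), which gives $\E[\criti(S,\bidsmi(\typesmi))] \geq v \geq (1-e^{-c})v - c \cdot 0$. If $v = 0$, both sides of the asserted inequality are at most zero, while the critical value is nonnegative by definition, so the bound holds trivially. The main obstacle --- if one can call it that --- is simply identifying the right substitution $t = v/u$ that reduces a two-variable claim to a clean one-variable minimization; everything else is routine.
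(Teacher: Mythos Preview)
Your proof is correct and takes essentially the same approach as the paper: both start from Lemma~\ref{lem.bayes.k} and reduce the desired linear bound to a one-variable inequality in the ratio $t=v/u$, which is tight exactly at $t=e^{c}$. The paper packages the optimization via a case split on whether $1+\ln(v/u)\le c$ and, in the nontrivial case, the substitution $\alpha = 1+\ln(v/u)-c$ together with the bound $\alpha/e^{\alpha}\le 1/e$; your direct minimization of $f(t)=e^{-c}t+(c-1)-\ln t$ is a cleaner presentation of the same computation (and your explicit handling of the degenerate cases $u=0$, $v=0$ is a small bonus the paper leaves implicit).
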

\begin{proof}
Fix agent $i$.  By Lemma \ref{lem.bayes.k}, we know
\begin{align}
\label{eq.cor.bayes}
   \E_{\typesmi}[\criti(S,\bidsmi(\typesmi))] \geq 
       \typei(S) - \left(1+\ln \frac{\typei(S)}{\E_{\typesmi}[\utili(\bids(\types))]}\right)\E_{\typesmi}[\utili(\bids(\types))].
\end{align}
Note that if $\left(1+\ln \frac{\typei(S)}{\E_{\typesmi}[\utili(\bids(\types))]}\right) \leq c$ then \eqref{eq.cor.bayes} immediately implies the desired result.  We can therefore assume otherwise, and choose $\alpha > 0$ such that 
\[ \left(1+\ln \frac{\typei(S)}{\E_{\typesmi}[\utili(\bids(\types))]}\right) = c + \alpha. \] 
Rearranging, we get that $\typei(S) = e^\alpha \cdot e^{c-1} \cdot \E_{\typesmi}[\utili(\bids(\types))]$.  Applying these two equalities to \eqref{eq.cor.bayes}, we have
\begin{align*}
\E_{\typesmi}[\criti(S,\bidsmi(\typesmi))] & \geq \typei(S) - (c + \alpha) \cdot \E_{\typesmi}[\utili(\bids(\types))] \\
& = \typei(S) - \frac{\alpha}{e^\alpha}\cdot \frac{\typei(S)}{e^{c-1}} - c \cdot \E_{\typesmi}[\utili(\bids(\types))].
\end{align*}
Since $\frac{\alpha}{e^{\alpha}}$ achieves its maximum value of $1/e$ at $\alpha = 1$, we can conclude that
\[ \E_{\typesmi}[\criti(S,\bidsmi(\typesmi))] \geq \typei(S) - \frac{1}{e^c}\cdot \typei(S) - c \cdot \E_{\typesmi}[\utili(\bids(\types))] \]
as required.
\end{proof}


Theorem \ref{thm.mixedNE} follows directly from Corollary \ref{cor.bayes.k} and Lemma \ref{lem.bayes.bound}.
%
%
We next complete the proof of Theorem \ref{thm.mixedNE} by proving Lemmas \ref{lem.bayes.bound} and \ref{lem.bayes.k}.

\begin{proofof}{Lemma \ref{lem.bayes.bound}}
Fix distribution $\dists$ over type profiles and let $\bids(\cdot)$ be a (possibly mixed) Bayes-Nash equilibrium with respect to $\dists$.  
Choose some type declaration $\types$ and let $\mathbf{y}^\types$ denote an optimal allocation for $\types$.
We know that for all $i \in [n]$ and $\types$,
\begin{align*}
\E_{\typesmi'}[\criti(y_i^\types,\bidsmi(\typesmi'))] & \geq \gamma\typei(y_i^\types) - \sigma_i \E_{\typesmi'}[\utili(\bidi(\typei), \bidsmi(\typesmi')))].
\end{align*}
Note the distinction between $\typesmi'$, over which we are taking expectations, and $\typesmi$, which is the type profile fixed to define $y_i^\types$.  Now, summing over $i$ and taking expectation over all choices of $\types$, we have
\begin{equation}
\begin{split}
\label{eq.bne.main}
\E_{\types}\left[\sum_i\E_{\typesmi'}[\criti(y_i^\types,\bidsmi(\typesmi'))]\right]
\geq \gamma\E_{\types}\left[\sum_i\typei(y_i^\types)\right] 
- \E_{\types}\left[\sum_i \sigma_i \E_{\typesmi'}[\utili(\bidi(\typei),\bidsmi(\typesmi'))]\right].
\end{split}
\end{equation}
We now consider each of the three terms in \eqref{eq.bne.main}.  First, note that
\begin{equation}
\label{eq.bne.main.1}
\E_{\types}\left[\sum_i\typei(y_i^{\types})\right] = \E_{\types}[SW_{OPT}(\types)].
\end{equation}
Additionally,
\begin{equation}
\label{eq.bne.main.2}
\begin{split}
\E_{\types}\left[\sum_i \sigma_i \E_{\typesmi'}[\utili(\bidi(\typei),\bidsmi(\typesmi'))]\right]
& = \sum_i \sigma_i \E_{\types,\typesmi'}[\utili(\bidi(\typei),\bidsmi(\typesmi'))] \\ 
& = \E_{\types}\left[\sum_i \sigma_i \utili(\bids(\types))\right] \\ 
& = \E_{\types}\left[\sum_i \sigma_i \typei(\alloci(\bids(\types)))\right] - \E_{\types,\decls = \bids(\types)}\left[\sum_i \sigma_i \decli(\alloci(\bids(\types)))\right]
\end{split}
\end{equation}
where the final equality follows from the fact that our mechanism employs a first price payment scheme.  Finally,
\begin{equation}
\label{eq.bne.main.3}
\begin{split}
& \E_{\types}\left[\sum_i\E_{\typesmi'}[\criti(y_i^\types,\bidsmi(\typesmi'))]\right] \\
& \quad= \E_{\types,\types'}\left[\sum_i\criti(y_i^\types,\bidsmi(\typesmi'))\right] \ \ (\text{type independence}) \\
& \quad \leq c\E_{\types,\types',\decls' = \bids(\types')}\left[\sum_i\decli'(\alloci(\decls'))\right] \ \ (\text{Lemma \ref{lem.greedy}}) \\
& \quad = c\E_{\types,\decls = \bids(\types)}\left[\sum_i\decli(\alloci(\decls))\right] 
\end{split}
\end{equation}
Where the final equality follows from a change of variables, since $\vals$ does not appear inside the expectation on the previous line.
Substituting \eqref{eq.bne.main.1}, \eqref{eq.bne.main.2}, and \eqref{eq.bne.main.3} into \eqref{eq.bne.main}, we conclude that
\begin{align*}
c \E_{\types}\left[\sum_i\decli(\alloci(\decls))\right] \geq 
\gamma\E_{\types}[SW_{OPT}(\types)] - \E_{\types}\left[\sum_i \sigma_i \typei(\alloci(\bids(\types)))\right] 
+ \E_{\types,\decls=\bids(\types)}\left[\sum_i \sigma_i \decli(\alloci(\bids(\types)))\right]
\end{align*}
and hence
\begin{align*}
\gamma\E_{\types}[SW_{OPT}(\types)]
& \leq \E_{\types}\left[\sum_i \sigma_i \typei(\alloci(\bids(\types)))\right] + \E_{\types, \decls = \bids(\vals)}\left[\sum_i(c-\sigma_i)\decli(\alloci(\decls))\right] \\
& \leq \E_{\types}\left[\sum_i \sigma_i \typei(\alloci(\bids(\types)))\right] + \E_{\types}\left[\sum_i(c-\sigma_i)\typei(\alloci(\bids(\vals)))\right] \\
& = \E_{\types}\left[\sum_i c \typei(\alloci(\bids(\types)))\right] \\
& = c\E_{\types}[SW(\alg(\bids(\types)),\types)] 
\end{align*}
where in the second inequality we used Corollary \ref{cor.firstprice.overbid} plus the fact that $(c - \sigma_i) \geq 0$ for all $i$.  Rearranging yields
\begin{align*}
\E_{\types}[SW(\alg(\bids(\types)),\types)] \geq \frac{\gamma}{c}\E_{\types}[SW_{OPT}(\types)]
\end{align*}
as required.
\end{proofof}

\begin{proofof}{Lemma \ref{lem.bayes.k}}
Fix any $i$, $\typei$, and $S$.  
%
Since $\criti(S,\declsmi) \geq 0$ for all $\declsmi$, we have that
\begin{align*}
\E_{\typesmi}[\criti(S,\bidsmi(\typesmi))] 
& \geq \int_0^{\vali(S)} \Pr[ \criti(S,\bidsmi(\typesmi)) > z] dz \\
& = \vali(S) - \int_0^{\vali(S)} \Pr[ \criti(S,\bidsmi(\typesmi)) \leq z] dz.
\end{align*}
Recall that $\bidi(\typei)$ must maximize the expected utility of agent $i$.  Choose any $z \geq 0$ and consider the alternative strategy $\decli$ which places a single-minded bid of $z$ on set $S$.  Then since $\bidi(\typei)$ is an optimal strategy, we have that 
\begin{align*}
\E_{\typesmi}[\utili(\bids(\types))] 
& \geq \E_{\typesmi}[\utili(\decli,\bidsmi(\typesmi))] \\
& = (\vali(S) - z) \Pr[ \criti(S,\bidsmi(\typesmi)) \leq z ] 
\end{align*}
where the equality follows since any single minded bid above the critical value for $S$ insures that $S$ will be won, as a consequence of monotonicity.  We conclude that 
\[ \Pr[ \criti(S,\bidsmi(\typesmi)) \leq z ] \leq \frac{\E_{\typesmi}[\utili(\bids(\types))] }{(\vali(S) - z)}\] 
for all $0 \leq z < \vali(S)$. We also know that $\Pr[ \criti(S,\bidsmi(\typesmi)) \leq z ] \leq 1$ for all $z$.   Write $r = \vali(S) - \E_{\typesmi}[\utili(\bids(\types))]$.  We then conclude that
\begin{align*}
\E_{\typesmi}[\criti(S,\bidsmi(\typesmi))] 
& \geq \vali(S) - \int_0^{r} \frac{\E_{\typesmi}[\utili(\bids(\types))] }{(\vali(S) - z)} dz - \int_{r}^{\vali(S)} 1 dz \\
& = \vali(S) - \E_{\typesmi}[\utili(\bids(\types))] \int_{\E_{\typesmi}[\utili(\bids(\types))]}^{\vali(S)} \frac{1}{y} dy - \E_{\typesmi}[\utili(\bids(\types))] \\
& = \vali(S) - \left(1+\ln \frac{\typei(S)}{\E_{\typesmi}[\utili(\bids(\types))]}\right)\E_{\typesmi}[\utili(\bids(\types))]
\end{align*}
as required.
\end{proofof}

\begin{cor}{(of proof)}
\label{cor:coarse}
The same bound on price of anarchy applies to coarse correlated equilibrium. 
\end{cor}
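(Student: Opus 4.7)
The plan is to mirror the proof of Theorem \ref{thm.mixedNE}, replacing the Bayes--Nash deviation with the CCE deviation and replacing expectations over type profiles by expectations over the joint declaration distribution $\sdists$. Fix a type profile $\types$, a CCE $\sdists$ for $\types$, and an optimal allocation $\mathbf{y}$ for $\types$. First, I would establish the CCE analog of Lemma \ref{lem.bayes.k}: for each agent $i$ and each $z \in [0, \vali(y_i))$, consider the deviation in which player $i$ places a single-minded bid of $z$ on $y_i$. Monotonicity of $\alg$ implies this deviation wins $y_i$ whenever $z > \criti(y_i, \declsmi)$, so the CCE condition \eqref{eq:ccne} yields
\[
   \E_{\decls \sim \sdists}[\utili(\decls)] \;\geq\; (\vali(y_i) - z)\cdot \Pr_{\decls \sim \sdists}[\criti(y_i, \declsmi) \leq z].
\]
Integrating in $z$ exactly as in the proof of Lemma \ref{lem.bayes.k}, and then applying the two-case argument of Corollary \ref{cor.bayes.k}, gives $\E_{\decls\sim\sdists}[\criti(y_i,\declsmi)] \geq (1-e^{-c})\vali(y_i) - c\cdot \E_{\decls\sim\sdists}[\utili(\decls)]$.

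Next, I would apply Lemma \ref{lem.greedy} pointwise to the declaration profile $\decls$ with allocation $\mathbf{y}$: for every $\decls$, $\sum_i \criti(y_i, \declsmi) \leq c \sum_i \decli(\alg_i(\decls))$. Taking expectations under $\sdists$ and combining with the sum over $i$ of the CCE analog of Corollary \ref{cor.bayes.k} gives
\[
  c\cdot \E_\sdists\!\left[\sum_i \decli(\alg_i(\decls))\right] \;\geq\; (1-e^{-c}) \cdot SW_{opt}(\types) - c\cdot \E_\sdists\!\left[\sum_i \utili(\decls)\right].
\]
Substituting $\utili(\decls) = \vali(\alg_i(\decls)) - \decli(\alg_i(\decls))$, the bid terms cancel on both sides, leaving $c\cdot \E_\sdists[SW(\alg(\decls),\types)] \geq (1-e^{-c})\cdot SW_{opt}(\types)$, i.e., the CCE price of anarchy is at most $c/(1-e^{-c})$.

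The main obstacle is conceptual rather than technical: one must notice that using the strong form of the deviation bound, i.e. the choice $\sigma_i = c$ in the template of Lemma \ref{lem.bayes.bound} that yields Corollary \ref{cor.bayes.k}, already absorbs the contribution of the bids into the expected utility term, so the cancellation above renders the no-overbidding step from Corollary \ref{cor.firstprice.overbid} unnecessary. This is fortunate, since Corollary \ref{cor.firstprice.overbid} is derived from the unilateral best-response structure of BNE and does not extend verbatim to joint CCE distributions. Once this is observed, the translation of the Bayesian proof is essentially routine: Bayesian expectations over type profiles become expectations under $\sdists$, and the one-player deviation inequality needed in the first step is precisely what the CCE condition provides.
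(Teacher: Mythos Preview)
Your proposal is correct and follows essentially the same approach as the paper's proof, which is simply to rerun the proof of Theorem~\ref{thm.mixedNE} with the BNE expectations $\E_{\types,\decls=\bids(\types)}[\cdot]$ replaced by the CCE expectation $\E_{\decls\sim\sdists}[\cdot]$ and the BNE best-response inequality replaced by the CCE deviation inequality \eqref{eq:ccne}. Your write-up is in fact more careful than the paper's terse one-line argument: you explicitly note that because Corollary~\ref{cor.bayes.k} instantiates Lemma~\ref{lem.bayes.bound} with $\sigma_i=c$, the coefficient $(c-\sigma_i)$ multiplying the declared-value term vanishes and the appeal to Corollary~\ref{cor.firstprice.overbid} becomes unnecessary---a worthwhile observation, since that corollary is phrased for BNE strategies $\bidi(\typei)$ and has no verbatim analogue for a joint CCE distribution. (As a side remark, the weaker statement $\E_{\decls\sim\sdists}[\decli(\alg_i(\decls))]\le \E_{\decls\sim\sdists}[\typei(\alg_i(\decls))]$ actually does hold at any CCE of the first-price mechanism, since deviating to the zero bid guarantees nonnegative utility; so the paper's terse argument also goes through even without your cancellation observation.)
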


\begin{proof} 
That such price of anarchy  bounds can be applied to coarse correlated 
equlibria in the full information setting  was initially observed by Roughgarden \cite{ROU-09}. Specifically,  
in the proof of Theorem~\ref{thm.mixedNE}, all occurences of 
$\E_{\types,\decls=\bids(\types)}$ can be replaced by 
$\E_{\decls \sim (\decli',\sdistsmi)}$, resulting in a bound on the coarse correlated PoA.
\end{proof} 


It may be tempting to conjecture that the (exponentially small) loss in approximation factor in Theorem \ref{thm.mixedNE} is simply an artifact of the analysis, and that the Bayesian price of anarchy of $\mech_1(\alg)$ is actually $c$.  However, we now show by way of an example that this loss is necessary;
that is, there exist instances in which the 
mixed price of anarchy (and hence the Bayesian price of anarchy) is strictly greater than $c$. 


\begin{prop}
\label{prop.mixedNE.LB}
For any $c \geq 2$, there is a combinatorial allocation problem $\calp$ and a non-adaptive greedy algorithm $\alg$ such that $\alg$ is a $c$-approximation for $\calp$, and the mixed price of anarchy for $\calm_1(\alg)$ is 
at least $c + c^2/e^{4c} = c + \Om(c^2/e^{4c})$. 
\end{prop}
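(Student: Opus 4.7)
The approach is constructive: for each $c\ge 2$, I will exhibit a specific combinatorial allocation problem $\calp$, a non-adaptive greedy rule $\alg$ that is $c$-approximate for $\calp$, a valuation profile, and a mixed Nash equilibrium of $\mech_1(\alg)$ whose expected welfare witnesses the claimed price of anarchy.

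For the instance I take a cardinality-restricted combinatorial auction on $c$ items $\{1,\dots,c\}$ (feasibility: allocated sets are disjoint, each of size at most $c$), with $\alg$ the value-based priority greedy $r(i,S,v)=v$. This rule is known to be $c$-approximate on this class. The valuation profile has one "big" single-minded bidder $B$ who demands the full bundle $M$ at value $V\in(1,c)$ (a parameter to be tuned in terms of $c$) and $c$ "small" single-minded bidders, where bidder $i$ demands $\{i\}$ at value $1$. By strong loser-independence, the outcome of $\mech_1(\alg)$ on any bid profile is dichotomous: either $B$ wins the bundle (exactly when his declared bid exceeds every small bid), or each of the $c$ small bidders wins their singleton.

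I then posit a symmetric mixed Nash equilibrium in which $B$ plays a bundle bid drawn from CDF $F$ and each small bidder plays a singleton bid drawn from a common CDF $G$, both supported on a common interval $[\underline b,\overline b]$. Writing down the standard indifference conditions: (i) the big bidder's utility $(V-b)\,G(b)^c$ must be constant on $\operatorname{supp}(F)$; and (ii) a small bidder's utility at bid $x$, which I compute as $(1-x)\bigl[F(x)\,G(x)^{c-1}+\int_x^{\overline b}F(y)\,dG(y)^{c-1}\bigr]$ by using that small $i$ wins iff $b_B\le\max(x,\max_{j\ne i}b_j)$, must be constant on $\operatorname{supp}(G)$. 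The first condition determines $G$ in closed form on $\operatorname{supp}(F)$, and substituting back into the second condition yields a separable first-order ODE for $F$. The endpoint conditions — atomlessness at the lower end, $F(\overline b)=G(\overline b)=1$, and no profitable deviation just above $\overline b$ — pin $\underline b,\overline b$ and the integration constants, giving explicit $F,G$. I then check that neither agent type has a profitable deviation outside $[\underline b,\overline b]$, which amounts to verifying monotonicity of the two utility expressions along the conjectured support and boundary.

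Finally, letting $p=\Pr[B\text{ wins}]=\int G(b)^c\,dF(b)$ under this equilibrium, the expected welfare of $\mech_1(\alg)$ is $pV+(1-p)c$, while $SW_{OPT}=c$, so the mixed PoA equals $c/(c-p(c-V))$. It remains to choose $V=V(c)$ maximising $p(c-V)$; from the closed forms of $F,G$ I expect $p$ to decay exponentially in $c$ (because competition from $c$ symmetric smalls forces $B$ to shade heavily) while the gap $c-V$ can be kept of constant order, with the optimal tradeoff producing $p(c-V)\ge c^2/e^{4c}$, hence $\mathrm{PoA}\ge c+\Omega(c^2/e^{4c})$. The main obstacle is the detailed bookkeeping in solving the coupled indifference equations and performing the single-variable optimisation over $V$; in particular, I must verify that the conjectured support is an interval rather than requiring atoms or gaps, and extract the exponent $e^{4c}$ (as opposed to the $e^{c}$ that appears in the upper bound) by carefully tracking the constants in $F$ and $G$ at the endpoints.
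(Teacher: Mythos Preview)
Your construction cannot produce a price of anarchy larger than $c$, so the approach fails at a basic level before any of the equilibrium analysis. In your setup the outcome is dichotomous: either $B$ wins the bundle (welfare $V$) or all $c$ small bidders win their singletons (welfare $c$). Hence the expected welfare at \emph{any} strategy profile is $pV+(1-p)c=c-p(c-V)\ge V$, since you take $V\in(1,c)$. Consequently the PoA satisfies
\[
\frac{SW_{OPT}}{\text{welfare}} \;=\; \frac{c}{c-p(c-V)} \;\le\; \frac{c}{V} \;<\; c,
\]
strictly below $c$ regardless of the equilibrium you find. There is also an arithmetic slip in your final step: even if $p(c-V)\ge c^2/e^{4c}$, the ratio $c/(c-\epsilon)$ with $\epsilon=p(c-V)$ is $1+\epsilon/(c-\epsilon)\approx 1+c/e^{4c}$, not $c+\Omega(c^2/e^{4c})$.

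The paper's construction is qualitatively different and addresses exactly this issue. To push PoA \emph{above} $c$ one needs the equilibrium welfare to be \emph{less than} $OPT/c$, which means the ``bad'' equilibrium outcome must already be near the algorithm's worst case and then lose a further small amount due to randomization. The paper achieves this with a layered instance: many ``high-value'' agents $A_i$ (who together would realise $OPT\approx k^2/2$) are each blocked by low-value agents $B_i,C_i,D_i$ whose per-item values are scaled down by $1/c$, so that even when the $A_i$'s abstain the welfare is only about $OPT/c$. The $D_i$'s then randomise (truthful with probability $1/(i+1)$), producing a cascading uncertainty that makes each $A_i$'s deviation unprofitable while shaving off an additional $H_k/c$ from the welfare. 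Choosing $k=e^{2c}$ converts this harmonic-number loss into the claimed $c^2/e^{4c}$ gap. The essential idea you are missing is that the equilibrium must already sit at the algorithm's approximation boundary, with randomization causing a further small degradation; a single big-versus-small contest cannot do this because whichever side wins, the welfare is at least a constant fraction of $OPT$.
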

\begin{proof}
We begin by describing our combinatorial allocation problem.  We choose a parameter $k > c$ that will be fixed later.  Our auction has $ck + k$ objects, which we label $a_{ij}$ for $i \in [k], j \in [c]$ and $b_i$ for $i \in [k]$.  There are $4k$ agents, labelled $A_i, B_i, C_i$, and $D_i$ for $i \in [k]$.  Our feasibility constraints are as follows.  Each agent $B_i$ or $C_i$ can receive only set $\{a_{i1}\}$ or $\emptyset$.  Each agent $D_{i}$ can receive set $\{a_{i1},a_{k1}\}$ or $\emptyset$.  Each agent $A_i$ can receive either set $\{a_{i1},a_{i2},\dotsc,a_{ic}\}$, set $\{b_i\}$, or $\emptyset$.  Under these restrictions, an allocation is feasible if each object is assigned to at most one agent.

Let $\alg$ be the non-adaptive greedy algorithm that orders bids by density: i.e.\ with priority function $r(i,S,v) = v/|S|$ when $S$ is a feasible set for agent $i$.  We claim that when $c \geq 2$, this algorithm obtains a $c$-approximation for the above combinatorial auction.  To see this, note that the (unique) set that can be allocated to any agent $B_i$, $C_i$, or $D_i$ intersects sets of size at most $c$ times larger, so if the greedy algorithm allocates to one of these agents for a value of $v$, the total value of intersecting sets in the optimal solution is at most $cv$.  On the other hand, if the greedy algorithm allocates $\{b_i\}$ to agent $A_i$, this conflicts only with the allocation of set $\{a_{i1}, \dotsc, a_{ic}\}$ to agent $A_i$, which again has value at most $c$ times greater.  Finally, suppose that the greedy algorithm allocates set $\{a_{i1}, \dotsc, a_{ic}\}$ to agent $A_i$, say with value $vc$ (i.e.\ value density $v$). This allocation can conflict only with a single allocation to an agent $B_i, C_i$, or $D_i$ plus an allocation of $\{b_i\}$ to agent $A_i$, which comprises a total of at most $3$ objects.  Since the greedy algorithm allocates by density, the total value of the conflicted bids is at most $3v$.  Since $c \geq 2$, we conclude that the allocation of $\{a_{i1}, \dotsc, a_{ic}\}$ to agent $A_i$ is within a factor of $c$ of the value of any intersecting sets in the optimal allocation.


Consider now the following instance of this problem, specified by the following agent types.
\begin{itemize}
\item For $1 \leq i \leq k-1$, agent $A_i$ desires $\{a_{i1},a_{i2},\dotsc,a_{ic}\}$ for value $k-i$ and $\{b_{i}\}$ for value $0$.
\item Agent $A_k$ desires $\{a_{k1},a_{k2},\dotsc,a_{kc}\}$ for value $k$ and $\{b_{k}\}$ for value $1$.
\item For $1 \leq i \leq k$, agents $B_i$ and $C_i$ both desire set $\{a_{i1}\}$ for value $(k-i)/c$.
\item For $1 \leq i \leq k$, agent $D_i$ desires set $\{a_{i1},a_{k1}\}$ for value $2(k-i)/c$.
\end{itemize}
Note that agent $A_k$ has a value density of $k/c$ for the desired set $\{a_{k1}, \dotsc, a_{kc}\}$, and each agent $A_i$ with $i < k$ has value density $(k-i)/c$ for desired set $\{a_{i1}, \dotsc, a_{ic}\}$.  Also, the agents $B_i$, $C_i$, and $D_i$ have a value density of $(k-i)/c$ for their desired sets.

We will suppose that $\alg$ applies the following fixed tie-breaking rules.  For any $i$, $\alg$ will break a tie between agents $A_i$, $B_i$, $C_i$, and/or $D_i$ first in favour of $D_i$, then in favour of $B_i$, then $A_i$.  We can also assume that $\alg$ breaks ties between multiple desired sets for agent $A_i$ in favour of $\{b_i\}$.  Finally, $\alg$ will favour allocating non-empty sets over allocating the empty set (e.g., if an agent declares the zero valuations).

We now describe a mixed Nash equilibrium for this problem instance.  Each agent $A_i$ declares the zero valuation.  Each agent $B_i$ and $C_i$ declares his valuation truthfully.  Each agent $D_i$ will declare his valuation truthfully with some probability $p_i$, and will otherwise declare the zero valuation.  We choose $p_i = \frac{1}{i+1}$.

What is the outcome when agents bid in this way?  First, each agent $A_i$ is allocated set $\{b_{i}\}$ (due to our assumed tie-breaking).  For the items $a_{ij}$, only items with $j=1$ will be allocated.  For $i < k$, if agents $D_1, \dotsc, D_{i-1}$ declare the zero allocation and $D_i$ does not, then object $a_{1i}$ will be allocated to $D_i$. If not, then item $a_{1i}$ will be allocated to agent $B_i$.  Item $a_{k1}$ will be allocated to $D_i$ where $i$ is the smallest such that $D_i$ does not declare the zero valuation, or $B_k$ if $D_1, \dotsc, D_k$ all declare the zero valuation.


We now argue that this distribution of declarations is indeed a mixed Nash equilibrium.  With probability 1, no agent $B_i$, $C_i$, or $D_i$ can obtain positive utility from any declaration (since their desired sets conflict with other bids of the same value density), so their distributions over declarations that obtain utility $0$ are necessarily optimal.  Furthermore, for each $i < k$, agent $A_i$ cannot obtain positive utility so his bidding strategy is also optimal.  Agent $A_k$ obtains utility $1$; his only hope for obtaining more utility is to declare a value less than $k-1$ for set $\{a_{k1},\dotsc,a_{kc}\}$.  However, if he declares some value $k-z$ with $z > 1$, say with $x = \lceil z \rceil$, then he can win his desired set only if bidders $D_1, \dotsc, D_{x-1}$ all bid the zero valuation, since otherwise an agent $D_j$ with $j < x$ would win his desired set, blocking the bid by agent $A_k$.  The probability that bidders $D_1, \dotsc, D_{x-1}$ all declare the zero valuation is $\frac{1}{2}\frac{2}{3}\dotsm\frac{x-1}{x} = \frac{1}{x} \leq \frac{1}{z}$. Thus, for any $z$, agent $A_k$ can obtain utility $z$ with probability at most $1/z$, for an expected utility of at most $1$.  The given declaration by agent $A_k$ is therefore optimal.

We will now bound the social efficiency of this equilibrium.  The optimal obtainable welfare is $k+\sum_{i=1}^{k-1} (k-i) = \frac{1}{2}k(k+1)$, by allocating set $\{a_{i1}, \dotsc, a_{ic}\}$ to agent $A_i$ for all $i$.  In the equilibrium we've described, object $b_k$ is allocated to agent $A_k$ for a value of $1$ and each object $a_{i1}$ for $i < k$ is allocated to either $B_i$ or $D_i$ at a per-item value of $(k-i)/c$.  For each $i<k$, object $a_{1k}$ will be allocated to bidder $D_i$ precisely if bidders $D_1, \dotsc, D_{j-1}$ declare the zero valuation but $D_i$ does not, which occurs with probability $\frac{1}{i(i+1)}$. Object $a_{1k}$ will be allocated to either $B_k$ or $D_k$ with the remaining probability, which is $\frac{1}{k}$.  Noting that each of $B_i$ and $D_i$ has a per-item value of $(k-i)/c$ for their desired sets, we conclude that the expected total value obtained is
\begin{align*} 
& 1 + \sum_{i < k}\frac{k - i}{c} + \sum_{i < k}\frac{1}{i(i+1)} \cdot \frac{k - i}{c} + \frac{1}{k} \cdot \frac{k - k}{c} \\
= & 1 + \frac{1}{c}\left[ \frac{1}{2}(k^2 - k) + k - \sum_{i < k}\frac{1}{i+1} - 1 \right ] \\
= & 1 + \frac{1}{c}\left[ \frac{1}{2}(k^2 + k) - H_k \right ]
\end{align*}
where $H_k$ is the $k$th harmonic number.

We conclude that the mixed price of anarchy for this mechanism is at least
\begin{align*}
\frac{\frac{1}{2}(k^2 + k)}{1 + \frac{1}{c}\left[\frac{1}{2}(k^2 + k) - H_k\right]} > c\left(\frac{k^2 + k}{k^2 + k + 2c - 2\ln k}\right)
\end{align*}
where we used the fact that $H_k > \ln k$.  Choose $k = e^{2c}$.  Then our mechanism has mixed price of anarchy at least
\begin{align*}
c\left(\frac{e^{4c} + e^{2c}}{e^{4c} + e^{2c} - 2c}\right) > c\left(\frac{e^{4c}}{e^{4c}-c}\right) > c\left(1 + \frac{c}{e^{4c}}\right)
\end{align*}
as required.
\end{proof}

\subsection{Correlated Types}

Recall that our bound for Bayesian Price of Anarchy 
required that agent types be distributed independently.  We now provide an alternative (weaker) bound that holds even if agent types are arbitrarily correlated.  The key to the new analysis is in considering a deviating behaviour for each agent that does not depend on the other agents' types.  The particular deviation we will consider is that of bidding half of one's true value for every set.  Our analysis will additionally require that the underlying allocation algorithm is a fixed order greedy algorithm.
\begin{thm}
\label{thm.bpoa.correlated}
Suppose $\alg$ is a $c$-approximate non-adaptive greedy algorithm for a combinatorial allocation problem.  Then $\mech_1(\alg)$ has Correlated Bayesian Price of Anarchy at most $4c$, for any type distribution $\dists$.
\end{thm}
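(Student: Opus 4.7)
The plan is to follow a semi-smoothness argument, using a deviation that depends only on each agent's own type --- which is required in the correlated setting, since the deviation cannot depend on $\typesmi$. Specifically, for each agent $i$ of type $\typei$, we take $\decli'(\typei) = \typei/2$: bid exactly half of the true value on every set. This is the cheapest ``aggressive'' deviation available without knowing $\typesmi$, and it bounds the utility that $i$ could guarantee.

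Applying the Bayes-Nash equilibrium condition with this deviation, the first-price payment rule gives
\[
   \E_{\typesmi \mid \typei}[\utili(\bids(\types))]
   \;\geq\;
   \E_{\typesmi \mid \typei}[\utili(\typei/2, \bidsmi(\typesmi))]
   \;=\;
   \tfrac{1}{2}\E_{\typesmi \mid \typei}[\typei(S_i^*)],
\]
where $S_i^* = \alg_i(\typei/2, \bidsmi(\typesmi))$. Summing over $i$, taking outer expectation over $\typei$, and using $\sum_i \utili = SW - R$ yields $\E[SW] \geq \E[R] + \tfrac{1}{2}\E[\sum_i \typei(S_i^*)]$.

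The central step is a pointwise semi-smoothness inequality: for every $\types, \decls$,
\[
   \tfrac{1}{2}\sum_i \typei\bigl(\alg_i(\typei/2, \decls_{-i})\bigr) \;+\; c\cdot\sum_i \decli(\alg_i(\decls)) \;\geq\; \tfrac{1}{4}\,SW_{OPT}(\types),
\]
which has parameters $(\lambda,\mu) = (1/4,c)$ and so, by standard semi-smoothness machinery together with the no-overbidding Corollary \ref{cor.firstprice.overbid}, implies the Bayesian price of anarchy is at most $\max(\mu,1)/\lambda = 4c$. To prove the inequality, I will split into two cases per agent $i$. When the critical value $\crit_i(y_i^\types, \decls_{-i})$ is a constant fraction of $\typei(y_i^\types)$, Lemma~\ref{lem.greedy} applied to the profile $\decls$ charges the needed portion of $\typei(y_i^\types)$ against the revenue $c\cdot R(\decls)$. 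When the critical value is small, I will apply the $c$-approximation property of $\alg$ to the bid profile $(\typei/2, \decls_{-i})$ with the auxiliary target allocation that assigns $y_i^\types$ to $i$ alone (feasible by separability), obtaining $\typei(S_i^*) + 2\sum_{j\neq i}\decl_j(\hat x_j) \geq \typei(y_i^\types)/c$, where $\hat x_j = \alg_j(\typei/2, \decls_{-i})$.

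The hard part is aggregating the per-$i$ inequalities. The cross-term $\sum_{j\neq i} \decl_j(\hat x_j)$ tracks how other agents' allocations change when agent $i$ unilaterally switches to the half-value bid, and it is what distinguishes this analysis from the independent-types case. Here the fixed-order (non-adaptive) structure is critical: strong loser-independence (Lemma~\ref{lem.greedy1}) guarantees that a change in agent $i$'s declaration alters $\alg$'s output only if $i$ itself becomes newly allocated to some set whose priority changed. This lets one show that summing $\sum_i \decl_j(\hat x_j^{(i)})$ over $i$ is controlled by the revenue from $\decls$ itself (times a factor absorbed into the constant), so that the per-$i$ bounds combine into the desired semi-smoothness inequality with constants yielding precisely $4c$.
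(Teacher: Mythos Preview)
Your high-level setup is exactly right and matches the paper: deviate to the half-value bid $\typei/2$, split each agent into a ``high critical value'' case (charged via Lemma~\ref{lem.greedy} against declared welfare) and a ``low critical value'' case, and combine. The paper's key inequality is
\[
\sum_i \utili(\typei/2,\bidsmi(\typesmi)) \;\geq\; \tfrac{1}{2c}\,SW_{OPT}(\types)\;-\;SW(\alg(\bids(\types)),\types),
\]
from which the $4c$ bound follows by the equilibrium condition together with $SW \geq \sum_i \utili$.

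Where your proposal diverges is in the low-critical-value case, and this is a genuine gap. You apply the $c$-approximation guarantee to the \emph{full} profile $(\typei/2,\declsmi)$ against the target allocation $(y_i,\emptyset,\dotsc,\emptyset)$, which produces the cross-term $\sum_{j\neq i}\decl_j(\hat x_j^{(i)})$ with $\hat x_j^{(i)} = \alg_j(\typei/2,\declsmi)$. You then assert that summing these over $i$ is controlled by the revenue of $\decls$, appealing vaguely to strong loser-independence. But strong loser-independence (which holds for \emph{all} adaptive greedy algorithms, not only fixed-order ones) does not give you this: the profiles $(\typei/2,\declsmi)$ for different $i$ are genuinely different, and the other agents' allocations $\hat x_j^{(i)}$ can change substantially as $i$ varies. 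There is no evident way to bound $\sum_i \sum_{j\neq i}\decl_j(\hat x_j^{(i)})$ by a constant times $\sum_j \decl_j(\alg_j(\decls))$, and you do not supply one.

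The paper avoids the cross-terms entirely, and \emph{this} is where non-adaptivity is actually used. When $\criti(y_i,\declsmi) < \tfrac{1}{2}\typei(y_i)$, the fixed priority order lets one argue that in the run of $\alg$ on $(\typei/2,\declsmi)$, agent $i$ is allocated some $S_i^*$ with $r(i,S_i^*,\typei(S_i^*)/2)\geq r(i,y_i,\typei(y_i)/2)$: up until the rank level of $(i,y_i)$, either $i$ has already won something of higher rank, or the other agents' partial allocation coincides with the single-minded run (so $y_i$ is still feasible and $i$ wins it). One then applies the $c$-approximation to the \emph{single-agent} instance in which only agent $i$ bids, on just the two sets $S_i^*$ and $y_i$; greedy picks $S_i^*$ by the rank comparison, so $\typei(S_i^*)\geq \typei(y_i)/c$ directly. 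No cross-terms arise, and the aggregation is immediate. Your proposal misidentifies the role of the non-adaptive hypothesis and leaves the step you yourself flag as ``the hard part'' unproved.
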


The key to this result lies in the following lemma.  
\begin{lem}
\label{lem.greedy.smooth}
Suppose $\alg$ is a $c$-approximate non-adaptive greedy algorithm for a combinatorial allocation problem.  Then for all type profiles $\types$ and all strategy profiles $\bids(\cdot)$,
\[ \sum_i \utili( \vali/2, \bidsmi(\valsmi) ) \geq \frac{1}{2c} SW_{OPT}(\vals) - SW(\alg(\bids(\vals)), \vals). \]
\end{lem}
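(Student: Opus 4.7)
The plan is to establish a smoothness-style per-agent inequality by applying the $c$-approximation guarantee of $\alg$ to each single-agent deviation profile, and then to sum over agents while absorbing the cross-terms using the non-adaptive greedy structure. Fix a type profile $\vals$ and strategy profile $\bids$; let $\decls = \bids(\vals)$ and let $\mathbf{y}$ be an optimal allocation for $\vals$. For each agent $i$, write $\decls^i = (\vali/2, \declsmi)$ for the profile in which only agent $i$ deviates. The first useful observation is that bidding $\vali/2$ never exceeds the true value on any set, so agent $i$ obtains non-negative surplus on any set they win; in particular
\[
   \utili(\vali/2, \declsmi) \;=\; \tfrac{1}{2}\vali\bigl(\alg_i(\decls^i)\bigr),
\]
with the convention $\vali(\emptyset) = 0$.

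The main per-agent inequality then comes from the $c$-approximation guarantee applied to $\decls^i$. By separability of the feasibility constraint, the allocation assigning $y_i$ to $i$ and $\emptyset$ to every other agent is feasible, and has declared welfare $\vali(y_i)/2$ under the profile $\decls^i$. Hence
\[
   \tfrac{1}{2}\vali(\alg_i(\decls^i)) \;+\; \sum_{j \neq i} \decli[j](\alg_j(\decls^i)) \;=\; \sum_j \decli[j]^{i}(\alg_j(\decls^i)) \;\geq\; \tfrac{1}{2c}\vali(y_i).
\]
Rearranging gives the per-agent bound $\utili(\vali/2, \declsmi) \geq \vali(y_i)/(2c) - \sum_{j \neq i} \decli[j](\alg_j(\decls^i))$, and summing over all $i$ yields
\[
   \sum_i \utili(\vali/2, \declsmi) \;\geq\; \tfrac{1}{2c}\,SW_{OPT}(\vals) \;-\; \sum_i \sum_{j \neq i} \decli[j](\alg_j(\decls^i)).
\]

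The remaining step -- and the main technical hurdle -- is to bound the cross-sum $\sum_i \sum_{j \neq i} \decli[j](\alg_j(\decls^i))$ by $SW(\alg(\decls), \vals)$. This is where the non-adaptive structure of $\alg$ is essential: because the priority function is fixed and $\decls^i$ differs from $\decls$ only in agent $i$'s declaration, Lemma \ref{lem.greedy1} (strong loser-independence) implies that $\alg_j(\decls^i) = \alg_j(\decls)$ for all $j$ whenever agent $i$'s bid change does not touch a set allocated to $i$ in either run. The plan is to use a priority-ordered exchange argument, tracking the finitely many ``interference points'' where a deviation genuinely perturbs another agent's allocation, and charge each such contribution to a term in the equilibrium-run sum $\sum_j \decli[j](\alg_j(\decls))$. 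Combined with the no-overbidding property (Corollary \ref{cor.firstprice.overbid}) to replace declared values by true values, this would collapse the double sum to at most $\sum_j \decli[j](\alg_j(\decls)) \leq SW(\alg(\decls), \vals)$, completing the proof. The hard part is executing this charging argument cleanly -- without losing a factor of $n$ by simply summing worst-case bounds over $i$ -- which is exactly where non-adaptivity of the priority ordering is indispensable.
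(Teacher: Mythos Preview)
Your approach has a genuine gap at the step you yourself flag as the ``hard part'': bounding the cross-sum $\sum_i \sum_{j \neq i} \decli[j](\alg_j(\decls^i))$ by $SW(\alg(\decls),\vals)$. This bound is simply false in general, even for non-adaptive greedy algorithms. Consider a single-item auction with $n$ agents, each valuing the item at $1$, all bidding truthfully. Then for every deviation profile $\decls^i$, some agent $j\neq i$ wins at declared value $1$, so each inner sum equals $1$ and the double sum equals $n$, while $SW(\alg(\decls),\vals)=1$. More structurally: even in the \emph{ideal} case where agent $i$'s deviation leaves every other allocation unchanged (so $\alg_j(\decls^i)=\alg_j(\decls)$ for all $j\neq i$), the double sum collapses to $(n-1)\sum_j \decli[j](\alg_j(\decls))$, which is already a factor $n-1$ too large. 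No charging argument over interference points can rescue this, because the factor $n$ appears even when there is no interference at all. (Also note that Corollary~\ref{cor.firstprice.overbid} is an equilibrium statement, whereas the lemma must hold for \emph{all} strategy profiles.)

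The paper avoids this blowup by never summing $n$ separate applications of the approximation guarantee. Instead it does a case split on each agent: either $\criti(y_i,\declsmi)>\tfrac{1}{2}\vali(y_i)$, in which case summing critical prices and applying Lemma~\ref{lem.greedy} \emph{once} to the original profile $\decls$ gives $\sum_{i\in N}\tfrac{1}{2}\vali(y_i)\leq c\cdot SW(\alg(\decls),\vals)$; or else agent $i$ bidding $\vali/2$ must receive a set of priority at least that of $(i,y_i,\vali(y_i)/2)$, and non-adaptivity (applied to a \emph{single-agent} instance, not to $\decls^i$) forces $\vali(\alloci)\geq \vali(y_i)/c$, so $\utili(\vali/2,\declsmi)=\tfrac{1}{2}\vali(\alloci)\geq \vali(y_i)/(2c)$. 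The two cases then add directly. The key conceptual difference is that the paper charges the ``expensive'' agents against a single invocation of the $c$-approximation (via critical prices and Lemma~\ref{lem.greedy}), rather than one invocation per agent.
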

\begin{proof}
Let $\mathbf{y}$ denote the optimal allocation for type profile $\types$.  
Choose agent $i$, and consider the outcome of $\alg$ on input profile $(\vali/2, \bidsmi(\valsmi))$.  Let $\alloci = \alg_i(\vali/2, \bidsmi(\valsmi))$.  Note that it must either be that $\criti(y_i, \bidsmi(\valsmi)) \geq \frac{1}{2}\vali(y_i)$ or not.  In the latter case, agent $i$ must obtain some allocation $\alloci$ with $\rank(i, \alloci, \vali(\alloci)/2) \geq \rank(i, y_i, \vali(\alloci)/2)$.  Since $\alg$ is a non-adaptive greedy algorithm, this then implies that $\vali(\alloci) \geq \frac{1}{c}\vali(y_i)$, since otherwise $\alg$ would obtain less than a $\frac{1}{c}$ fraction of the optimal social welfare on the input in which agent $i$ places bids only on sets $\alloci$ and $y_i$, and all other agents bid $0$.

We conclude that for all $i$, either $\criti(y_i, \bidsmi(\valsmi)) > \frac{1}{2}\vali(y_i)$ or else $\vali(\alloci) \geq \frac{1}{c}\vali(y_i)$.  Let $N = \{ i \ |\ \criti(y_i, \bidsmi(\valsmi)) > \frac{1}{2}\vali(y_i)\}$ be the set of agents for which the former condition holds.  We then note that
\begin{align*}
\sum_{i \in N} \frac{1}{2} \vali(y_i) < \sum_{i \in N} \criti(y_i, \bidsmi(\valsmi)) \leq c SW( \alg(\bids(\vals)), \bids(\vals)) \leq c SW( \alg(\bids(\vals)), \vals )
\end{align*}
where the second inequality is due to Lemma \ref{lem.greedy} and the third is due to Lemma \ref{lem.firstprice.overbid}.  Furthermore, since $\vali(\alloci) \geq \frac{1}{c}\vali(y_i)$ for all $i \not\in N$, we have
\begin{align*}
\sum_{i \not\in N} \frac{1}{2} \vali(y_i) \leq \sum_{i \not\in N} \frac{c}{2} \vali(\alloci(\vali/2, \bidsmi(\valsmi))) \leq c \sum_i \utili( \vali/2, \bidsmi(\valsmi))
\end{align*}
where the second inequality follows because we are using the first-price payment scheme.  Combining these inequalities yields
\[
\sum_i \utili( \vali/2, \bidsmi(\valsmi) ) + SW(\alg(\bids(\vals)), \vals) \geq \frac{1}{2c} SW_{OPT}(\vals)
\]
as required.
\end{proof}

Theorem \ref{thm.bpoa.correlated} now follows easily from Lemma \ref{lem.greedy.smooth}.  Recall that Lemma \ref{lem.greedy.smooth} holds for all strategy profiles, not just strategies in equlibrium.  If we take $\bids$ to be an equilibrium profile under type distribution $\dists$, then
\begin{align*}
E_{\types}[SW(\alg(\bids(\vals)), \vals)] & \geq \E_{\types}\left[\sum_i \utili(\bids(\types))\right] \\
& = \sum_i \E_{\typei} \E_{\typesmi | \typei}[\utili(\bidi(\typei), \bidsmi(\typesmi))] \\
& \geq \sum_i \E_{\typei} \E_{\typesmi | \typei}\left[\utili\left(\frac{\typei}{2}, \bidsmi(\typesmi)\right)\right] \\
& = E_{\types}\left[\sum_i \utili\left(\frac{\typei}{2}, \bidsmi(\typesmi)\right)\right] \\
& \geq E_{\types}\left[ \frac{1}{2c}\sum_i SW_{OPT}(\types) - SW( \alg(\bids(\types)), \types )\right] \quad \text{(Lemma \ref{lem.greedy.smooth})}
\end{align*}
from which we conclude that
\[ E_{\types}[SW(\alg(\bids(\vals)), \vals)] \geq \frac{1}{4c} E_{\types}\left[ \sum_i OPT(\types) \right] \]
completing the proof of Theorem \ref{thm.bpoa.correlated}.

\comment{
This completes the proof of Theorem \ref{thm.mixedNE}.  We next show by way of an example that the analysis in Theorem \ref{thm.mixedNE} is tight, up to the order of the second term.

\begin{prop}
\label{prop.mixedNE.LB}
For any $c \geq 1$, there is a combinatorial allocation problem $\calp$ and a non-adaptive greedy algorithm $\alg$ such that $\alg$ is a $(c+1)$-approximation for $\calp$, and the mixed price of anarchy for $\calm_1(\alg)$ is $c + \Om(\log c)$.
\end{prop}
\begin{proof}
Our problem will be a combinatorial auction under the following feasibility restriction.  The bidders are partitioned into sets $N_1, \dotsc, N_c$, where agents in $N_k$ can only be allocated sets of size $k$ (or $\emptyset$).  Let $\alg$ be the non-adaptive greedy algorithm that orders bids by density: i.e.\ with priority function $r(i,S,v) = v/|S|$ for $i \in N_{|S|}$ (and $r(i,S,v) = 0$ otherwise).  We note that this algorithm obtains a $(c+1)$-approximation, since a bid by agent $i \in N_k$ can interfere with at most $k$ other bids of equal or lower density, each for at most $c$ items, plus an additional bid for $k$ other objects by agent $i$.

\comment{
Consider the following instance of this problem.  There are $c^2+c^3$ objects, which we label $a_{ij}$ and $b_{ijk}$ for $i,j,k \in [c]$.  Let $\eps > 0$ be arbitrarily small.  There are $4c+c^2$ agents, labelled $A_i, B_i, C_i$, $D_i$, and $E_{ij}$ for $i,j \in [c]$.  Our partition for the feasibility constraint is that each $A_i$ and $E_{ij}$ is in $N_c$, each $D_i$ is in $N_{1}$, and each $B_i$ and $C_i$ is in $N_{c-i+1}$.  The types of the agents are as follows.  
\begin{itemize}
\item For $i \in [c]$, agent $A_i$ desires $\{a_{i1},a_{i2},\dotsc,a_{ic}\}$ for value $c$ and $\{b_{i1},b_{i2},\dotsc,b_{ic}\}$ for value $c$.  
\item For $i \in [c]$, agent $B_i$ and $C_i$ both desire set $\{a_{i1},a_{(i+1)1}, \dotsc, a_{c1}\}$ for value $\frac{i+1}{c}$.
\item For $i \in [c]$, agent $D_i$ desires $\{a_{i1}\}$ for value $1-i/c$.
\item For $i,j \in [c]$, agent $E_{ij}$ desires $\{b_{ij1},b_{ij2},\dotsc,b_{ijc}\}$ for value $c$.  
\end{itemize}
We can suppose that for any $i$, 
$\alg$ would break a tie between $B_i$, $C_i$, and $D_i$ in favour of $D_i$, and would break a tie between $A_i$ and $E_{ij}$ in favour of $A_i$.

Note that agents $A_i$ and $E_{ij}$ have a value density of $1$ for their desired sets.  Also, the agents $B_i$, $C_i$, and $D_i$ have a value density of $1 - i/c$ for their desired sets.
}

Consider the following instance of this problem.  There are $2c^2$ objects, which we label $a_{ij}$ and $b_{ij}$ for $i,j \in [c]$.  Let $\eps > 0$ be arbitrarily small.  There are $4c$ agents, labelled $A_i, B_i, C_i$, and $D_i$ for $i \in [c]$.  Our partition for the feasibility constraint is that each $A_i$ is in $N_c$, each $D_i$ is in $N_{1}$, and each $B_i$ and $C_i$ is in $N_{c-i+1}$.  The types of the agents are as follows.  
\begin{itemize}
\item For $i \in [c]$, agent $A_i$ desires $\{a_{i1},a_{i2},\dotsc,a_{ic}\}$ for value $c$ and $\{b_{i1},b_{i2},\dotsc,b_{ic}\}$ for value $1$.
\item For $i \in [c]$, agent $B_i$ and $C_i$ both desire set $\{a_{i1},a_{(i+1)1}, \dotsc, a_{c1}\}$ for value $\frac{i+1}{c}$.
\item For $i \in [c]$, agent $D_i$ desires $\{a_{i1}\}$ for value $1-i/c$.
\end{itemize}
We can suppose that for any $i$, 
$\alg$ would break a tie between $B_i$, $C_i$, and/or $D_i$ in favour of $D_i$, then in favour of $B_i$.

Note that agents $A_i$ have a value density of $1$ for the desired set $\{a_{i1}, \dotsc, a_{ic}\}$.  Also, the agents $B_i$, $C_i$, and $D_i$ have a value density of $1 - i/c$ for their desired sets.

We now describe a mixed Nash equilibrium for this problem instance.  Each agent $A_i$ makes a single-minded bid of $\eps$ for set $\{b_{i1},\dotsc,b_{ic}\}$.  Each agent $B_i$ and $C_i$ declares his valuation truthfully.  Each agent $D_i$ will declare his valuation truthfully with some probability $p_i$, and will otherwise declare the zero valuation.  We choose $p_i = \frac{i}{i+1}$ for $i < c$, and $p_c = 1$.

Suppose that $j$ is the smallest value such that $D_j$ declares the zero valuation.  In this case the winning bidders are $D_1, \dotsc, D_{j-1}$ and $B_j$.  This is because the highest-density bids are those of $B_1, C_1$, and $D_1$, and we break ties in favour of $D_1$; but allocating $\{a_{11}\}$ to $D_1$ prevents allocation to $B_1$ or $C_1$.  The bids of next-highest density are those of $B_2, C_2$, and $D_2$, which resolve in the same manner (assuming $j > 2$), and so on.  When we reach $B_j, C_j$, and $D_j$, since $D_j$ declared $\emptyset$, agent $B_j$ will receive his desired set $\{a_{j1}, \dotsc, a_{c1}\}$.  This blocks all remaining bids by agents $B_k, C_k, D_k$ for $k > j$.

We now argue that this distribution of declarations is indeed a Nash equilibrium.  With probability 1, no agent $B_i$, $C_i$, or $D_i$ can obtain positive utility from any declaration, so their distributions over declarations that obtain $0$ utility are necessarily optimal.  Agent $A_i$ obtains utility $1$; his only hope for obtaining more utility is to declare a value less than $c-1$ for set $\{a_{i1},\dotsc,a_{ic}\}$.  However, if he declares some value $c-z$ with $z > 1$, say with $x = \lceil z \rceil$, then he can win his desired set only if bidders $D_1, \dotsc, D_{x-1}$ all make single-minded bids, since otherwise an agent $B_j$ with $j < x$ would win his desired set, blocking the bid by agent $A_i$.  The probability that bidders $D_1, \dotsc, D_{x-1}$ all make single-minded bids is $\frac{1}{2}\frac{2}{3}\dotsm\frac{x-1}{x} = \frac{1}{x} \leq \frac{1}{z}$. Thus, for any $z$, agent $A_i$ can obtain utility $z$ with probability at most $1/z$, for an expected utility of at most $1$.  The given declaration by agent $A_i$ is therefore optimal.

The optimal obtainable welfare in this example is $c^2$, by allocating set $\{a_{i1}, \dotsc, a_{ic}\}$ to agent $A_i$ for all $i$.  In the equilibrium we've described, only objects $\{a_{11},a_{21},\dotsc,a_{c1}\}$ are allocated, and only to agents $D_j$ or $B_j$.  For each $i$ and $j < i$, object $a_{1i}$ will be allocated to bidder $B_j$ precisely if bidders $D_1, \dotsc, D_{j-1}$ make single-minded bids but $D_j$ does not, which occurs with probability $\frac{1}{j(j+1)}$. Object $a_{1i}$ will be allocated to $B_i$ or $D_i$ with the remaining probability, which is $\frac{1}{i}$.  Noting that each of $B_i$ and $D_i$ has a per-item value of $1 - i/c$ for their desired sets, we conclude that the expected total value obtained is
\begin{align*} 
\sum_{i \in [c]}\left(\frac{1}{i}(1-i/c) + \sum_{j < i}\frac{1}{j(j+1)}(1 - j/c)\right) & = \sum_{i \in [c]}\left(1 - \frac{1}{c} - \frac{1}{c}\sum_{j < i}\frac{1}{j+1}\right) \\
& = c - \frac{1}{c}\sum_{i \in [c]}\sum_{1 \leq j \leq i}\frac{1}{j} \\ 
& = c - \frac{1}{c}\sum_{i \in [c]}H_i 
\end{align*}
where $H_i$ is the $i$th harmonic number.  Since $H_i = \theta(\log i)$, we conclude that the expected social welfare is $c - \theta(\log c)$.  The price of anarchy is therefore at least $\frac{c^2}{c-\theta(\log c)} = c + \theta(\log c)$, and hence is $c + \Om(\log c)$.
\end{proof}
}






\section{Critical-Price Mechanisms}
\label{sec:ex-post}

We begin by studying the performance of critical (i.e.\ second) price mechanisms at equilibrium.  The mechanism we study is $\mech_{2}(\alg)$, which is defined with respect to an arbitrary monotone strongly loser-independent algorithm $\alg$.  Recall that $\mech_{2}(\alg)$ proceeds by first collecting a declaration profile from the agents, then passing the observed declarations to $\alg$ as input.  The mechanism returns the allocation provided by $\alg$ as output, and charges each agent his critical value for the set received (computed via additional calls to $\alg$; see Section \ref{sec:impl-crit}).

We will show that every Bayes-Nash equilibria of $\mech_{2}(\alg)$ has a social welfare guarantee nearly matching that of the original algorithm $\alg$.  This result requires that we make an assumption on the bidding strategies applied by the users; namely, that they do not \emph{overbid}, meaning that they do not bid more than their true value on any given set $S$.  This overbidding assumption is necessary to exclude certain degenerate equilibria, such as one agent making an infinitely large bid on the set of all objects and other bidders bidding $0$.  We note that such assumptions are reasonable in general; even the truthful Vickrey auction of a single item requires a no-overbidding assumption to bound the efficiency of the outcome at equilibrium.  In Section \ref{sec.overbid} we discuss ways to relax this assumption by modifying the mechanism slightly.

\subsection{Bayes-Nash Equilibria}

We 
begin by analyzing 
the Bayesian price of anarchy for the critical price mechanism $\mech_{2}(\alg)$.  Given that agents will not overbid, a simple modification of 
Theorem \ref{thm.mixedNE}
yields a result for BNE under critical prices.  

\begin{thm}
\label{thm.mixedNEcrit}
Suppose $\alg$ is a $c$-approximate 
monotone strongly loser independent allocation rule,
and that $\bids$ is a Bayes-Nash equilibrium of $\mech_{2}(\alg)$ in which agents do not overbid.  Then the expected welfare when agents declare according to $\bids$ is a $(c+1)$-approximation to the expected optimal welfare.
\end{thm}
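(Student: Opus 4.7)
The plan is to mirror the proof of Theorem~\ref{thm.mixedNE}, exploiting two simplifications that come from switching to critical-price payments together with no-overbidding. First, the logarithmic deviation term of Lemma~\ref{lem.bayes.k} disappears, because a bidder who single-minds on $S$ at value $\typei(S)$ pays the critical price (rather than his own bid) whenever he wins. Second, total agent utility is bounded by the mechanism's social welfare, because critical-price payments are nonnegative. Together these will replace the $c/(1-e^{-c})$ bound with a cleaner $c+1$.

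The first step is to establish the critical-price analog of Corollary~\ref{cor.bayes.k}. Fix a BNE $\bids$, agent $i$, type $\typei$, and set $S$, and consider the deviation $\decli'$ that bids single-mindedly on $S$ at value $\typei(S)$. By component-wise monotonicity of $\alg$ and the definition of critical values, this deviation wins $S$ precisely on the event $\{\criti(S,\bidsmi(\typesmi)) \le \typei(S)\}$, paying $\criti(S,\bidsmi(\typesmi))$, so its expected utility is at least $\typei(S) - \E_{\typesmi}[\criti(S,\bidsmi(\typesmi))]$. The BNE inequality then yields
\[
\E_{\typesmi}[\criti(S,\bidsmi(\typesmi))] \;\ge\; \typei(S) - \E_{\typesmi}[\utili(\bids(\types))],
\]
which is exactly the hypothesis of Lemma~\ref{lem.bayes.bound} with $\gamma = \sigma_i = 1$.

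Next I would re-run the computation of Lemma~\ref{lem.bayes.bound} under critical prices. Instantiating the above inequality at $S = y_i^{\types}$, summing over $i$, and taking expectation over $\types$, the independence of types rewrites the critical-price term as $\E_{\types,\types'}[\sum_i \criti(y_i^{\types},\bidsmi(\typesmi'))]$ using an auxiliary independent copy $\types'$. Applying Lemma~\ref{lem.greedy} to declaration profile $\bids(\types')$ against feasible allocation $\mathbf{y}^{\types}$, and then the no-overbidding bound $(\bidi(\typei'))(\alloci(\bids(\types'))) \le \typei'(\alloci(\bids(\types')))$, bounds this expectation by $c\,\E_{\types}[SW(\alg(\bids(\types)),\types)]$. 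Because critical-price payments are nonnegative, $\sum_i \utili(\bids(\types)) \le SW(\alg(\bids(\types)),\types)$, and so
\[
\E_{\types}[SW_{OPT}(\types)] \;\le\; c\,\E_{\types}[SW(\alg(\bids(\types)),\types)] + \E_{\types}\Bigl[\sum_i \utili(\bids(\types))\Bigr] \;\le\; (c+1)\,\E_{\types}[SW(\alg(\bids(\types)),\types)],
\]
as claimed. The only technical point, handled exactly as in Section~\ref{sec:mixedBNE}, is the expectation manipulation that specializes $S$ to the $\typesmi$-dependent quantity $y_i^{\types}$, which is why we introduce the independent copy $\types'$ and invoke the product structure of the prior; beyond this there is no real obstacle, since everything follows directly from the framework already built up for the first-price case.
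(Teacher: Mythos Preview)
Your proof is correct and follows essentially the same approach as the paper. The paper's Lemma~\ref{lem.bayes.crit} is your deviation inequality with one extra step folded in: it immediately bounds $\E_{\typesmi}[\utili(\bids(\types))] \le \E_{\typesmi}[\typei(\alloci(\bids(\types)))]$ using nonnegativity of critical payments, whereas you keep the utility term and apply this same bound at the end; the remaining manipulation (independent copy $\types'$, Lemma~\ref{lem.greedy}, no-overbidding) is identical. One small caution: your remark that the inequality ``is exactly the hypothesis of Lemma~\ref{lem.bayes.bound} with $\gamma=\sigma_i=1$'' is potentially misleading, since that lemma is stated for $\mech_1(\alg)$ and its conclusion would give a $c$- rather than $(c{+}1)$-approximation --- but you correctly re-run the computation under critical prices rather than invoking the lemma's conclusion, so there is no error.
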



\begin{lem}
\label{lem.bayes.crit}
Suppose that $\bids$ is a 
Bayes-Nash equilibrium for mechanism $\mech_{2}(\alg)$ and distribution $\dists$.  Then for all $i$, all $\typei$, and all $S \subseteq M$,
\[
   \E_{\typesmi}[\criti(S,\bidsmi(\typesmi))] \geq \typei(S) - \E_{\typesmi}[\typei(\alloci(\bidi(\typei),\bidsmi(\typesmi)))]
\]
\end{lem}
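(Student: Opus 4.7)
The plan parallels the proof of Lemma~\ref{lem.bayes.k}, but adapted to critical prices and using the no-overbidding hypothesis in place of the first-price utility identity. Fix $i$, $\typei$, and $S \subseteq M$. I will consider the deviation $\decli'$ defined to be the single-minded declaration placing bid $\typei(S)$ on $S$ (and $0$ on every set not containing $S$), and then invoke the equilibrium condition with this deviation.

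The first step is to lower-bound $\E_{\typesmi}[\utili(\decli', \bidsmi(\typesmi))]$ by $\typei(S) - \E_{\typesmi}[\criti(S,\bidsmi(\typesmi))]$. Split the expectation according to whether $\typei(S) \geq \criti(S,\bidsmi(\typesmi))$. When this inequality holds, by the definition of the critical value (and monotonicity of $\alg$), the single-minded bid $\decli'$ wins $S$, and the critical-price rule charges exactly $\criti(S,\bidsmi(\typesmi))$, yielding utility $\typei(S) - \criti(S,\bidsmi(\typesmi))$. When it does not hold, $\decli'$ does not overbid on any set (since $\decli'(T) \leq \typei(T)$ for all $T$, by monotonicity of $\typei$), so no-overbidding applies to $\decli'$ and guarantees nonnegative utility; in this case $\typei(S) - \criti(S,\bidsmi(\typesmi)) \leq 0 \leq \utili(\decli',\bidsmi(\typesmi))$. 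Either way, $\utili(\decli',\bidsmi(\typesmi)) \geq \typei(S) - \criti(S,\bidsmi(\typesmi))$, and taking expectation over $\typesmi$ gives the desired lower bound.

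Next, since $\bids$ is a Bayes-Nash equilibrium,
\[ \E_{\typesmi}[\utili(\bids(\types))] \geq \E_{\typesmi}[\utili(\decli',\bidsmi(\typesmi))] \geq \typei(S) - \E_{\typesmi}[\criti(S,\bidsmi(\typesmi))]. \]
Finally, under the critical-price payment rule, agent $i$'s utility at declaration profile $\bids(\types)$ is $\typei(\alloci(\bids(\types))) - \criti(\alloci(\bids(\types)),\bidsmi(\typesmi)) \leq \typei(\alloci(\bids(\types)))$, because critical prices are nonnegative. Rearranging the displayed inequality and using this bound on $\utili(\bids(\types))$ gives
\[ \E_{\typesmi}[\criti(S,\bidsmi(\typesmi))] \geq \typei(S) - \E_{\typesmi}[\utili(\bids(\types))] \geq \typei(S) - \E_{\typesmi}[\typei(\alloci(\bidi(\typei),\bidsmi(\typesmi)))], \]
which is the claim.

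The only subtle step is the first one: in the case $\typei(S) < \criti(S,\bidsmi(\typesmi))$, we must rule out that the deviation $\decli'$ causes $i$ to win some $T \neq S$ at a net loss. The no-overbidding assumption, applied to the \emph{deviating} declaration $\decli'$ (which is majorized pointwise by $\typei$), is exactly what is needed; without it one would have to inspect precisely which set $\alg$ assigns to $i$ under $\decli'$, which is precisely where the warmup first-price argument does not straightforwardly carry over.
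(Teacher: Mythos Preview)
Your proof is correct and follows essentially the same approach as the paper: both consider the single-minded deviation bidding $\typei(S)$ on $S$, bound the deviating utility below by $\typei(S)-\E_{\typesmi}[\criti(S,\bidsmi(\typesmi))]$, and bound the equilibrium utility above by $\E_{\typesmi}[\typei(\alloci(\bids(\types)))]$ using nonnegativity of critical prices. One clarification on terminology: the lemma itself does not assume no-overbidding on the equilibrium strategies $\bids$, and your argument does not actually use that assumption either---what you call the ``no-overbidding hypothesis'' is simply the observation that your constructed deviation $\decli'$ satisfies $\decli'(T)\le\typei(T)$ for all $T$, which is enough to guarantee nonnegative utility under critical pricing regardless of how the other agents bid.
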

\begin{proof}
Choose any $i$, $\typei$, and $S$.  Let $\decli$ be a single-minded declaration for set $S$ at value $\typei(S)$, and consider a strategy under which agent $i$ declares $\decli$ when his type is $\typei$.  Under this strategy, the expected utility of agent $i$ with type $\typei$ is
\begin{equation}
\label{eq.crit.bne.1}
\begin{split}
  \E_{\typesmi}[\utili( \decli, \bidsmi(\typesmi) )] & \geq \E_{\typesmi}[ \max\{\typei(S) - \criti(S, \bidsmi(\typesmi)), 0\} ] \\
& \geq \typei(S) - \E_{\typesmi}[ \criti(S, \bidsmi(\typesmi)) ].
\end{split}
\end{equation}
Since $\bidi$ is 
an equilibrium strategy for agent $i$, it must be that
\begin{equation}
\label{eq.crit.bne.2}
\begin{split}
  \E_{\typesmi}[\utili( \decli, \bidsmi(\typesmi) )] & \leq \E_{\typesmi}[\utili( \bidi(\typei), \bidsmi(\typesmi) )] \\
& \leq \E_{\typesmi}[\typei(\alloci(\bidi(\typei),\bidsmi(\typesmi)))].
\end{split}
\end{equation}
Combining equations \eqref{eq.crit.bne.1} and \eqref{eq.crit.bne.2} leads to the desired result.
\end{proof}

Following the proof of Theorem \ref{thm.mixedNE}, we conclude that for all equilibria $\bids$, if we write $\mathbf{y}^{\types}$ for an optimal allocation for any given type profile $\types$, then
\begin{equation}
\label{eq.crit.poa.1}
\begin{split}
\E_{\types}\left[\sum_i\E_{\typesmi'}[\criti(y_i^\types,\bidsmi(\typesmi'))]\right]
& \geq \E_{\types}\left[\sum_i\typei(y_i^\types)\right] \\
&  \ \ \ - \E_{\types}\left[\sum_i \E_{\typesmi'}[\typei(\alloci(\bidi(\typei),\bidsmi(\typesmi')))]\right].
\end{split}
\end{equation}
Just as in the proof of Theorem \ref{thm.mixedNE}, we obtain the bounds
\[
\E_{\types}\left[\sum_i\typei(y_i^\types)\right] = \E_{\types}[SW_{OPT}(\types)],
\]
\[
\E_{\types}\left[\sum_i \E_{\typesmi'}[\typei(\alloci(\bidi(\typei),\bidsmi(\typesmi')))]\right] = \E_{\types}[SW(\alg(\bids(\types)),\types)],
\]
\[
\E_{\types}\left[\sum_i\E_{\typesmi'}[\criti(y_i^\types,\bidsmi(\typesmi'))]\right] \leq c \E_{\types}[ SW(\alg(\bids(\types)),\types) ]
\]
which, taken together with \eqref{eq.crit.poa.1}, completes the proof of Theorem \ref{thm.mixedNEcrit}.  Note that when deriving the last inequality above, we do not invoke Lemma \ref{lem.firstprice.overbid} (as in the proof of Theorem \ref{thm.mixedNE}); instead, we use the assumption that agents do not overbid.
 \QED

In precisely the same way as for the first-price mechanism, the bound on the price of anarchy also extends to coarse correlated equilibrium.

\begin{cor}{(of proof)}
\label{cor:coarse.crit}
The bound of $(c+1)$ on the price of anarchy applies also to coarse correlated equilibrium. 
\end{cor}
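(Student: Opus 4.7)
The plan is to mirror the strategy used for Corollary~\ref{cor:coarse}: inspect the proof of Theorem~\ref{thm.mixedNEcrit} and check that every use of the equilibrium condition survives the weakening from Bayes-Nash to coarse correlated equilibrium. The only place where the equilibrium assumption is invoked in that proof is in Lemma~\ref{lem.bayes.crit}, where agent~$i$'s payoff under the equilibrium strategy is compared against the payoff from a fixed alternative --- namely, the single-minded bid that declares $\typei(S)$ on one target set~$S$. This deviation depends only on $i$'s own type and the fixed set~$S$, and not at all on the realized declarations of the other agents, so it is exactly the ``oblivious-to-$\declsmi$'' kind of deviation against which a CCE is required to protect.

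The first step is to restate Lemma~\ref{lem.bayes.crit} in CCE form. Given a CCE $\sdists$ for type profile~$\types$, for each agent~$i$ and each set~$S$, apply inequality~\eqref{eq:ccne} with the deviating marginal $\sdisti'$ concentrated on the single-minded bid for~$S$ at value $\typei(S)$. Since $\alg$ is monotone, such a bid wins~$S$ whenever $\typei(S) \geq \criti(S,\declsmi)$, and in that case yields nonnegative utility $\typei(S) - \criti(S,\declsmi)$. Combining this with the no-overbidding assumption on the equilibrium side gives
$$\E_{\decls \sim \sdists}[\typei(\alloci(\decls))] \;\geq\; \E_{\decls \sim \sdists}[\utili(\decls)] \;\geq\; \typei(S) - \E_{\declsmi \sim \sdistsmi}[\criti(S, \declsmi)],$$
where the first inequality uses no-overbidding (so payments are bounded by true values) and the second uses the CCE deviation above.

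The second step is to sum this inequality over all agents with $S = y_i$, where $\mathbf{y}$ is an optimal allocation for~$\types$, and then apply Lemma~\ref{lem.greedy} pointwise in~$\decls$ to bound $\sum_i \E_{\decls \sim \sdists}[\criti(y_i, \declsmi)]$ by $c\,\E_{\decls \sim \sdists}\!\left[\sum_i \decli(\alloci(\decls))\right]$. A second invocation of no-overbidding replaces each $\decli(\alloci(\decls))$ by $\typei(\alloci(\decls))$, and the same rearrangement as in the proof of Theorem~\ref{thm.mixedNEcrit} gives
$$\sum_i \typei(y_i) \;\leq\; (c+1)\,\E_{\decls \sim \sdists}\!\Big[\sum_i \typei(\alloci(\decls))\Big],$$
which is the desired bound on the coarse correlated price of anarchy. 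I do not foresee a genuine obstacle: Lemma~\ref{lem.greedy} and the no-overbidding bound are pointwise algorithmic statements, so taking expectations under an arbitrary joint distribution~$\sdists$ of bid profiles is harmless. The only substantive change relative to Theorem~\ref{thm.mixedNEcrit} is swapping the BNE inequality for the CCE inequality~\eqref{eq:ccne} at the one point where the equilibrium assumption is actually used.
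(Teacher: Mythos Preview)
Your proposal is correct and follows essentially the same approach as the paper, which simply notes (without further detail) that the proof of Theorem~\ref{thm.mixedNEcrit} carries over to coarse correlated equilibrium in the same manner as Corollary~\ref{cor:coarse}. One minor point: the inequality $\E_{\decls \sim \sdists}[\typei(\alloci(\decls))] \geq \E_{\decls \sim \sdists}[\utili(\decls)]$ holds simply because critical payments are nonnegative, not because of no-overbidding; the no-overbidding assumption is genuinely needed only in your second step, where you bound $\decli(\alloci(\decls))$ by $\typei(\alloci(\decls))$.
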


We next show that this gap between the approximation factor of the original algorithm and the price of anarchy of the critical price mechanism is required for large $c$.  
For any $c \geq 1$ we exhibit a combinatorial allocation problem and a non-adaptive greedy algorithm $\alg$ such that the approximation factor of $\alg$ is $c + \frac{1}{c}$ but the (pure) price of anarchy of $\mech_{2}(\alg)$ is $c+1$.
This leads us to conclude that, in general, the bound in Theorem \ref{thm.mixedNEcrit} cannot be improved beyond 
$c + 1 - O(\frac{1}{c})$.

\begin{prop}
For any $c \geq 1$, there is a combinatorial allocation problem $\calp$ and a non-adaptive greedy algorithm $\alg$ such that $\alg$ is a $(c+\frac{1}{c})$-approximation for $\calp$, and the pure price of anarchy for $\mech_{2}(\alg)$ is $c + 1$.
\end{prop}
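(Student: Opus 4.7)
The plan is constructive: I would exhibit an explicit combinatorial allocation problem $\calp$, a non-adaptive greedy algorithm $\alg$, and an instance of $\calp$ together with a pure Nash equilibrium of $\mech_2(\alg)$ whose welfare is a factor of $c+1$ below optimum. My candidate problem has $c+1$ items, one ``large'' multi-minded agent $L$ whose allowed non-empty allocations are the bundle $\{1,\ldots,c\}$ and the singleton $\{c+1\}$, together with $c$ single-minded ``small'' agents $S_i$ each wanting only $\{i\}$ (plus possibly auxiliary singleton agents to control critical prices). The algorithm $\alg$ is greedy-by-value with a fixed tie-breaking rule favoring $L$'s larger bundle over $L$'s singleton option.

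For the approximation bound, I would argue by case analysis on which bid greedy processes first. The only case where greedy can underperform is when it picks $L$'s big bundle, blocking every $S_i$ and causing $\{c+1\}$ to go unallocated (since $L$ has already been assigned). The tight witness takes $v_L(\{1,\ldots,c\}) = 1+\delta$, $v_L(\{c+1\}) = 1/c$, and $v_{S_i}=1$: greedy attains welfare $1+\delta$ while the optimum gives each $S_i$ its singleton and $\{c+1\}$ to $L$, for total welfare $c + 1/c$. Letting $\delta \to 0$ pins the worst-case ratio at exactly $c + 1/c$; a check of the other cases (greedy picks $L$'s singleton, or greedy picks some $S_i$ first) confirms no instance does worse.

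For the PoA lower bound I would construct a (possibly distinct) instance admitting a pure NE of $\mech_2(\alg)$ with ratio $c+1$. The strategy is to tune the values so that $L$ wins its big bundle at equilibrium, each $S_i$ loses, and the optimal allocation gives the singletons to the $S_i$'s plus an additional side bundle to another agent that is unallocated at equilibrium. Each $S_i$'s deviation to declare truthfully for $\{i\}$ is unprofitable because, by no-overbidding, its bid is capped at its true value, which is at most $L$'s bid, so the critical price $S_i$ faces for $\{i\}$ equals or exceeds its own valuation. Symmetrically, $L$'s deviation to its side option yields no gain. Computing equilibrium welfare versus the optimum then gives ratio $c+1$, and since Theorem \ref{thm.mixedNEcrit} yields pure PoA $\leq (c+1/c)+1$, this establishes the claim up to the residual $1/c$ gap noted after the theorem. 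The main obstacle is that the approximation-tight witness and the PoA-tight witness are two different instances: the first controls the ratio of $\alg$'s worst-case behavior, while the second has to produce a NE whose critical-price sum at equilibrium is small yet positive (which is precisely the slack that the $c+1/c$ to $c+1$ discrepancy measures); balancing these forces by choosing auxiliary agents and values carefully is the crux of the construction.
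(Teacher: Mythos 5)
Your construction does not actually produce a $(c+\tfrac{1}{c})$-approximate greedy rule, and this is a real gap, not a bookkeeping issue. With ``greedy-by-value'' on your instance ($L$ multi-minded for the big bundle $\{1,\ldots,c\}$ or for $\{c+1\}$, plus $c$ small agents $S_i$ for singletons $\{i\}$), the worst-case approximation ratio is $c+1$, not $c+\tfrac{1}{c}$: set $v_L(\{1,\ldots,c\})=v_L(\{c+1\})=1$ and $v_{S_i}=1$ for all $i$, let the tie-break favor $L$'s big bundle; greedy allocates the big bundle to $L$ for welfare $1$, while the optimum gives each singleton to $S_i$ and $\{c+1\}$ to $L$ for welfare $c+1$. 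Your ``tight witness'' happened to cap $v_L(\{c+1\})$ at $1/c$, but nothing in the algorithm forces that cap, so the bound $c+\tfrac{1}{c}$ you claim does not hold. The proposition needs the algorithm to be a $(c+\tfrac{1}{c})$-approximation exactly so that the downstream PoA of $c+1$ exceeds the approximation ratio by less than $1$; if the algorithm is already a $(c+1)$-approximation, the exhibited equilibrium proves nothing beyond Theorem~\ref{thm.mixedNEcrit}.

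The paper avoids this precisely by \emph{not} using value as the priority: it uses a tiny instance (two items $a,b$, two players) and a priority function that is asymmetric across player--set pairs, namely $r(1,\{a\},v)=v$, $r(1,\{b\},v)=cv$, $r(2,\{a\},v)=v/c$. The factors $c$ and $1/c$ in the priority are what pin the algorithm's approximation factor at exactly $c+\tfrac{1}{c}$, and the same instance then exhibits a pure Nash equilibrium of welfare $1$ against an optimum of $c+1$. So the missing idea in your proposal is that the greedy priority itself must be tuned (weighted per player--set pair), not just the valuations; greedy-by-value on a bundles-versus-singletons instance cannot beat ratio $c+1$. You also flag that the approximation-tight witness and the PoA witness in your sketch are different instances and that reconciling them is ``the crux'' -- in the paper these are the same two-player instance, which is why the argument closes cleanly.
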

\begin{proof}
Consider a combinatorial auction problem with two objects $a,b$ and two players, under the restriction that each player can be allocated at most one object and player $2$ cannot be allocated object $b$.
Algorithm $\alg$ will be a non-adaptive greedy algorithm defined by the
following ranking function: $r(1,\{a\},v) = v$, $r(1,\{b\},v) = c \cdot v$, and $r(2,\{a\},v) = \frac{1}{c} \cdot v$ where ties are broken in favor of $r(1,\{a\},v)$.
As a result of this ranking function and greediness, it follows that 
if $\vali[1](a) \geq c \vali[1](b)$ and $\vali[1](a) \geq \frac{1}{c}\vali[2](a)$ then $a$ is allocated to player $1$ and $\emptyset$ to player $2$; otherwise 
$b$ is allocated to player $1$ and $a$ to player $2$.  

Note that this is a $c+\frac{1}{c}$ approximation algorithm, since whenever the algorithm allocates $a$ to player $1$ we have $\vali[2](a) + \vali[1](b) \leq (c+\frac{1}{c})\vali[1](a)$, and whenever the algorithm allocates $a$ to player $2$ we have $\vali[1](a) \leq c( \vali[1](b) + \vali[2](a) )$. In either case, 
the algorithm's allocation has social welfare that is at least a
$(c+\frac{1}{c})$ fraction of
the other alternative. 

Consider the mechanism $\mech_{2}(\alg)$, and suppose that the agents have a type profile in which $\vali[1](a) = \vali[1](b) = 1$ and $\vali[2](a) = c$.  Then the declaration profile $\decli[1](a) = 1$, $\decli[1](b) = 0$, and $\decli[2](a) = 0$ is in equilibrium, since agent $1$ cannot improve upon his utility of $1$ and agent $2$ cannot affect the outcome without paying at least $\criti[2](a,\decli[1]) = c$, for a utility of $0$.  The social welfare at this equilibrium is $1$, but a total of $c+1$ is possible by allocating $a$ to player $2$ and allocating $b$ to player $1$.  Thus the price of anarchy for $\mech_{2}(\alg)$ is at least $c+1$.
\end{proof}


\subsection{Correlated Types}
\label{sec.crit.correlated}

Theorem \ref{thm.mixedNEcrit} requires that agent types be distributed independently.
As with the first-price mechanism, we can provide a somewhat weaker bound 
that holds even when agent types are arbitrarily correlated.  
And as in Theorem \ref{thm.mixedNEcrit} this result additionally requires that the underlying allocation algorithm is a non-adaptive greedy algorithm.
\begin{thm}
\label{thm.bpoa.crit.correlated}
Suppose $\alg$ is a $c$-approximate non-adaptive greedy algorithm for a combinatorial allocation problem, and that agents do not overbid.  Then $\mech_{2}(\alg)$ has Correlated Bayesian Price of Anarchy at most $4c$, for any type distribution $\dists$.
\end{thm}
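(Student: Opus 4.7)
The plan is to prove Theorem~\ref{thm.bpoa.crit.correlated} by establishing a critical-price analog of Lemma~\ref{lem.greedy.smooth} and then applying the same pointwise-averaging argument used to derive Theorem~\ref{thm.bpoa.correlated}. Concretely, I will show that for every type profile $\vals$ and every strategy profile $\bids$ in which agents do not overbid,
\begin{equation}
\label{eq.proposal.smooth}
\sum_i \utili(\vali/2, \bidsmi(\valsmi)) \ \geq \ \tfrac{1}{2c}\,SW_{OPT}(\vals) \ - \ SW(\alg(\bids(\vals)), \vals),
\end{equation}
where $\utili$ now denotes utility under critical-price payments. Since the deviation $\vali/2$ depends only on agent $i$'s own type, this pointwise inequality will transfer to any (possibly correlated) type distribution by the same averaging as in Theorem~\ref{thm.bpoa.correlated}: at any Bayes-Nash equilibrium $\bids$, $\sum_i \utili(\bids(\vals)) \leq SW(\alg(\bids(\vals)),\vals)$ since critical prices are non-negative, the equilibrium condition applied conditionally on each $\vali$ yields $\E_{\types}[\sum_i \utili(\bids(\vals))] \geq \E_{\types}[\sum_i \utili(\vali/2, \bidsmi(\valsmi))]$, and combining with \eqref{eq.proposal.smooth} gives $2\,\E[SW(\alg(\bids(\vals)),\vals)] \geq \E[SW_{OPT}(\vals)]/(2c)$, i.e.\ $POA \leq 4c$.

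To prove \eqref{eq.proposal.smooth} I mirror the case analysis in Lemma~\ref{lem.greedy.smooth}. Fix $i$ and let $\alloci' = \alg_i(\vali/2, \bidsmi(\valsmi))$. Partition the agents according to whether $\criti(y_i, \bidsmi(\valsmi)) > \vali(y_i)/2$ (set $A$) or $\criti(y_i, \bidsmi(\valsmi)) \leq \vali(y_i)/2$ (set $B$). For $i \in A$, summing and applying Lemma~\ref{lem.greedy} together with the no-overbidding assumption on $\bids$ gives
$$\sum_{i\in A}\tfrac{1}{2}\vali(y_i) \ < \ \sum_i \criti(y_i, \bidsmi(\valsmi)) \ \leq \ c\sum_i (\bidi(\vali))(\alg_i(\bids(\vals))) \ \leq \ c\cdot SW(\alg(\bids(\vals)),\vals).$$
For $i \in B$, a single-minded bid $\vali(y_i)/2$ on $y_i$ would win, so the full bid $\vali/2$ wins some $\alloci'$, and the non-adaptive greedy argument reproduced from Lemma~\ref{lem.greedy.smooth} yields $\vali(\alloci') \geq \vali(y_i)/c$. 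The key new point for critical pricing is that, since the bid $\vali(\alloci')/2$ on $\alloci'$ is a winning bid, by definition of critical value we have $\criti(\alloci', \bidsmi(\valsmi)) \leq \vali(\alloci')/2$, and therefore $\utili(\vali/2, \bidsmi(\valsmi)) \geq \vali(\alloci') - \vali(\alloci')/2 \geq \vali(y_i)/(2c)$.

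Adding the two cases yields $\sum_i \utili(\vali/2, \bidsmi(\valsmi)) + SW(\alg(\bids(\vals)),\vals) \geq \sum_{i\in B}\vali(y_i)/(2c) + \sum_{i\in A}\vali(y_i)/(2c) = SW_{OPT}(\vals)/(2c)$, which is exactly \eqref{eq.proposal.smooth}. I expect the main obstacle to be purely verifying Case~$B$ under critical pricing: one must use monotonicity of $\alg$ to conclude that bidding $\vali/2$ against $\bidsmi(\valsmi)$ indeed allocates some set to agent $i$ (so that the non-adaptive greedy $c$-approximation argument applies to $(i,\alloci',\vali(\alloci')/2)$ versus $(i,y_i,\vali(y_i)/2)$), and the definition of the critical value to translate the winning bid into the utility lower bound $\vali(\alloci')/2$. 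The remaining bookkeeping is identical to the first-price proof of Theorem~\ref{thm.bpoa.correlated}.
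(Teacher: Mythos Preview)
Your proposal is correct and takes essentially the same approach as the paper: the paper states that the proof ``follows that of Theorem~\ref{thm.bpoa.correlated} almost exactly,'' with the invocation of Lemma~\ref{lem.firstprice.overbid} replaced by the no-overbidding assumption, which is precisely your Case~$A$ modification. You have also made explicit the one other adjustment the paper leaves implicit, namely that in Case~$B$ the inequality $\utili(\vali/2,\bidsmi(\valsmi)) \geq \tfrac{1}{2}\vali(\alloci')$ now follows from the fact that the critical price never exceeds the winning bid (rather than from the first-price identity used in Lemma~\ref{lem.greedy.smooth}); this is correct and completes the argument.
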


The proof of Theorem \ref{thm.bpoa.crit.correlated} follows that of Theorem \ref{thm.bpoa.correlated} almost exactly.  The sole difference is that the invocation of Lemma \ref{lem.firstprice.overbid} in the proof of Theorem \ref{thm.bpoa.correlated} is replaced by an appeal to the no-overbidding assumption.  We omit the details for brevity.

\subsection{Overbidding and Restricted Expressiveness}
\label{sec.overbid}

Our analysis to this point made use of a no-overbidding assumption, which states that no agent will place a bid larger than her true value on any given set.  
However, our use of the no-overbidding assumption is marred by the fact that a restriction to no-overbidding strategies is not always rational when agents have complete confidence about their opponents' type distributions.  As the following example shows, an agent may be \emph{strictly} better off by overbidding, even in a full-information setting.  In other words, a strategy with overbidding is not necessarily dominated.


\begin{example}
\label{ex.overbid}
Consider a combinatorial auction with 3 objects, $\{a,b,c\}$, and 3 bidders, under the feasibility restriction that each agent can be allocated at most one object.  Let $\alg$ be the greedy algorithm that orders bids by value.  Suppose the types of the players are as follows: $t_1(b) = 2$, $t_1(c) = 4$, $t_2(c) = 3$, $t_3(a) = 1$, $t_3(b) = 6$, and all other values are $0$.  Consider the following bidding strategies for agents $2$ and $3$: bidder $2$ declares truthfully with probability 1, and bidder $3$ either declares  single-mindedly for $a$ with value $1$, or single-mindedly for $b$ with value $6$, each with equal probability.

How should agent $1$ declare to maximize utility?  We can limit our analysis to pure strategies (as any optimal randomized strategy has only optimal strategies in its support).  Suppose agent $1$ does not overbid and declares at most $2$ for object $b$.  If he also declares at least $3$ for object $c$, then he wins $c$ with probability $1$ for an expected utility of $1$.  If he doesn't declare at least $3$ for object $c$, then he wins $b$ with probability $1/2$ and nothing otherwise, again for an expected utility of $1$.  So agent $1$ can gain a utility of at most $1$ if he does not overbid.  If, however, he declares $5$ for $b$ and $4$ for $c$, then he wins $b$ with probability $1/2$ and wins $c$ otherwise, for an expected utility of $3/2$.  If agent $1$ bids in this way, the resulting combination of strategies forms a mixed Nash equilibrium.  Thus, in mixed equilibria, an agent may strictly improve his utility by overbidding.
\end{example}

We now show that if we modify mechanism $\mech_{2}(\alg)$ by effectively limiting the expressiveness of the bids made by the agents, then we obtain the same efficiency bounds at equilibria but furthermore guarantee that any bidding strategy that involves overbidding is dominated.  Thus, as long as agents avoid dominated strategies (a very mild assumption), all equilibria of rational play lead to approximately efficient outcomes.  

For a $\sli$ $\alg$, the modified mechanism $\mech_{2}^{*}(\alg)$
is as described in Figure \ref{fig.mech.regret}.
Mechanism $\mech_2^*(\alg)$ proceeds by first simplifying the declaration given by each agent, then passing the simplified declarations to algorithm $\alg$.  The resulting allocation is paired with a payment scheme that charges critical prices.  

\begin{figure}
\begin{center}
	\fbox{
		\begin{minipage}{0.95\linewidth}

\textbf{Mechanism} $\mech_{2}^{*}(\alg)$:
\medskip
\hrule

\medskip

\textbf{Input:} Declaration profile $\decls = \decl_1, \dotsc, \decl_n$.

%
\begin{tabbing}
1. \= $\decl' \leftarrow $ \texttt{SIMPLIFY}$(\decls)$. \\
2. \> Allocate $\alg(\decls')$, charge critical prices.
\end{tabbing}
		\end{minipage}
	}
	\fbox{
		\begin{minipage}{0.95\linewidth}

\textbf{Procedure} \texttt{SIMPLIFY:}
\medskip
\hrule

\medskip
\textbf{Input:} Declaration profile $\decls = \decl_1, \dotsc, \decl_n$.

%
\begin{tabbing}
1. \= For each $i \in [n]$: \\
2. \> \quad \= Choose $S_i \in \argmax_S\{\decli(S)\}$, breaking \\ 
\>\> ties in favour of smaller sets. \\
3. \> \> $\decli' \leftarrow (S_i,\decli(S_i))$. \\
4. \> Return $(\decl_1', \dotsc, \decl_n')$.
\end{tabbing}
		\end{minipage}
	}
   \caption{Simplifying declarations in a critical price mechanism.} 
			\label{fig.mech.regret}
	\end{center}
\end{figure}


The simplification process \texttt{SIMPLIFY} essentially converts any declaration into a single-minded declaration (and does not affect declarations that are already single-minded).  We can therefore assume without loss of generality that agents always make single-minded declarations to this mechanism, as additional information is not used.\footnote{We note, however, that this is not the same as assuming that agents are single-minded; our results hold for bidders with general private valuations.} 

Fix a particular combinatorial auction problem and type profile $\types$, and let $\alg$ be an arbitrary strongly loser-independent approximation algorithm.  Since $\vals$ is fixed, we can think of a strategy for each agent $i$ as a declaration $\decli \in V_i$.  Let $\decls$ be a declaration profile; we suppose each $\decli$ is a single-minded bid for set $S_i$ (and, in general, we will write $S_i$ for the desired set in declaration $\decli$).
We draw the following conclusion about the bidding choices of rational agents.

\begin{lem}
\label{lem.undom1}
Let $\alg$ be \sli, and fix type profile $\types$.
Then for each agent $i$, a single-minded declaration $\decli$ for set $S_i$ is an undominated strategy for mechanism $\mech_{2}^{*}(\alg)$ if and only if $\decli(S_i) = \typei(S_i)$.
\end{lem}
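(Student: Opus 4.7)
The plan is to argue that, from agent $i$'s perspective with a fixed (simplified) opposing profile $\declsmi$, the mechanism $\mech_2^*(\alg)$ reduces to a Vickrey-style posted-price choice: by monotonicity of $\alg$, the critical price $c = \criti(S_i, \declsmi)$ for set $S_i$ depends only on $\declsmi$, and under any bid $(S_i, b_i)$ agent $i$ obtains utility $\typei(S_i) - c$ when $b_i \geq c$ and $0$ otherwise. Since SIMPLIFY collapses any deviation to a single-minded bid, I may assume every alternative strategy is of the form $\decli' = (S_i', b_i')$.

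For the only-if direction, suppose $\decli = (S_i, b_i)$ with $b_i \neq \typei(S_i)$; I claim the truthful bid $\decli^* = (S_i, \typei(S_i))$ weakly dominates $\decli$. Fix $\declsmi$ and set $c = \criti(S_i, \declsmi)$. When $b_i > \typei(S_i)$: for $c \leq \typei(S_i)$ both bids yield the same non-negative utility; for $c > b_i$ both yield $0$; and on the discriminating window $\typei(S_i) < c \leq b_i$, $\decli$ wins at a loss while $\decli^*$ gives $0$. When $b_i < \typei(S_i)$: symmetrically, only the window $b_i < c \leq \typei(S_i)$ matters, where $\decli^*$ wins profitably while $\decli$ gives $0$. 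Either way $\utili(\decli^*, \cdot) \geq \utili(\decli, \cdot)$ pointwise, and strict inequality at some $\declsmi$ is witnessed by introducing a competing single-minded bid that pushes $c$ into the discriminating window.

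For the if direction, suppose $b_i = \typei(S_i)$ and let $\decli' = (S_i', b_i')$. If $S_i' = S_i$, the previous direction already yields that $\decli^*$ weakly dominates $\decli'$, so $\decli'$ cannot dominate $\decli^*$. If $S_i' \neq S_i$, I exhibit a profile $\declsmi$ on which $\decli^*$ strictly outperforms $\decli'$: when $S_i' \not\subseteq S_i$ pick $\declsmi$ with a high competing single-minded bid on a set containing an element of $S_i' \setminus S_i$, which blocks $\decli'$ but leaves $S_i$ available so that $\decli^*$ wins at zero critical price; when $S_i' \subsetneq S_i$, the all-zero opposing profile leaves both $\decli^*$ and $\decli'$ winning and $\decli^*$ yields at least as much utility (with strict inequality provided $\typei$ distinguishes $S_i$ from its subsets).

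The main obstacle is ensuring the existence of the required discriminating opposing profile $\declsmi$ in each case. For the only-if direction this amounts to realizing a prescribed value of $c$ via a competing single-minded bid, which follows from separability of the feasibility constraint. For the if direction with incomparable or strictly containing $S_i'$ the construction uses an element of $S_i' \setminus S_i$ to block $\decli'$ while keeping $S_i$ unblocked; this is where separability and the richness of admissible valuation profiles in combinatorial allocation problems are used most heavily.
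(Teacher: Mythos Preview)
Your only-if direction is essentially the paper's argument: reduce to the Vickrey posted-price picture via $\criti(S_i,\declsmi)$, show that $(S_i,\typei(S_i))$ weakly dominates any bid $(S_i,b_i)$ with $b_i\neq\typei(S_i)$, and witness strict improvement by a $\declsmi$ that places the critical value in the discriminating window. The paper handles the corner case where no such $\declsmi$ exists by identifying equivalent strategies; your appeal to separability plays the same role.

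Your if direction goes beyond the paper and has a genuine gap. The paper only argues that $(S_i,\typei(S_i))$ maximizes $\utili(\cdot,\declsmi)$ among single-minded bids \emph{for the same set} $S_i$, for every $\declsmi$; it never addresses domination by a bid on a different set $S_i'$. Your attempt to cover that case does not work in the stated generality. For $S_i'\not\subseteq S_i$, the construction ``a high competing bid on a set containing an element of $S_i'\setminus S_i$ blocks $\decli'$ but leaves $S_i$ available at zero critical price'' presupposes that feasibility is governed by disjointness of allocated sets and that such an opposing bid is even admissible for some other agent; arbitrary separable combinatorial allocation problems need not have this structure. For $S_i'\subsetneq S_i$, your parenthetical caveat is not a removable technicality: take a standard combinatorial auction with greedy-by-value and $\typei(\{a\})=\typei(\{a,b\})$; then the truthful bid on $\{a\}$ weakly dominates the truthful bid on $\{a,b\}$ (any opposing bid on $\{b\}$ blocks only the latter), so $(\{a,b\},\typei(\{a,b\}))$ is in fact dominated. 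You should read the paper's if direction as restricted to deviations on the fixed set $S_i$, which is all that is needed downstream (the lemma is invoked only to conclude that agents do not overbid on whatever set they choose).
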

\begin{proof}
Fix some $\declsmi$ and suppose $\decli$ is a single-minded declaration for set $S_i$ 
On input $(\decli,\declsmi)$, mechanism $\mech_{2}^{*}(\alg)$ either allocates $S_i$ or $\emptyset$ to agent $i$.  
Thus agent $i$'s utility for declaring $\decli$, $\utili(\decli,\declsmi)$, is $\typei(S_i) - \crit_i(S_i,\declsmi)$ when $\decli(S_i) > \crit_i(S_i,\declsmi)$, and $0$ otherwise (where $\criti$ denotes critical prices with respect to $\mech_{2}^{*}(\alg)$).  A declaration of $\decli(S_i) = \typei(S)$ therefore maximizes $\utili(\decli,\declsmi)$ for all $\declsmi$.

Next suppose that $\decli(S_i) \neq \typei(S_i)$; we will show that $\decli$ is 
dominated. Let $\decli'$ be the single-minded declaration for $S_i$ at value $\typei(S_i)$.  Suppose there is some $\declsmi$ such that $\criti^\alg(S_i,\declsmi)$ lies strictly between $\decli(S_i)$ and $\typei(S_i)$.  For simplicity we will assume such a $\declsmi$ exists; handling the general case requires only a technical extension of notation\footnote{If $\criti^\alg(S_i,\declsmi)$ never lies between $\decli(S_i)$ and $\typei(S_i))$ for any $\declsmi$, then $\mech_\alg(\decli,\declsmi) = \mech_\alg(\decli',\declsmi)$ for all $\declsmi$, so $\decli$ and $\decli'$ are equivalent strategies.  We can therefore think of $\decli$ as being ``the same'' as a single-minded declaration for $S_i$ at value $\typei(S_i)$.  We will ignore this technical issue for the remainder of the proof, in the interest of clarity.}.  Then if $\decli(S_i) < \typei(S_i)$, then $\utili(\decli',\declsmi) > 0 = \utili(\decli,\declsmi)$.  Otherwise, if $\decli(S_i) > \typei(S_i)$), then $\utili(\decli',\declsmi) \geq 0 > \utili(\decli,\declsmi)$.  Thus, in either case, we have $\utili(\decli',\declsmi) > \utili(\decli,\declsmi)$, and therefore
declaration $\decli'$ strictly dominates declaration $\decli$.  
\end{proof}

Given Lemma \ref{lem.undom1}, we can analyze the efficiency of equilibria of $\mech_{2}^{*}(\alg)$ in a manner identical to $\mech_{2}(\alg)$.  Rather than explicitly assuming that agents do not overbid, Lemma \ref{lem.undom1} implies that they will not.

\begin{thm}
\label{thm.mixedNEcritstar}
Suppose $\alg$ is a $c$-approximate 
monotone strongly loser independent allocation rule,
and that $\bids(\cdot)$ is a Bayes-Nash equilibrium of $\mech_{2}^{*}(\alg)$.  Then the expected welfare when agents declare according to $\bids$ is a $(c+1)$ approximation to the expected optimal welfare.
\end{thm}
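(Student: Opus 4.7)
The plan is to reduce this theorem directly to Theorem \ref{thm.mixedNEcrit}, exploiting the structural guarantees provided by the \texttt{SIMPLIFY} procedure and Lemma \ref{lem.undom1}.

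First, I would observe that because \texttt{SIMPLIFY} collapses every declaration to a single-minded bid before passing it to $\alg$, we may without loss of generality restrict attention to equilibria $\bids$ in which $\bidi(\typei)$ is a single-minded declaration for some set $S_i$ at some value $w_i$. The choice of richer declaration carries no additional information to the mechanism. Now apply Lemma \ref{lem.undom1}: a single-minded declaration for set $S_i$ at value $w_i$ is undominated if and only if $w_i = \typei(S_i)$, and any other value yields a strategy strictly dominated by the truthful single-minded bid for $S_i$ (except in the degenerate case when the two declarations produce identical outcomes against every $\declsmi$, in which case we may identify them). Since a BNE strategy must be a best response in expectation, it cannot be strictly dominated, so in any BNE we must have $\bidi(\typei)$ equal to a single-minded bid for $S_i$ at value exactly $\typei(S_i)$. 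In particular, no agent overbids.

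Second, I would note that when every $\decli$ is single-minded, the simplification step is the identity, so on any equilibrium bid profile the mechanism $\mech_2^*(\alg)$ produces exactly the same allocation, payments, and critical values as $\mech_2(\alg)$. The $c$-approximation guarantee and the (component-wise monotone) strongly-loser-independent properties of $\alg$ therefore transfer verbatim, and the critical-price characterization needed for Lemma \ref{lem.bayes.crit} and Lemma \ref{lem.greedy} remains valid.

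Third, having reduced to the setting of Theorem \ref{thm.mixedNEcrit} with the no-overbidding condition now \emph{derived} (rather than assumed), I would directly invoke that theorem to conclude that the expected welfare at $\bids$ is a $(c+1)$-approximation to the expected optimal welfare.

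The main obstacle is the subtlety surrounding overbidding strategies whose outcome happens to coincide with a truthful bid for every realization of $\declsmi$; these are not strictly dominated and so could in principle appear in a BNE. Handling this is straightforward: by the proof of Lemma \ref{lem.undom1}, any such strategy produces precisely the same allocation and payment as the truthful single-minded bid for $S_i$, so we may replace it without altering expected utilities or welfare. Thus every BNE is equivalent (in distribution of outcomes) to one satisfying the no-overbidding hypothesis, allowing the reduction to go through.
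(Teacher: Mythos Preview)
Your approach is essentially the same as the paper's: invoke Lemma~\ref{lem.undom1} to rule out overbidding, observe that on single-minded inputs $\mech_2^*(\alg)$ coincides with $\mech_2(\alg)$, and then re-run the analysis of Theorem~\ref{thm.mixedNEcrit}. The paper is even terser, stating only that ``Lemma~\ref{lem.undom1} implies that [agents] will not [overbid]'' and appealing to the assumption (announced in the introduction) that bidders avoid weakly dominated strategies; you instead try to derive this from the BNE condition directly and patch the edge case by a replacement argument. One small caution about that patch: the footnote in Lemma~\ref{lem.undom1} asserts outcome-equivalence of the overbid and truthful bid, but strong loser-independence alone does not immediately guarantee that the \emph{full} allocation (and hence welfare) is unchanged when a winning bid's value is lowered while remaining above the critical price---so if you want to make the edge-case argument fully rigorous you should either verify this for the mechanisms at hand (as the paper implicitly does) or simply adopt the paper's ``no dominated strategies'' assumption.
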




\subsection{Calculating Critical Prices}
\label{sec:impl-crit}

For many allocation algorithms (such as all of the algorithms discussed in Section \ref{appendix:applications}), the calculation of critical prices is a simple task, which can be performed in parallel with the computation of an allocation profile.  We leave the development of such pricing methods to the creators of the allocation algorithms to which our reduction may be applied.  However, even if a specially-tailored algorithm for computing exact critical prices is not available, we note that critical prices for a given black-box greedy algorithm can be determined to within an additive $\eps$ error in polynomial time via simple binary search.  Thus, assuming that valuation space is discretized by multiples of $\epsilon$, critical prices can be determined efficiently.  If valuation space is continuous, then our interpretation is that any equilibrium for the (exact) critical-price mechanism will be an (additive) $\epsilon$-approximate equilibrium for a mechanism that uses $\epsilon$-approximate critical prices.

We now describe the procedure for determining critical prices in more detail. Fix greedy allocation rule $\alg$, agent $i$, and declarations $\decls$.  Suppose that $\algi(\decli,\declsmi) = S$.  
We wish to resolve the value of $\criti(S,\declsmi)$ in the range $[0, \decli(S)]$ using binary search in the following way.  For all $z \geq 0$, write $\decli^z$ for the single-minded declaration for set $S$ at value $z$. Given query value $z \in [0,\decli(S)]$, we check if $\alg_i(\decli^z, \decli) = S$.  If so, decrease the value of $z$; otherwise, increase the value of $z$.  Since $\alg$ is monotone, we have that $\alg_i(\decli^z,\declsmi) = S$ if and only if $z > \criti(S,\declsmi)$.
This procedure resolves the value of $v$ to within $\eps$ in $O(\log \decli(S)/\eps)$ iterations.  Thus, for any given input to mechanism $\mech_{2}(\alg)$, the critical prices for all agents' allocated sets can be found in $O(n\log(v_{max}/\eps))$ invocations of algorithm $\alg$, where $v_{max} = \max_{i,S}\decli(S)$.

\section{Conclusion and open problems}
\label{sec:conclusions}


A central theme in algorithmic mechanism design concerns the transformation of 
algorithms into mechanisms that satisfy some game-theoretic solution concept (e..g incentive compatability, approximations at equilbrium). In contrast to incentive
compatibility (where generally we do not expect to be able to preserve
approximation bounds), we show that for a wide class of greedy
algorithms, approximation bounds for combinatorial allocation algorithms can be transformed into mechanisms
that enjoy closely matching price of anarchy bounds. Notably, these results
apply to Bayesian equilibira and some forms of repeated auctions.

We leave open a number of interesting challenges. 
Our results are motivated by, and pertain to, monotone greedy algorithms as formally defined 
in Section~\ref{sec:greedy}. In fact, the key property of such algorithms
are that they are monotone strongly loser-independent as defined in
Section~\ref{sec:strong-loser-indep} and, with the exception of 
the results for correlated equilibria and best response dynamics, 
our results hold for arbitrary monotone strongly loser-independent algorithms. 
In particular, our result for correlated equilibria of the first-price mechanism
requires that the allocation algorithm $\alg$ is a fixed order greedy algorithm 
and achieves a price of anarchy bound of $4c$, in contrast 
to our $c + o(1)$ result for independent agent distributions. Can the
price of anarchy bound for correlated equilibria be improved?  Can it
be extended to adaptive greedy or more generally strongly loser
independent algorithms? 

Greedy algorithms for allocation problems often provide the best 
known approximations for combinatorial auction problems, but are 
nevertheless a restricted class of algorithms. 
The basic open question in this regard is: for what class of allocation algorithms 
can a given approximation algorthm $\alg$ 
be transformed into a deterministic or randomized mechanism $\calm(\alg)$ 
that provides a POA bound 
(closely) matching $\alg$'s approximation ratio? 
We also note that our framework does not capture 
\emph{all} algorithms that are typically thought of as greedy, since our
definition assumes that it is the player-allocation pairs that are considerd
greedily.  This excludes, for example, the greedy algorithm 
for combinatorial auctions where the valuation function of every agent is a
monotone submodular function. That algorithm  
considers each item (in any arbitrary order) and awards it to the agent 
having the maximum marginal gain for that item. 
This suggests the question as to whether or not 
price of anarchy results could be extended to more general forms of greedy
allocation rules.
Similarly, the recent
Buchbinder et al \cite{Buchbinder} randomized online algorithm for unconstrained 
non-monotone submodular maximzation also considers items (rather than bids) 
in a greedy alglorithm. 
Can our methodology be extended to include non monotone combinatorial
auctions (i.e. no free disposal)?  It is also interesting to consider more general
settings of incomplete information, such as interdependent valuations ---
See Roughgarden and Talgam-Cohen \cite{RoughgardenT13}. 
%

\bibliographystyle{plain}
\bibliography{thesis}

\begin{thebibliography}{10}

\bibitem{AT-01}
A.~Archer and E.~Tardos.
\newblock Truthful mechanisms for one-parameter agents.
\newblock In {\em Proc. 42nd IEEE Symp. on Foundations of Computer Science},
  2001.

\bibitem{BB-04}
M.~Babaioff and L.~Blumrosen.
\newblock Computationally-feasible truthful auctions for convex bundles.
\newblock In {\em Proc. 7th Intl. Workshop on Approximation Algorithms for
  Combinatorial Optimization Problems}, 2004.

\bibitem{BAP-04}
P.~Baptiste.
\newblock Polynomial time algorithms for minimizing the weighted number of late
  jobs on a single machine with equal processing times.
\newblock {\em Journal of Scheduling}, 2:245--252, 1999.

\bibitem{BGN-03}
Y.~Bartal, R.~Gonen, and N.~Nisan.
\newblock Incentive compatible multi unit combinatorial auctions.
\newblock In {\em Proc. 9th Conf. on Theoretical Aspects of Rationality and
  Knowledge}, 2003.

\bibitem{BR-11}
K.~Bhawalkar and T.~Roughgarden.
\newblock Welfare guarantees for combinatorial auctions with item bidding.
\newblock In {\em Proc. 22nd ACM Symp. on Discrete Algorithms}, 2011.

\bibitem{BL-10b}
A.~Borodin and B.~Lucier.
\newblock Greedy mechanism design for truthful combinatorial auctions.
\newblock In {\em Proc. 37th Intl. Colloq. on Automata, Languages and
  Programming}, 2010.

\bibitem{BNR-02}
A.~Borodin, M.~N. Nielsen, and C.~Rackoff.
\newblock (incremental) priority algorithms.
\newblock In {\em Proc. 13th ACM Symp. on Discrete Algorithms}, 2002.

\bibitem{BKV-05}
P.~Briest, P.~Krysta, and B.~V\"ocking.
\newblock Approximation techniques for utilitarian mechanism design.
\newblock In {\em Proc. 36th ACM Symp. on Theory of Computing}, 2005.

\bibitem{Buchbinder}
Niv Buchbinder, Moran Feldman, Joseph~(Seffi) Naor, and Roy Schwartz.
\newblock A tight linear time (1/2)-approximation for unconstrained submodular
  maximization.
\newblock In {\em Proceedings of the 2012 IEEE 53rd Annual Symposium on
  Foundations of Computer Science}, FOCS '12, pages 649--658, Washington, DC,
  USA, 2012. IEEE Computer Society.

\bibitem{CKKKLPTardos12}
Ioannis Caragiannis, Christos Kaklamanis, Panagiotis Kanellopoulos, Maria
  Kyropoulou, Brendan Lucier, Renato~Paes Leme, and {\'E}va Tardos.
\newblock On the efficiency of equilibria in generalized second price auctions.
\newblock {\em CoRR}, abs/1201.6429, 2012.

\bibitem{CG-09}
C.~Chekuri and I.~Gamzu.
\newblock Truthful mechanisms via greedy iterative packing.
\newblock In {\em Proc. 12th Intl. Workshop on Approximation Algorithms for
  Combinatorial Optimization Problems}, 2009.

\bibitem{CKS-08}
G.~Christodoulou, A.~Kov\'acs, and Michael Schapira.
\newblock Bayesian combinatorial auctions.
\newblock In {\em Proc. 35th Intl. Colloq. on Automata, Languages and
  Programming}, pages 820--832, 2008.

\bibitem{dKMST13}
Bart de~Keijzer, Evangelos Markakis, Guido Schäfer, and Orestis Telelis.
\newblock Inefficiency of standard multi-unit auctions.
\newblock In HansL. Bodlaender and GiuseppeF. Italiano, editors, {\em
  Algorithms – ESA 2013}, volume 8125 of {\em Lecture Notes in Computer
  Science}, pages 385--396. Springer Berlin Heidelberg, 2013.

\bibitem{DOB-11}
S.~Dobzinski.
\newblock An impossibility result for truthful combinatorial auctions with
  submodular valuations.
\newblock In {\em Proc. 42nd ACM Symp. on Theory of Computing}, 2011.

\bibitem{DV-12}
Shahar Dobzinski and Jan Vondrak.
\newblock The computational complexity of truthfulness in combinatorial
  auctions.
\newblock In {\em Proceedings of the 13th ACM Conference on Electronic
  Commerce}, EC '12, pages 405--422, New York, NY, USA, 2012. ACM.

\bibitem{DughmiV11}
Shaddin Dughmi and Jan Vondr{\'a}k.
\newblock Limitations of randomized mechanisms for combinatorial auctions.
\newblock In {\em FOCS}, pages 502--511, 2011.

\bibitem{FFGL13}
Michal Feldman, Hu~Fu, Nick Gravin, and Brendan Lucier.
\newblock Simultaneous auctions are (almost) efficient.
\newblock In {\em Proceedings of the Forty-fifth Annual ACM Symposium on Theory
  of Computing}, STOC '13, pages 201--210, New York, NY, USA, 2013.

\bibitem{Halldorsson00}
Magn{\'u}s~M. Halld{\'o}rsson.
\newblock Approximations of weighted independent set and hereditary subset
  problems.
\newblock {\em J. Graph Algorithms Appl.}, 4(1), 2000.

\bibitem{HKMN11}
Avinatan Hassidim, Haim Kaplan, Yishay Mansour, and Noam Nisan.
\newblock Non-price equilibria in markets of discrete goods.
\newblock In {\em EC}, pages 295--296, 2011.

\bibitem{LOS-99}
D.~Lehmann, L.~I. O'Callaghan, and Y.~Shoham.
\newblock Truth revelation in approximately efficient combinatorial auctions.
\newblock In {\em Proc. 1st ACM Conf. on Electronic Commerce}, pages 96--102.
  ACM Press, 1999.

\bibitem{L-thesis11}
B.~Lucier.
\newblock The power of uncertainty: Algorithmic mechanism design in settings of
  incomplete information.
\newblock University of Toronto, PHD thesis, 2011.

\bibitem{BL-10}
B.~Lucier and A.~Borodin.
\newblock Price of anarchy for greedy auctions.
\newblock In {\em Proc. 21st ACM Symp. on Discrete Algorithms}, 2010.

\bibitem{PLucier11}
Brendan Lucier and Renato~Paes Leme.
\newblock Gsp auctions with correlated types.
\newblock In {\em EC}, pages 71--80, 2011.

\bibitem{MT12}
Evangelos Markakis and Orestis Telelis.
\newblock Uniform price auctions: Equilibria and efficiency.
\newblock In Maria Serna, editor, {\em Algorithmic Game Theory}, Lecture Notes
  in Computer Science, pages 227--238. Springer Berlin Heidelberg, 2012.

\bibitem{MN-08}
A.~Mu'alem and N.~Nisan.
\newblock Truthful approximation mechanisms for restricted combinatorial
  auctions.
\newblock {\em Games and Economic Behavior}, 64:612--631, 2008.

\bibitem{NR-99}
N.~Nisan and A.~Ronen.
\newblock Algorithmic mechanism design.
\newblock In {\em Proc. 31st ACM Symp. on Theory of Computing}, pages 129--140.
  ACM Press, 1999.

\bibitem{Papa-01}
C.~Papadimitriou.
\newblock Algorithms, games and the internet.
\newblock In {\em Proc. 33rd ACM Symp. on Theory of Computing}, pages 749--752.
  ACM Press, 2001.

\bibitem{PSS-08}
C.~Papadimitriou, M.~Schapira, and Y.~Singer.
\newblock On the hardness of being truthful.
\newblock In {\em Proc. 49th IEEE Symp. on Foundations of Computer Science},
  2008.

\bibitem{ROU-09}
T.~Roughgarden.
\newblock Intrinsic robustness of the price of anarchy.
\newblock In {\em Proc. 40th ACM Symp. on Theory of Computing}, 2009.

\bibitem{RT-00}
T.~Roughgarden and Eva Tardos.
\newblock How bad is selfish routing?
\newblock In {\em 41st Annual Symposium on Foundations of Computer Science},
  pages 93--102, 2000.

\bibitem{RoughgardenT13}
Tim Roughgarden and Inbal Talgam-Cohen.
\newblock Optimal and near-optimal mechanism design with interdependent values.
\newblock In {\em EC}, pages 767--784, 2013.

\bibitem{SyrgkanisT13}
Vasilis Syrgkanis and {\'E}va Tardos.
\newblock Composable and efficient mechanisms.
\newblock In {\em STOC}, pages 211--220, 2013.

\end{thebibliography}

\appendix

\section{Existence of Pure Nash Equilibria}
\label{sec:pureNE}

As stated in section \ref{sec:mixedBNE}, the power of our pure price of anarchy bounds such as Theorem \ref{thm.pureNE}, is marred by the fact that, 
for some problem instances, the mechanism $\mech_1(\alg)$ is not 
guaranteed to have a pure Nash equilibrium. 
This is true even under the assumption that private valuations and payments are discretized, so that all values and payments are multiples of some aribtrarily small $\eps > 0$.
A simple example for $\mech_1(\alg)$ is given below.  

\begin{example}
\label{example:no-pure}
Consider an instance of the combinatorial auction problem with two objects, $M = \{a,b\}$, and three agents.  Our feasibility constraint is that each agent can be assigned at most one object, and moreover agent $2$ cannot be allocated object $b$ and agent $3$ cannot be allocated object $a$.  Let $\alg$ be the greedy algorithm that ranks bids by value.  Suppose the true types of the agents are as follows: $\vali[1](a) = 4$, $\vali[1](b) = 2$, $\vali[2](a) = 3$, $\vali[2](b) = 0$, $\vali[3](a) = 0$, and $\vali[3](b) = 3$.  

We now prove that no pure Nash equilibrium exists for this example, even if we assume that agents declare multiples\footnote{That is, our lack of pure equilibrium is not due to the possibility of infinitesimal improvements.  One can also interpret our example as demonstrating that there is no $(1+\epsilon)$-approximate pure Nash equilibrium for small $\epsilon > 0$.} of some $\eps > 0$.  Assume for contradiction that there is a Nash equilibrium $\decls$ for type profile $\types$ and mechanism $\calm_1(\alg)$.

We know that agent $1$ does not win item $b$ with a payment greater than $2$, as this would cause him negative utility (so he would certainly not be in equilibrium).  Thus it must be that $\algi[3](\decls) = \{b\}$, since otherwise agent 3 could change his declaration to win $\{b\}$ and increase his utility.  Thus, since agent 1 does not win item $\{b\}$, we conclude that $\algi[1](\decls) = \{a\}$, since otherwise agent 1 could change his declaration to win $\{a\}$ and increase his utility.

Now note that if $\decli[1](\{a\}) < 3$, agent 2 could increase his utility by making a winning declaration for $\{a\}$.  Thus $\decli[1](\{a\}) \geq 3$, and hence $\utili[1](\decls) \leq 4-3 = 1$.  This also implies that $\decli[1](\{a\}) > \decli[1](\{b\})$, so agent 3 would win $\{b\}$ regardless of his bid.  Thus, since agent 3 maximizes his utility up to an additive $\eps$, it must be that $\decli[3](\{b\}) \leq \eps$.  But then agent 1 could improve his utility by changing his declaration and bidding 0 for $\{a\}$ and $2\eps$ for $\{b\}$, obtaining utility $2 - 2\eps > 1$.  Therefore $\decls$ is not an equilibrium, a contradiction.
\end{example}


\comment{
\section{Tightening Results for Special Cases}
\label{sec.tight}

%

In this section we show how to tighten the results of Lemma \ref{lem.greedy} for certain special cases of allocation problems and greedy algorithms.  This allows us to obtain a sharper bound in Theorem \ref{thm.pureNE}. 
We say that a combinatorial allocation problem is \emph{player symmetric} if the feasibility constraints do not depend on the labels of the players, and \emph{object symmetric} if they do not depend on the labels of the objects.  We say that a greedy algorithm is \emph{player symmetric} if its ranking function $r$ does not depend on its first parameter, and we say that it is \emph{object symmetric} if its ranking function $r$ does not distinguish between sets of the same cardinality in its second parameter.

\begin{lem}
\label{lem.greedy.single}
If $\alg$ is a player-symmetric greedy algorithm and a $c(n)$-approximation whenever all declarations are single-minded, then for any declaration profile $\decls$ and allocation profile $\mathbf{y} = y_1, \dotsc, y_n$, $\sum_{i \in [n]}\decli(\algi(\decls)) \geq \frac{1}{c(2n)} \sum_{i \in [n]} \criti(y_i, \declsmi)$
\end{lem}
\begin{proof}
We define $\decls'$ as in Lemma \ref{lem.greedy}.  We then define $\decls''$ by adding $n$ additional bidders, $1', \dotsc, n'$, where $\decli[i']''$ is the single-minded declaration for set $y_i$ at value $\criti(y_i,\declsmi)-\eps$.  Player symmetry implies that $\alg(\decls') = \alg(\decls'')$ (meaning that each additional player is allocated $\emptyset$).  Since we have $2n$ players, we conclude $SW(\alg(\decls^*), \decls^*) \geq \frac{1}{c} SW(\mathbf{y},\decls^*)$, yielding the desired result.
\end{proof}

Applying Lemma \ref{lem.greedy.single} in place of Lemma \ref{lem.greedy}, we can improve the statement of Theorem \ref{thm.pureNE}
so that the resulting prices of anarchy are improved from $c+1$ to $c$, whenever algorithm $\alg$ is a $c$-approximation, but a $(c-1)$-approximation when agents are single-minded, and $c$ is independent of $n$.  This is the case, for example, in the standard greedy algorithm applied to cardinality-restricted combinatorial auctions.

We next show that if $\alg$ is both player-symmetric and object-symmetric, then we can again improve our bounds without additional assumptions on the performance when agents are single-minded.

\begin{lem}
\label{lem.greedy.sym}
If $\alg$ is player-symmetric, object-symmetric, and 
a $c(n,m)$-approximation, then for any declaration profile $\decls$ and allocation profile $\mathbf{y} = y_1, \dotsc, y_n$, $$\sum_{i \in [n]}\decli(\algi(\decls)) \geq \frac{1}{c(2n,2m)} \sum_{i \in [n]} (\criti(y_i, \declsmi) + \decli(\algi(\decls))).$$
\end{lem}
\begin{proof}
Consider an auction with an additional copy of each player and each object; write $i'$ for the copy of agent $i$, and $M'$ for the additional objects.  The feasibility constraints for the new objects and agents are identical to those for the original objects and agents.  Then $\alg$ is a $c(2n,2m)$ approximation algorithm for this new problem instance.  

Choose any $\eps > 0$.  We define $\decls'$ as in Lemma \ref{lem.greedy}.  We then define $\decls''$ by setting $\decli'' = \decli'$ and $\decli[i']''$ to be the single-minded declaration for set $y_i$ at value $\criti(y_i,\declsmi)-\eps$.  
Finally, define $\decls'''$ by additionally adding a bid for the second copy of set $\algi(\decls)$ by agent $i$ for value $\decli(\algi(\decls)) - \eps$.  We then have $\alg(\decls''') = \alg(\decls)$, but an alternative allocation gives $y_i$ to each player $i'$, and the second copy of $\algi(\decls)$ to agent $i$.  The result then follows since $\alg$ is a $c(2n,2m)$ approximation.
\end{proof}

Applying Lemma \ref{lem.greedy.sym} in place of Lemma \ref{lem.greedy}, we can improve the statement of Theorem \ref{thm.pureNE}
so that the resulting price of anarchy is improved from $c+1$ to $c$ whenever the algorithm is invariant with respect to the labels of the objects and bidders, and $c$ does not depend on $n$ or $m$.

We now give an example to show that these improved bounds are tight, even for pure Nash equilibria, for both $\calm_1(\alg)$ and $\calm_{crit}(\alg)$.  That is, a pure Nash equilibrium can have approximation ratio as high as $c$ for a $c$-approximate algorithm, where $c$ is a constant, even if the algorithm is $(c-1)$-approximate when we assume that all bidders are single-minded.

\begin{example}
Consider a combinatorial auction with the additional requirement that each bidder can be given at most 2 objects.  The standard greedy algorithm that allocates in order of value is a 3 approximation.  This algorithm and problem are player and object symmetric, and furthermore this algorithm is a 2 approximation when agents are single-minded.

Consider the following valuation profile.  There are 3 bidders and 3 objects, say $\{a,b,c\}$.  Choose arbitrarily small $\eps > 0$; the valuations of the players are as in the following table.

\begin{center}
\begin{tabular}{c|c|c}
player & set & value \\
\hline
$1$ & $\{a,b\}$ & $1+3\eps$ \\
$1$ & $\{c\}$ & $1$ \\
$2$ & $\{a\}$ & $1$ \\
$2$ & $\{b,c\}$ & $1+\eps$ \\
$3$ & $\{b\}$ & $1$ \\
\end{tabular}
\end{center}

The optimal solution gives each player their desired singleton at a value of $1$, for a total welfare of $3$.  However, one pure nash equilibrium has each player bid truthfully, except having player $1$ reduce his declared value for $\{a,b\}$ to the smallest value at which he will win it.  This gives a total welfare of $1+3\eps$.  So, for both $\calm_1(\alg)$ and $\calm_{crit}(\alg)$, the price of anarchy is at least 3 in both pure and mixed strategies.
\end{example}


} 

\section{Combining Mechanisms}
\label{sec.combine}



%

A standard technique in the design of allocation rules is to consider both a greedy rule that favours allocation of small sets, and a simple rule that allocates all objects to a single bidder, and apply whichever solution obtains the better result \cite{BGN-03,BKV-05,Halldorsson00,MN-08}.  When bidders are single-minded, such a combination rule will be incentive-compatible \cite{MN-08}.  We would like to extend our results to cover rules of this form, but 
the price of anarchy for such a rule (with either the first-price or critical-price payment scheme) may be much worse than its combinatorial approximation ratio.  Consider the following example.

\begin{example}
\label{ex.combine}
Consider the combinatorial auction problem.  Suppose $\alg$ is the non-adaptive greedy algorithm with priority rule $r(i,S,v) = v$ if $|S| \leq \sqrt{m}$, and $r(i,S,v) = 0$ otherwise.  Let $\alg'$ be the non-adaptive greedy algorithm with priority rule $r(i,S,v) = v$ if $S = M$, and 
$r(i,S,v) = 0$ otherwise.  Then $\alg'$ simply allocates the set of all objects to the player that declares the highest value for it.  Let $\alg_{max}$ be the allocation rule that applies whichever of $\alg$ or $\alg'$ obtains the better result; that is, on input $\decls$, $\alg_{max}$ returns $\alg(\decls)$ if $SW(\alg(\decls),\decls) > SW(\alg'(\decls),\decls)$, otherwise returns $\alg'(\decls)$.  It is known that $\alg_{max}$ is a $O(\sqrt{m})$ approximate algorithm \cite{MN-08}.

Our instance of the CA problem is the following.  We have $n = m \geq 2$, say with $M = \{a_1, \dotsc, a_m\}$.  Choose $\eps > 0$ arbitrarily small.  For each $i$, the private type of agent $i$, $\typei$, is the pointwise maximum of two single-minded valuation functions: one for set $\{a_i\}$ at value $1$, and the other for set $M$ at value $1 + \eps$.  An optimal allocation profile for $\types$ would assign $\{a_i\}$ to each agent $i$, for a total welfare of $m$.  

We construct a declaration profile as follows.  For each $i$, $\decli$ is the single-minded valuation function for set $M$ at value $1 + \eps$.  On input $\decls$, $\alg_{max}$ will assign $M$ to some agent, for a total welfare of $1+\eps$.  Also, $\decls$ is a pure Nash equilibrium for $\calm_1(\alg_{max})$ and $\calm_{crit}(\alg_{max})$: all agents receive a utility of $0$, and there is no way for any single agent to obtain positive utility by deviating from $\decls$.  Taking $\eps \to 0$, we conclude that the price of anarchy for any of these mechanisms is $\Om(m)$, which does not match the combinatorial $O(\sqrt{m})$ approximation ratio of $\alg_{max}$.
\end{example}


In light of the example above, one must consider different ways to combine two allocation rules.  For instance, one could implement each rule as a separate mechanism, then randomly choose between the two with equal probability.  
Such an approach can work well when the two allocation rules work with disjoint parts of the declaration space, so that agents can optimize their bids separately for each mechanism.

\comment{
\begin{thm}
\label{thm.comb}
Suppose that $\alg,\alg'$ are as described above and $SW(\alg,\decls) + SW(\alg',\decls) \geq \frac{1}{c}SW_{opt}(\decls)$ for every declaration profile $\decls$.  Then $\calm_1(\alg,\alg')$ obtains a $2(c+O(c^2/e^c))$ approximation at every mixed BNE. 
\end{thm}
\begin{proof}
Since the portions of agent declarations relevant to $\calm_1(\alg)$ and $\calm_1(\alg')$ are independent, an agent will optimize his declaration for $\calm_1(\alg,\alg')$ by optimizing for $\calm_1(\alg)$ and $\calm_1(\alg')$ separately.  Theorem \ref{thm.mixedNE} then immediately implies the desired result, as an equilibrium for $\calm_1(\alg,\alg')$ must be a combination of an equilibrium for $\calm_1(\alg)$ and an equilibrium for $\calm_1(\alg')$.  
\end{proof}

} 


\end{document}